\documentclass[11pt,letterpaper]{article}

\usepackage[margin=1in]{geometry}
\usepackage{theorem,latexsym,graphicx,amssymb}
\usepackage{amsmath,enumerate}
\usepackage{bm,bbm}
\usepackage{wrapfig}
\usepackage{subfigure}
\usepackage{float}
\usepackage{epsfig}
\usepackage{xspace}
\usepackage{paralist}
\usepackage{enumerate}
\usepackage{cases}
\usepackage{caption}
\usepackage{algpseudocode}
\usepackage[ruled,linesnumbered,vlined]{algorithm2e}

\usepackage{xcolor}
\usepackage{complexity}
\usepackage{multicol}
\usepackage{graphicx}
\usepackage{thm-restate}

\newenvironment{proof}{{\bf Proof:  }}{\hfill\rule{2mm}{2mm}\vspace*{5pt}}

\numberwithin{figure}{section}
\numberwithin{equation}{section}

\newtheorem{theorem}{Theorem}[section]
\newtheorem{lemma}{Lemma}[section]
\newtheorem{claim}{Claim}[section]

\newcommand{\bbE}{{\mathbb{E}}}
\newcommand{\bbR}{\mathbb{R}}
\newcommand{\ALG}{\textsf{ALG}}
\newcommand{\OPT}{\textsf{OPT}}

\usepackage{url}

\usepackage[colorlinks=true, allcolors=blue]{hyperref}

\title{Prophet Inequality on I.I.D. Distributions:
Beating \texorpdfstring{$1-1/e$}{} with a Single Query}

\author{
	Bo Li\thanks{The Hong Kong Polytechnic University. {\texttt{comp-bo.li@polyu.edu.hk}}}
	\and Xiaowei Wu \thanks{University of Macau. {\texttt{xiaoweiwu@um.edu.mo}}} 
	\and Yutong Wu \thanks{The University of Texas at Austin. {\texttt{yutong.wu@utexas.edu}}}
}
\date{}

\begin{document}

\begin{titlepage}
	\thispagestyle{empty}
	\maketitle
	
	\begin{abstract}
		In this work, we study the single-choice prophet inequality problem, where a gambler faces a sequence of~$n$ online i.i.d. random variables drawn from an unknown distribution. When a variable reveals its value, the gambler needs to decide irrevocably whether or not to accept the value.  The goal is to maximize the competitive ratio between the expected gain of the gambler and that of the maximum variable.  It is shown by Correa et al. \cite{ec/CorreaDFS19} that when the distribution is unknown or only $o(n)$ uniform samples from the distribution are given, the best an algorithm can do is $1/e$-competitive. In contrast, when the distribution is known~\cite{sigecom/CorreaFHOV17} or $\Omega(n)$ uniform samples are given~\cite{innovations/RubinsteinWW20}, the optimal competitive ratio of 0.7451 can be achieved. In this paper, we study a new model in which the algorithm has access to an oracle that answers quantile queries about the distribution and investigate to what extent we can use a small number of queries to achieve good competitive ratios. We first use the answers from the queries to implement the threshold-based algorithms and show that with two thresholds our algorithm achieves a competitive ratio of $0.6786$. Motivated by the two-threshold algorithm, we design the observe-and-accept algorithm that requires only a single query. 
    This algorithm sets a threshold in the first phase by making a query and uses the maximum realization from the first phase as the threshold for the second phase. It can be viewed as a natural combination of the single-threshold algorithm and the algorithm for the secretary problem. By properly choosing the quantile to query and the break-point between the two phases, we achieve a competitive ratio of $0.6718$, beating the benchmark of $1-1/e$.
	\end{abstract}
\end{titlepage}

\section{Introduction}

Prophet inequality is one the of most widely studied problems in optimal stopping theory. 
In the classic setting, a gambler faces a sequence of online non-negative random variables $x_1,\ldots,x_n$ independently drawn from distributions $D_1,\ldots,D_n$. 
The realizations of these variables represent potential rewards given to the gambler.
After observing the realization of a variable, the gambler needs to make an irrevocable decision on whether to accept the reward and leave the game, or reject it and continue with the next variable.
All rejected values cannot be collected anymore.
A prophet knows all the realizations beforehand and thus can always select the highest reward. 
The expected reward of the prophet is then defined as $\OPT=\bbE_{x_1,\ldots,x_n} [ \max_i\{x_i\} ]$.
The gambler's goal is to choose a stopping rule so that her expected reward, which we refer to as $\ALG$, is as close to that of the prophet as possible.
The performance is measured by the \emph{competitive ratio}, which is defined as the worst case of $\ALG/\OPT$ over all possible distributions.
The prophet inequality problem has received increasingly more attention due to its connections with Bayesian mechanism design,  especially the posted price mechanism, where buyers arrive online and the seller provides each buyer with a take-it-or-leave-it offer \cite{aaai/HajiaghayiKS07,stoc/ChawlaHMS10,journals/sigecom/Lucier17}.

It is proved in \cite{krengel1977semiamarts,krengel1978semiamarts} and \cite{samuel1984comparison} that there exists a stopping rule such that the gambler's expected reward is at least $0.5\cdot\OPT$, and
this is the best possible competitive ratio when the distributions are distinct and the order of variables is adversarial. 
However, when the distributions are identical, i.e., $D_i = D$ for all $i$, much better competitive ratios can be achieved. 
Hill and Kertz\cite{hill1982comparisons} proved that there is an algorithm that achieves a competitive ratio of $1-1/e\approx 0.6321$ while Kertz \cite{kertz1986stop} showed that no algorithms can do better than $0.7451$. 
The best-known competitive ratio remained $1-1/e$ until Abolhassani et al. \cite{stoc/AbolhassaniEEHK17} improved it to $0.738$.
Later, Correa et al. \cite{sigecom/CorreaFHOV17} proposed a {\em blind quantile strategy} with a tight competitive ratio of $0.7451$.
Informally speaking, a blind quantile strategy defines a sequence of increasing probabilities $p_1 < \cdots < p_n$ (which depend on $n$ and the distribution $D$, but are independent of the realizations), so that the acceptance probability for each variable $x_i$ equals $p_i$.
This is done by setting the $i$-th threshold $\theta_i$ such that $\Pr_{x \sim D}[x > \theta_i] = p_i$ and accepting the $i$-th variable if $x_i>\theta_i$.
However, computing the probabilities $p_1,\ldots,p_n$ that give the optimal ratio is highly non-trivial and requires complete information of the distribution $D$.
As pointed out in a survey paper \cite{sigecom/CorreaFHOV18}, an interesting research problem is to investigate the amount of information needed to achieve a good competitive ratio: 
\begin{quote}
  {\em How much knowledge of the distributions is required to achieve a good competitive ratio?}  
\end{quote}

To follow up this question, a line of recent works studies the prophet inequality problem on unknown distributions, most of which focus on the setting where uniform random samples from the distribution are given~\cite{geb/AzarKW19,ec/CorreaDFS19,innovations/RubinsteinWW20}.
Particularly, it is shown in \cite{innovations/RubinsteinWW20} that $O(n)$ samples are sufficient to achieve a competitive ratio arbitrarily close to $0.7451$, and in \cite{ec/CorreaDFS19} that no algorithm can use $o(n)$ samples to ensure a result better than $1/e(\approx 0.368)$-competitive.
However, in some applications obtaining random samples can be difficult.
Consider the following analogy of prophet inequality: a seller sells a product with a posted price to a sequence of buyers whose values are independently drawn from a distribution. 
A buyer purchases the product if and only if the price is lower than her private value.
In practice, the seller often does not have an accurate knowledge of the value distribution or random samples from this distribution. 
The seller can, however, observe the selling records to draw a relationship between a historic price and the fraction of buyers that have accepted the price, which is precisely a {\em quantile} of the value distribution.  
Since one price is usually offered over months and even years, which means that the number of historic prices is often small, such quantile information is also scarce. 
Therefore, a natural question is how a seller can take advantage of the limited quantile information to set up a price that maximizes their expected revenue. 

Motivated by the above discussion, in this work, we propose a new information model in which the algorithm has access to an oracle that answers quantile queries on the unknown distribution. 
Formally, given a quantile $q\in [0,1]$, the oracle returns a value $v$ such that $\Pr_{x\sim D}[x>v] = q$.
In other words, the oracle answers queries in the form of ``\emph{what price should be set to generate a success probability of $q$ when buyers' values are independently drawn from the distribution $D$?''}  
We aim to understand how much reward can be guaranteed using only a few queries and construct such algorithms.
Our model aligns well with the study of query complexity in Bayesian auction design -- a closely related area to prophet inequality.
In particular, \cite{conf/sigecom/AllouahBB21} considered the revenue maximization problem with a single value-quantile pair, and showed that, for example, if we are given the value at quantile $0.5$, we are able to guarantee $85\%$ of the optimal revenue when there is a single buyer and the full distribution is known. 
As far as we know, the query model has not been studied in the context of prophet inequality. 

Besides algorithms that use little knowledge, \emph{simple} algorithms are often much preferred due to their easy implementation in real-world scenarios.
For example, in the {\em secretary problem} \cite{gilbert1966recognizing}, a widely used simple strategy is to discard the first $1/e$ fraction of the variables and select the first variable whose realization is greater than all the discarded ones.
We call this strategy the  {\em observe-then-accept} algorithm.
It was proved in \cite{ec/CorreaDFS19} that this algorithm is $1/e$-competitive and is optimal when the distribution is unknown or $o(n)$ uniform samples are given. 
For the prophet inequality problem, the {\em single-threshold} algorithm 
 simply  
uses a fixed threshold for all variables and accepts the first variable whose realization exceeds the threshold.
It is shown in \cite{conf/soda/EhsaniHKS18} that by setting an appropriate threshold, the single-threshold algorithm ensures a competitive ratio of $1-1/e$, which is the best possible ratio using just one threshold\footnote{We remark that their result holds for the more general setting of the prophet secretary problem. When the distributions are discrete, randomization is required to achieve the $1-1/e$ competitive ratio.}.
In this work, we propose two simple algorithms for the query model that achieve good competitive ratios.
We first use quantile queries to obtain thresholds and
extend the single-threshold algorithm to a simple {\em multi-threshold algorithm}. 
The proposed multi-threshold algorithm naturally does better than $(1-1/e)$-competitive and we give a general framework to derive its competitive ratio.  
Besides using multiple thresholds, we prove that beating $1-1/e$ can also be done by using a single query via our {\em observe-and-accept algorithm}.   
Indeed, an interesting take-home message from this work is that a simple 
combination of the observe-then-accept algorithm and the single-threshold algorithm performs strictly better than $(1-1/e)$-competitive.

A review of more related works can be found in Appendix \ref{sec:related-works}.

\subsection{Our Contribution}

We study the prophet inequality problem with unknown i.i.d. distributions and propose simple algorithms that use a few queries to achieve competitive ratios strictly better than $1-1/e$. 
Our contribution is summarized as follows.

\paragraph{Two-threshold Algorithm.}
Our first result is to show that a simple two-threshold algorithm achieves a competitive ratio strictly better than $1-1/e$, the optimal ratio of single-threshold algorithms \cite{conf/soda/EhsaniHKS18}.
Specifically, the algorithm fixes two quantiles $q_1, q_2$ (which define two thresholds $\theta_1,\theta_2$) and divides the time horizon $\{1,\ldots,n\}$ into two phases.
The threshold $\theta_1$ (resp. $\theta_2$) is then used to decide the acceptance of variables in the first (resp. second) phase.
Our main technical contribution is a careful formulation of the lower bounds on $\ALG$ and upper bounds on $\OPT$ in terms of the parameters, which then become the objective and constraints for a minimization linear program (LP) whose optimal solution provides a lower bound on the competitive ratio.
By fixing appropriate parameters and solving the LP, we show that the competitive ratio of the algorithm is at least $0.6786$.\footnote{Our analysis extends naturally to the multi-threshold algorithm. In Appendix~\ref{sec:3-or-more-queries}, we show that better competitive ratios can be achieved with more queries and thresholds.}
We also complement our positive result with an upper bound on the competitive ratio: we show that any two-threshold algorithm cannot do better than $0.7081$-competitive.

\paragraph{Observe-and-accept Algorithms.}
Our second result, which is the main contribution of this work, is showing an algorithm using a single query that can do strictly better than $(1-1/e)$-competitive.
Most interestingly, we show that a natural combination of the single-threshold algorithm and the observe-then-accept algorithm 
gives one such algorithm, with a competitive ratio of at least $0.6718$.
The algorithm works in a similar way as the two-threshold algorithm in that it partitions the time horizon into two phases and uses the query result from the oracle as a threshold to decide the acceptance of variables in the first phase.
In the second phase, however, instead of making a second query, the algorithm uses the maximum realization of variables in the first phase as the threshold, given that no variables are accepted during the first phase.
We refer to this algorithm as \emph{observe-and-accept}, as it is similar to the algorithm for the secretary problem but in addition, allows acceptance of variables in the observation phase.
As before, we lower bound the competitive ratio by bounding $\ALG$ from below and  $\OPT$ from above to formulate an LP.
However, different from the analysis for the two-threshold algorithm, the main technical challenge is to lower bound $\ALG$ when the second threshold is a random variable that depends on the realizations of variables in the first phase, i.e., the algorithm is not a blind strategy algorithm.
To overcome this main difficulty, we extend the  LP by further dividing the domain of the second threshold into continuous segments, whose boundary points become additional variables of the LP.
Actually, given any quantile, we can compute the optimal separation point of the two phases and obtain the corresponding optimal reward by running the observe-and-accept algorithm.
Via carefully designing the query and the separation point of phases, our main result is a $0.6718$-competitive observe-and-accept algorithm.
To complement this result, we provide hard instances for which the observe-and-accept algorithm performs no better than 0.6921-competitive.

\paragraph{Robustness of the Quantile to Query.}
If we are not allowed to choose an arbitrary quantile but are instead  given a specific one, we show in Figure~\ref{fig:c_ratio}
some examples when the given quantile is $1-c/n$ for some constant $c$; a detailed setup will be introduced in Section \ref{sec:single:FactorRevealingLP}.
Recall that in our motivating example, the thresholds are the posted prices of an item for a sequence of $n$ buyers.
We can view $c/n$ as $c$ out of $n$ buyers have private values higher than the price.
As $c$ increases, the corresponding price decreases.
We can observe that as long as $c$ is between $0.5$ and $1.0$,
the expected reward from observe-and-accept is always greater than ($1-1/e$). 
When $c$ is between $0.1$ and $2.5$, the quantile information is still beneficial in the sense that the observe-and-accept algorithm has a competitive ratio better than $1/e$ -- the best we can do when no information is given.

\begin{figure}[ht]
    \centering
    \includegraphics[width=0.6\textwidth]{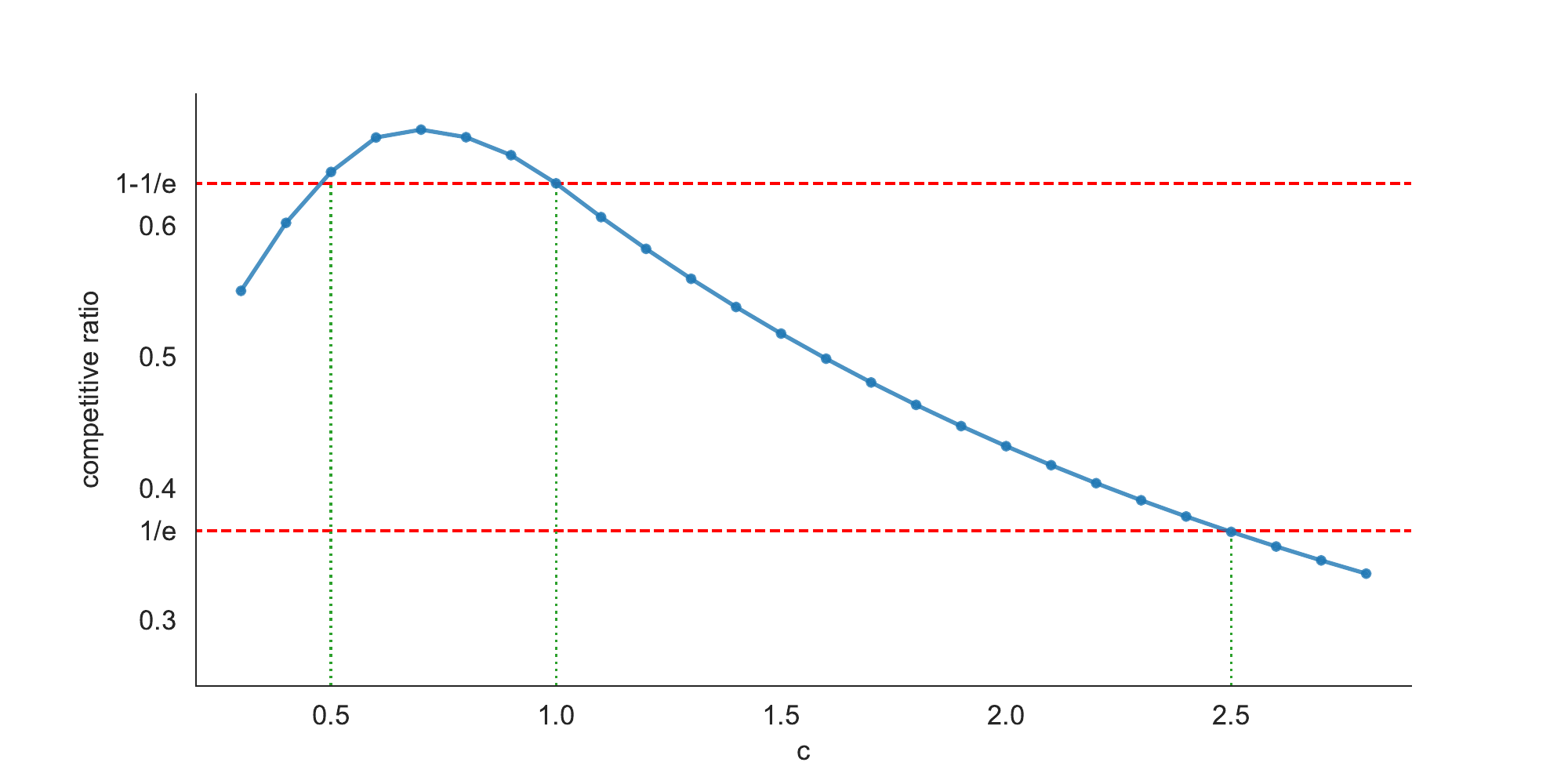}
    \caption{Competitive ratios for different quantiles at $1-c/n$.}
    \label{fig:c_ratio}
\end{figure}

\paragraph{Bayesian Auctions with Queries.}
Besides prophet inequality, queries have also been studied in Bayesian auctions when the seller does not know the prior distributions.
Similar to our motivation,  Allouah et al. \cite{conf/sigecom/AllouahBB21} considered the single-item single-buyer revenue maximization problem with one value-quantile pair.
They showed that, for example, 
if we are given the value at quantile $0.01$, 
$51\%$ of the optimal revenue is guaranteed when the full distribution is known,
and if we are given the value at quantile $0.5$, 
the guarantee is improved to $85\%$.
Chen et al.\cite{journals/ai/ChenLLL22} studied a more general setting of multi-item multi-buyer auctions and gave an almost tight number of queries to ensure a constant fraction of the optimal revenue. 
Hu et al. \cite{conf/sigecom/Hu0SW21} explored the middle ground between random samples and queries by allowing the algorithm to specify a quantile interval and sample from the prior distribution restricted to the interval.
In a different flavor from the quantile query oracle we study in this work, Leme et al. \cite{corr/abs-2111-03158} defined {\em pricing queries} in that when the oracle is given a price, it generates a random sample and returns the sign of whether the sample is above the given price or not. 
Both query-related settings are of independent interest to be studied in prophet inequality.

\subsection{Organization of the Paper}
The rest of this paper is organized as follows.
We introduce the necessary notations and some basic properties in Section~\ref{sec:prelim}.
In Section~\ref{sec:warmup}, we review the analysis for proving the $1-1/e$ competitive ratio of the single-threshold algorithm (Section~\ref{ssec:single-thres}), introduce the two-threshold algorithm (Section~\ref{ssec:2-queries}), and show that it achieves a competitive ratio of at least $0.6786$ using a factor revealing LP (Section~\ref{ssec:comp-ratio-2-thresholds}).
Built on top of the algorithm and analysis from Section~\ref{sec:warmup},  we propose in Section~\ref{sec:observe-and-accept} the observe-and-accept algorithm that uses a single query to achieve a competitive ratio of at least $0.6718$.
Finally, we conclude our results and propose some open problems in Section~\ref{sec:open-problem}.

\section{Preliminaries}
\label{sec:prelim}

In our problem, there are $n$ non-negative random variables $\{x_1,\ldots,x_n\}$ that are independently and identically drawn from 
an unknown distribution $D$. 
We slightly abuse the notation and use $x_i's$ to denote both random variables and their realizations.
The realizations of the variables are revealed to the algorithm one by one.
When {the value of a variable} is revealed, the algorithm needs to make an irrevocable decision on whether to \emph{accept} the value and stop the algorithm, or \emph{reject} this variable and proceed to the next one.
The objective is to pick a realization that is as large as possible.
For any integer $t$, we use $[t]$ to denote $\{1,2,\ldots,t\}$.
Throughout this paper, we use $\tau \in [n+1]$ to denote the stopping time of the algorithm.
If the algorithm accepts variable $x_i$, then $\tau=i$; if the algorithm does not accept any variable, we set $\tau = n+1$. 
We let $\ALG$ denote the expected gain of the algorithm.
We compare the performance of the algorithm with a prophet, who sees all the realizations of variables before making a decision.
In other words, the benchmark is
\begin{equation*}
     \textstyle \OPT=\bbE_{(x_1,\ldots,x_n) \sim D^n}\left[\max_{i\in[n]} \{x_i\} \right].
\end{equation*}

The \emph{competitive ratio} of the algorithm is defined as the minimum of ${\textsf{ALG}}/{\textsf{OPT}}$ over all the distributions.
In this work, we consider the setting where the distribution is unknown but the algorithm can make queries on the cumulative distribution function (CDF) of the distribution.

\paragraph{Query.}
We denote the CDF of the distribution by $F(\theta)=\Pr_{x\sim D}[x < \theta]$ and the survival function by $G(\theta)=\Pr_{x\sim D}[x \ge \theta]$.
Note that for every distribution, the CDF $F(\cdot): \mathbb{R} \rightarrow [0,1]$ is non-decreasing.
In this work, we assume the distribution does not contain mass points. 
That is, $\Pr_{x\sim D}(x = t)=0$ for all $t$.
When the distribution $D$ is known, this assumption is w.l.o.g. because we can add an arbitrarily small random noise to the distribution.
On the other hand, when the distribution $D$ is unknown, the assumption is apparent\footnote{If there are mass points in the distribution, the query model will not be well-defined since for some quantile query there might not exist any value with this particular quantile. We will then have to adopt some fixed rule to return a nearby value as in Equation \eqref{eq:query:round}. In this way, the distributions would be distorted, and we demonstrate in Appendix~\ref{appendix:justification} that this distortion can be arbitrarily large in some cases.} in order to achieve a good competitive ratio with few queries.
For distributions without mass points, the CDF $F(\cdot)$ is strictly increasing.
Given any quantile query at $q\in [0,1]$, the distribution quantile oracle returns the corresponding value $v(q)$ such that
\begin{equation}\label{eq:query:round}
    v(q) := \inf_{\theta\in \bbR} \left\{ F(\theta) \ge q \right\} = F^{-1}(q).
\end{equation}
Therefore, if $v(q)$ is used as a threshold to decide the acceptances of variables,  each variable is accepted with probability $1-q$. 
We note that $v(\cdot)$ is also strictly increasing.

For any number $t$, we use $(t)^+$ to denote $\max\{ t,0 \}$.
Since we often make queries at $q=1-c/n$ for some constant $c$ (i.e., probability to accept is $c/n$), for convenience we define 
\begin{equation*}
    \Delta(c):= n\cdot\bbE_{x\sim D}[(x-v(1-\tfrac{c}{n}))^+].
\end{equation*}

\paragraph{Zero Queries.}
We briefly discuss the case when we cannot make any query to the distribution oracle. 
The problem then degenerates to the model where the algorithm has no information on the distribution \cite{ec/CorreaDFS19}.
The classic algorithm for the secretary problem \cite{gilbert1966recognizing} achieves the optimal competitive ratio of $1/e$, which also applies to our setting.
Formally, we observe the first $n/e$ variables without accepting any of them; for the following variables, we accept the first variable whose realization exceeds the maximum of the previous ones. 
Clearly, the algorithm retains the $1/e$ competitive ratio since it can be equivalently described as independently drawing $n$ realizations from the distribution and giving them a random arrival order.  
Interestingly, Correa et al. \cite{ec/CorreaDFS19} showed that this is the best possible ratio for all algorithms that have no information about the distribution. 

\begin{theorem}[Zero-Query]
There is a $1/e$-competitive algorithm that makes no queries for the prophet inequality problem on unknown i.i.d. distributions.
Moreover, the competitive ratio is optimal.
\end{theorem}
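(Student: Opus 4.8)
The plan is to prove the two halves separately. For the positive direction I would analyze exactly the observe-then-accept rule described in the text: fix a cutoff $k=\lceil n/e\rceil$, observe $x_1,\dots,x_k$ without accepting, set $M=\max\{x_1,\dots,x_k\}$, and then accept the first later variable whose realization exceeds $M$ (take nothing if none does). The key point is that this rule reads only the \emph{relative order} of the realizations. Since the $x_i$ are i.i.d.\ from a distribution with no mass points, conditioned on the unordered multiset $S=\{v^{(1)}>v^{(2)}>\dots>v^{(n)}\}$ of realized values the values are a.s.\ distinct and the arrival order is a uniformly random permutation. Under a uniform permutation the rule stops at the position of $v^{(1)}$ with probability $p_{n,k}:=\tfrac{k}{n}\sum_{i=k}^{n-1}\tfrac1i$ — the classical secretary success probability — which depends on $n,k$ but not on $S$, and for which $p_{n,k}\to 1/e$ as $n\to\infty$ and $\sup_k p_{n,k}\ge 1/e$ for every $n$ \cite{gilbert1966recognizing}. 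Since all values are nonnegative and the rule collects exactly $v^{(1)}$ on the winning event, $\bbE[\ALG\mid S]\ge p_{n,k}\cdot v^{(1)}=p_{n,k}\cdot\max(S)$; taking expectation over $S$ gives $\ALG\ge p_{n,k}\cdot\bbE[\max_i x_i]\ge \tfrac1e\OPT$. The rule makes no use of $D$, so it is admissible in the zero-query model.

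For the optimality half I would invoke \cite{ec/CorreaDFS19} and sketch the Yao-style argument behind it. The idea is to exhibit, for every $\epsilon>0$, a prior $\mathcal{P}$ over distributions $D$ such that every \emph{deterministic} zero-query algorithm earns at most $(1/e+\epsilon)\cdot\bbE_{D\sim\mathcal{P}}[\OPT(D)]$ in expectation over $\mathcal P$; Yao's principle then yields, for every (possibly randomized) zero-query algorithm, a single distribution on which it is at most $(1/e+\epsilon)$-competitive. The prior is built from a multi-scale family: values lie on a geometric ladder $1,B,B^2,\dots,B^{L}$ with $B$ and $L$ large, and the tail probabilities are tuned so that (i) with high probability $\max_i x_i$ equals the single largest realization and $\bbE_{D\sim\mathcal P}[\OPT]$ is essentially the expected value of that top realization, and (ii) because the algorithm knows neither the scale $B$ nor the number of levels $L$, the numeric realizations reveal nothing beyond their relative ranks. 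Under (ii), any zero-query algorithm is effectively a stopping rule on a uniformly random ordering, which must stop at the running record equal to the global maximum in order to collect (essentially) all of $\OPT$ — i.e.\ it faces the secretary problem, whose optimal success probability is $1/e$ — and (i) converts that into the claimed bound on the competitive ratio.

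The main obstacle is this second half: one must engineer the hard family so that the numeric values genuinely carry no usable information (every ``record-so-far'' pattern must remain consistent with the global maximum being either already seen or still to come, with the relevant posterior probabilities matching the secretary problem), while simultaneously keeping $\OPT$ concentrated on the top realization so that failing to catch the maximum costs roughly a $(1-1/e)$ fraction of $\OPT$. Balancing these two requirements is exactly what the geometric ladder and the choice of tail exponents in \cite{ec/CorreaDFS19} accomplish, and verifying the $o(1)$ error terms — that $\bbE[\max]$ is dominated by its leading term, and that residual correlations between arrival order and realized scale vanish as $B,L\to\infty$ — is the technical heart of that direction. The first half, by contrast, is essentially immediate once the conditioning-on-the-multiset reduction is in place, the only subtlety being the absence of ties, which is guaranteed by the standing no-mass-point assumption.
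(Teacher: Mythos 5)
Your proposal is correct and follows essentially the same route as the paper: the paper also proves the positive half by running the classic observe-then-accept secretary rule and noting that i.i.d.\ draws with no mass points are equivalent to a random-order arrival of distinct values (so the rule catches the maximum with probability at least $1/e$, giving $\ALG\ge \tfrac1e\,\OPT$), and it likewise attributes the matching $1/e$ impossibility for zero-information algorithms to Correa et al.\ \cite{ec/CorreaDFS19}. Your added detail (conditioning on the realized multiset, and the sketch of the hard-instance construction behind the lower bound) is consistent with, and merely elaborates on, the paper's argument.
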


\section{Warm-up Analysis and Techniques} \label{sec:warmup}

We first provide a warm-up analysis for single-threshold algorithms in Section~\ref{ssec:single-thres}, which is the least a single query can do by using the query result as a threshold to accept the first variable whose realization exceeds the threshold.
We show that the competitive ratio of the algorithm is $1-1/e$.
We then show in Section~\ref{ssec:2-queries} that by using two thresholds that are obtained by making two queries, we can do strictly better than $1-1/e$.
Our main technique is based on the theory of factor revealing LPs.
Both the algorithm and the technique from Section~\ref{ssec:2-queries} will be important building blocks towards proving our main result in Section~\ref{sec:observe-and-accept}.

\subsection{The Single-Threshold Algorithm}\label{ssec:single-thres}

Single-threshold algorithms have been investigated by  \cite{conf/soda/EhsaniHKS18} and  \cite{soda/CorreaSZ19}, where a $(1-1/e)$ competitive ratio was proved. 
Moreover, the competitive ratio is optimal for single-threshold algorithms~\cite{conf/soda/EhsaniHKS18}.
For completeness, we give the formal proof here, which will also serve as a warm-up for our proofs in later sections. 

\begin{theorem}\label{thm:warmup:singleT}
There exists a $(1-1/e)$-competitive algorithm for the prophet inequality problem with unknown i.i.d. distributions that makes a single query to the distribution oracle.
\end{theorem}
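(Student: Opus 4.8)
The plan is to exhibit a single-threshold algorithm: make one query at a well-chosen quantile $q^*$, obtain the threshold $\theta = v(q^*)$, and accept the first variable whose realization exceeds $\theta$. I would parametrize the query as $q^* = 1 - c/n$ for a constant $c>0$ to be optimized at the end, so each variable is individually accepted with probability $c/n$. The analysis proceeds by bounding $\ALG$ from below and $\OPT$ from above, each in terms of $c$ and the quantity $\Delta(c) = n\cdot\bbE_{x\sim D}[(x-\theta)^+]$ defined in the preliminaries.

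First I would lower-bound $\ALG$. The algorithm collects reward only when some variable exceeds $\theta$. Conditioned on reaching variable $i$ (i.e., none of $x_1,\dots,x_{i-1}$ exceeded $\theta$, which happens with probability $(1-c/n)^{i-1}$), the expected contribution of $x_i$ is $\bbE[x_i \cdot \mathbbm{1}[x_i > \theta]] \ge \bbE[(x_i - \theta)^+] = \Delta(c)/n$; one can also keep the $\theta\cdot\Pr[x_i>\theta]$ term if it helps, but the cleaner route is $\bbE[x_i\mathbbm{1}[x_i>\theta]] \ge \Delta(c)/n$. Summing over $i\in[n]$ gives
\begin{equation*}
  \ALG \ge \frac{\Delta(c)}{n}\sum_{i=1}^{n}\left(1-\frac{c}{n}\right)^{i-1} = \frac{\Delta(c)}{n}\cdot\frac{1-(1-c/n)^n}{c/n} \ge \Delta(c)\cdot\frac{1-e^{-c}}{c},
\end{equation*}
using $(1-c/n)^n \le e^{-c}$. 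Next I would upper-bound $\OPT$. Writing $\OPT = \bbE[\max_i x_i]$, I split at the threshold: $\max_i x_i \le \theta + (\max_i x_i - \theta)^+ \le \theta + \sum_i (x_i-\theta)^+$, so $\OPT \le \theta + n\bbE[(x-\theta)^+] = \theta + \Delta(c)$. To relate $\theta$ to the other quantities, I would use a standard fact about i.i.d. maxima: since each variable exceeds $\theta$ with probability $c/n$, the probability that the maximum exceeds $\theta$ is $1-(1-c/n)^n \ge 1-e^{-c}$, and more usefully, $\theta$ itself is bounded by $\OPT$ times a factor — concretely $\Pr[\max_i x_i \ge \theta] = 1-(1-c/n)^n$, combined with $\OPT \ge \theta\cdot\Pr[\max_i x_i\ge\theta]$ when $\theta$ is thought of as a candidate threshold, giving $\theta \le \OPT/(1-(1-c/n)^n) \le \OPT/(1-e^{-c})$ — wait, that inequality goes the wrong way for $n$ finite, so I would instead use $\OPT \ge \theta(1-(1-c/n)^n)$ directly and feed both bounds into a small optimization.

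Assembling: normalize $\OPT = 1$. Then $\theta + \Delta(c) \ge 1$ and $\theta \le 1/(1-(1-c/n)^n)$. Since $\ALG \ge \Delta(c)(1-e^{-c})/c$ is increasing in $\Delta(c)$ and $\Delta(c) \ge 1-\theta$, the worst case is $\Delta(c)$ as small as possible, i.e., $\Delta(c) = 1-\theta$ with $\theta$ as large as allowed. Taking $n\to\infty$ so $(1-c/n)^n \to e^{-c}$, the constraint becomes $\theta \le 1/(1-e^{-c})$; but also $\theta \le 1$ is needed for $\Delta(c)=1-\theta \ge 0$, and in fact the binding case is $\Delta(c)$ small, $\theta$ close to... here I would be careful: if $\theta$ could equal $1$ then $\Delta = 0$ and $\ALG = 0$, so a separate argument is needed showing the adversary cannot push $\theta$ that high while keeping $\OPT=1$. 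This is exactly the subtlety resolved by choosing $c$ so that the "point mass near $\theta$" and "$\Delta$-heavy tail" cases balance. The clean outcome is that for $n\to\infty$ and the optimal $c$, one gets $\ALG/\OPT \ge 1-1/e$; the classic choice turns out to be the $c$ solving $1-e^{-c} = $ the relevant expression, and the known answer is that $c$ can be taken so the ratio is exactly $1-1/e$.

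The main obstacle is handling the interplay between $\theta$ and $\Delta(c)$ correctly: the naive bounds $\ALG \ge \Delta(c)(1-e^{-c})/c$ and $\OPT \le \theta + \Delta(c)$ are not individually tight against the same instance, so one must either set up the two-variable LP in $(\theta,\Delta)$ with constraints $\theta+\Delta \ge \OPT$, $\theta \le \OPT/(1-e^{-c})$ (this last needs the argument that a threshold accepted with total probability $1-e^{-c}$ gives revenue $\ge \theta\cdot(1-e^{-c})$, bounding $\OPT$ from below hence $\theta$ from above only after noting $\ALG \ge \theta(1-e^{-c})$ too as an alternative lower bound on $\ALG$!). Indeed the right move is the two-pronged lower bound $\ALG \ge \max\{\theta(1-e^{-c}),\ \Delta(c)(1-e^{-c})/c\}$ — the first term from the probability of accepting anything times $\theta$, the second from the excess — against $\OPT \le \theta + \Delta(c)$; minimizing the ratio over $(\theta,\Delta) \ge 0$ and then maximizing over $c$ yields $1-1/e$ at the optimal $c$ (where $c$ makes $(1-e^{-c})/c$ and the crossover align, giving $c$ such that the value is $1-1/e$). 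Finite-$n$ corrections only improve things, so stating the theorem for all $n$ is immediate. I expect the bookkeeping in this final min-max to be the one place warranting care; everything else is the routine geometric-sum and truncation estimates sketched above.
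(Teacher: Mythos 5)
Your setup (query at $q=1-c/n$, bound $\OPT \le \theta + \Delta(c)$, geometric sums, $(1-c/n)^n \le e^{-c}$) matches the paper's, but the final assembly has a genuine quantitative gap. Your plan is to certify the ratio via the two-pronged bound $\ALG \ge \max\{\theta(1-e^{-c}),\ \Delta(c)(1-e^{-c})/c\}$ against $\OPT \le \theta+\Delta(c)$ and then optimize $c$. That relaxation does not give $1-1/e$: with $A=1-e^{-c}$ and $B=(1-e^{-c})/c$, the worst case over $(\theta,\Delta)\ge 0$ of $\max\{A\theta,B\Delta\}/(\theta+\Delta)$ is $AB/(A+B) = (1-e^{-c})/(1+c)$, whose maximum over $c$ is about $0.318$ (attained near $c\approx 1.1$), roughly half of $1-1/e$. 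Taking the \emph{maximum} of the two lower bounds throws away the fact that the algorithm collects both contributions simultaneously.

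The missing idea is the one you flagged and then discarded (``one can also keep the $\theta\cdot\Pr[x_i>\theta]$ term''): conditioned on stopping at $i$, the gain is $\bbE[x_i \mid x_i \ge \theta] = \theta + \bbE[(x_i-\theta)^+]/G(\theta)$, so the two terms add. Summing over $i$ gives
\begin{equation*}
\ALG \;=\; \bigl(1-(1-\tfrac{c}{n})^n\bigr)\,\theta \;+\; \frac{1-(1-\tfrac{c}{n})^n}{c}\,\Delta(c),
\end{equation*}
i.e.\ coefficients $1-(1-c/n)^n$ on $\theta$ and $(1-(1-c/n)^n)/c$ on $\Delta(c)$. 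Choosing $c=1$ makes both coefficients equal to $1-(1-1/n)^n \ge 1-1/e$, so $\ALG \ge (1-1/e)(\theta+\Delta(1)) \ge (1-1/e)\,\OPT$ directly, with no min--max over $(\theta,\Delta)$ needed and with finite $n$ handled by $(1-1/n)^n \le 1/e$. This additive bookkeeping (which is exactly the paper's proof) is what rescues your argument; the rest of your sketch, including the upper bound on $\OPT$ and the geometric-sum estimates, is fine. Your side remarks about bounding $\theta$ by $\OPT/(1-e^{-c})$ are unnecessary once the additive bound is in place.
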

\begin{proof}
The algorithm works as follows:
\begin{enumerate}
    \item Let $\theta = v(1-1/n)$. In other words, we have $\Pr_{x\sim D}[x\ge \theta] = 1/n$.
    \item For the realization of each variable $x_i$, where $i=1,2,\ldots,n$, accept $x_i$ and terminate if $x_i\geq \theta$; otherwise reject it and observe the next variable, if any.
\end{enumerate}

We refer to $\theta$ as the \emph{threshold} of the algorithm.
Note that $F(\theta) = 1-1/n$ and $G(\theta) = 1/n$. 
We show that the competitive ratio for the above algorithm is at least $1-1/e$. 
We prove this by deriving an upper bound for $\OPT$ and a lower bound for $\ALG$. 
Let $x^*=\max_{i}\{x_i\}$ denote the maximum of the $n$ i.i.d. random variables. 
Then $\OPT=\bbE[x^*]$. 
Given threshold $\theta$, 
we have 
\begin{align*}
    \OPT=\bbE[x^*]&=\bbE[x^*\mid x^*< \theta]\cdot\Pr[x^*< \theta]+\bbE[x^*\mid x^*\ge \theta]\cdot \Pr[x^*\ge \theta]\\
    &=\bbE[x^*\mid x^*< \theta]\cdot\Pr[x^*< \theta]+(\theta+\bbE[x^*-\theta\mid x^*\ge \theta])\cdot\Pr[x^*\ge \theta]\\
    &=\bbE[x^*\mid x^*< \theta]\cdot\Pr[x^*< \theta]+\theta\cdot\Pr[x^*\ge \theta]+\bbE[x^*-\theta\mid x^*\ge \theta]\cdot\Pr[x^*\ge \theta].
\end{align*}
Note that the conditional expectation $\bbE[x^*\mid x^*<\theta]$ is at most $\theta$. Moreover, 
\begin{equation}\label{eq:split-arg}
    \bbE[(x^*-\theta)^+] = 0\cdot\Pr[x^* \le \theta] +\bbE[x^*-\theta\mid x^* \ge \theta]\cdot \Pr[x^* \ge \theta].
\end{equation}
Thus we obtain the following upper bound on $\OPT$,
\begin{align*}
    \OPT \le \theta \cdot \Pr[x^* < \theta]+\theta\cdot\Pr[x^*\ge \theta]+\bbE[(x^*-\theta)^+] = \theta+\bbE[(x^*-\theta)^+].
\end{align*}
Since $x^*$ is the maximum among all $x_i$'s and $x_i$'s are i.i.d., we have 
$$\bbE[(x^*-\theta)^+] \le \sum_{i=1}^n \bbE[(x_i-\theta)^+]=n\cdot\bbE[(x-\theta)^+]=\Delta(1).$$
Overall, a valid upper bound for $\OPT$ is given by 
\begin{equation}\label{eq:opt-upper}
    \OPT \le \theta+\Delta(1).
\end{equation}

Recall $\tau \in [n+1]$ is the stopping time of the algorithm.
The expected gain of the algorithm can then be written as 
\begin{equation}\label{eq:alg-decomp}
    \ALG = \sum_{i=1}^n\Pr[\tau=i]\cdot \bbE[x_i\mid \tau=i].
\end{equation}
The algorithm  accepts the $i$-th variable if and only if the first $i-1$ variables are smaller than the threshold $\theta$ and $x_i$ is at least $\theta$.
In other words, we have
\begin{equation}\label{eq:alg-prob}
    \Pr[\tau=i]=F(\theta)^{i-1}\cdot G(\theta).
\end{equation}
Note that $\bbE[x_i\mid\tau=i] = \bbE[x_i\mid x_i \geq \theta]$.
Conditioned on $x_i \geq \theta$, the algorithm receives $\theta$ and the expected difference between $x_i$ and $\theta$. 
That is, 
\begin{equation}\label{eq:alg-gain}
    \bbE[x_i\mid\tau=i] = 
    \theta + \bbE[x_i-\theta\mid x_i\ge \theta]= \theta+\frac{\bbE[(x_i -\theta)^+]}{G(\theta)}=\theta+\frac{\Delta(1)/n}{G(\theta)},
\end{equation}
where the second last equality is derived from a similar argument to (\ref{eq:split-arg}).
By plugging Equations (\ref{eq:alg-prob}) and (\ref{eq:alg-gain}) into Equation~\eqref{eq:alg-decomp}, we have 
\begin{align*}
    \ALG &=\sum_{i=1}^n F(\theta)^{i-1}\cdot G(\theta)\cdot \left(\theta+\frac{\Delta(1)/n}{G(\theta)}\right)=\sum_{i=1}^n\left(1-\frac{1}{n}\right)^{i-1}\left(\frac{\theta}{n}+\frac{\Delta(1)}{n}\right)\\
    &=\frac{1-(1-\frac{1}{n})^n}{1/n}\cdot\frac{1}{n}\cdot(\theta+\Delta(1))\ge \left(1-(1-1/n)^n\right)\cdot\OPT.
\end{align*}
Since $(1-1/n)^n\le 1/e$ for all $n\in \mathbb{N}_+$, we have $\ALG \ge (1-1/e)\cdot\OPT$.
\end{proof}

\subsection{The Two-Threshold Algorithm}
\label{ssec:2-queries}

Next, we show that one can beat the competitive ratio of $1-1/e$ with a two-threshold algorithm\footnote{The algorithm and analysis framework naturally extends  to three or more thresholds, and the details are deferred to Section~\ref{sec:3-or-more-queries} in the appendix.}.
We first briefly explain why the performance of single-threshold algorithms is upper bounded by $1-1/e$, and how to get around this upper bound using multiple thresholds.
We observe that any single-threshold algorithm faces the following trade-off: the threshold cannot be too high such that the algorithm terminates with no acceptance, and the threshold cannot be too low such that the algorithm terminates too early without observing sufficiently many variables.
Recall that $\tau \in [n+1]$ is the stopping time.
The trade-off is reflected by  $\Pr[\tau \leq n]$ (the probability of accepting some variable) and $\frac{1}{n}\sum_{i=1}^n \Pr[\tau \geq i]$ (the average probability of ``seeing" a variable). 
In fact, the competitive ratio of any threshold-based algorithm is upper  bounded  by these two terms.
Suppose the threshold is $\theta = v(1-c/n)$ for some $c$.
It can be shown that $\Pr[\tau \leq n] \geq 1-1/e$ only if $c\geq 1$ while $\frac{1}{n}\sum_{i=1}^n \Pr[\tau \geq i] \geq 1-1/e$ only if $c\leq 1$.
Therefore, one has to set $c = 1$ to achieve the competitive ratio of $1-1/e$, which is optimal. 
Now assume that we can use two thresholds: $\theta_1 = v(1-c_1/n)$ and $\theta_2 = v(1-c_2/n)$, for some $c_1 < c_2$.
In order to beat $1-1/e$, we need both $\frac{1}{n}\sum_{i=1}^n \Pr[\tau \geq i]$ and $\Pr[\tau \leq n]$ to exceed $1-1/e$.
To ensure $\Pr[\tau \leq n] > 1-1/e$, we must set $c_2 > 1$, as otherwise, we have $c_1 < c_2 \leq 1$ and the algorithm is strictly worse than the single-threshold algorithm with $c=1$.
Similarly, to make sure that $\frac{1}{n}\sum_{i=1}^n \Pr[\tau \geq i] > 1-1/e$, we must set $c_1 < 1$.
By fixing an appropriate division point $\rho\in (0,1)$ and using threshold $\theta_1$ for variables $\{x_i\}_{i\leq \rho n}$, $\theta_2$ for variables $\{x_i\}_{i > \rho n}$, we show that the algorithm does strictly better than $(1-1/e)$-competitive.
The main challenge lies in characterizing $\ALG$ and $\OPT$ in terms of  the thresholds and the division points, and choosing the parameters that optimize the competitive ratio.

\paragraph{Algorithm.}
The algorithm has three parameters $c_1, c_2$ and $\rho$, where $c_1 < c_2$ controls the quantiles of the two queries and $\rho$ decides the fraction of variables on which we use the first threshold.
Specifically, let thresholds be $\theta_1 := v(1-c_1/n)$ and $\theta_1 := v(1-c_2/n)$ via querying the oracle.
We use threshold $\theta_1$ to decide the acceptance of variable $x_i$ for $i \leq \rho\cdot n$ and threshold $\theta_2$ for $x_i$'s with $i \geq \rho \cdot n +1$.
For sufficiently large $n$, we can assume w.l.o.g. that $\rho \cdot n$ is an integer.
We describe the algorithm formally in Algorithm~\ref{alg:2-threshold}.

\begin{algorithm}
\caption{Two-Threshold Algorithm}\label{alg:2-threshold}
	\textbf{Input:} {{Number of items} $n$, parameters $c_1, c_2$ and $\rho$.}
 
    \textbf{Output:} A variable selected by the gambler.

    Set $\theta_{1} := v(1-c_{1}/n)$ by making a query to the oracle; 

    Set $\theta_{2} := v(1-c_{2}/n)$ by making another query to the oracle; 

    \For{$i=1,2,\ldots,\rho\cdot n$}{
		\If{$x_i \geq \theta_1$}{ 
		    \textbf{return} $x_i$ and \textbf{terminate};\qquad\qquad {\color{gray}// phase 1}
        }
    }
    
    \For{$i=\rho\cdot n+1,\rho\cdot n+2,\ldots,n$}{
		\If{$x_i \geq \theta_2$}{ 
		    \textbf{return} $x_i$ and \textbf{terminate};\qquad\qquad {\color{gray}// phase 2}
        }
    }
\end{algorithm}

We refer to the first for-loop (line 5 - 7) as the first phase of the algorithm and the second for-loop (line 8 - 10) as the second phase.
Since $c_1 < c_2$, we have $\theta_1 \geq \theta_2$. 
Our main goal is to characterize $\ALG$ and $\OPT$ using a factor revealing linear program (LP), whose optimal objective lower bounds the competitive ratio of the two-threshold algorithm.

\paragraph{Remark.}
The technique of \emph{factor revealing LPs} was first introduced by \cite{DBLP:journals/jacm/JainMMSV03} and further extended by  \cite{DBLP:conf/stoc/MahdianY11}. 
Generally speaking, a family of LPs is called factor revealing if the infimum of the optimal values for these LPs serves as a lower bound on the competitive ratio for the original maximization problem. 
In our case, we construct factor revealing LPs for the adversary of the gambler. 
After the gambler has established an algorithm to select variables, the goal for the adversary is to find a distribution of the variables such that the gambler's gain is minimized. 
We approximate the nonlinear optimization problem faced by the adversary with an LP whose objective will be a lower bound for the competitive ratio of the gambler. 
The objective function and the constraints of the factor revealing LP are formulated using  inequalities similar to but more advanced than the ones derived in Section~\ref{ssec:single-thres}.
We then leverage the power of LP solvers to find the best algorithm for the gambler such that the optimal value of the family of LPs is maximized. 
In the following subsections, we will show that by carefully choosing the parameters,
the optimal value of the LPs (and thus the competitive ratio) for the two-threshold algorithm is at least $0.6786$.

\subsection{Lower Bounding $\ALG$}\label{ssec:alg-lb-2-thresholds}

We derive a lower bound for the expected gain of the algorithm in the following lemma. 

\begin{lemma} \label{lem:ALG-lb-2-thresholds}
Let $\epsilon > 0$ be an arbitrarily small constant.
For sufficiently large $n$, we have
\begin{equation}\label{eq:alg-lb-2-thresholds}
   \ALG \ge (1-\epsilon)\cdot \left( 
   (1-e^{-\rho\cdot c_{1}}) \cdot \left(\theta_{1} +\frac{\Delta(c_{1})}{c_{1}}\right) 
   + e^{- \rho \cdot c_1} \cdot (1-e^{-(1-\rho) \cdot c_{2}}) \cdot \left(\theta_{2} +\frac{\Delta(c_{2})}{c_{2}}\right)
   \right).
\end{equation}
\end{lemma}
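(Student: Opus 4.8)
The plan is to mirror the structure of the single-threshold proof in Theorem~\ref{thm:warmup:singleT}, but now splitting the accounting into the two phases. I would start from the decomposition $\ALG = \sum_{i=1}^n \Pr[\tau=i]\cdot \bbE[x_i\mid \tau=i]$ and handle the two ranges $i \le \rho n$ and $i > \rho n$ separately. For the first phase, exactly as before, $\Pr[\tau = i] = F(\theta_1)^{i-1} G(\theta_1)$ and $\bbE[x_i \mid \tau = i] = \theta_1 + \bbE[(x-\theta_1)^+]/G(\theta_1) = \theta_1 + \frac{\Delta(c_1)/n}{G(\theta_1)}$, using $G(\theta_1) = c_1/n$. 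Summing the geometric series over $i = 1,\dots,\rho n$ gives a contribution of $\bigl(1 - F(\theta_1)^{\rho n}\bigr)\cdot\bigl(\theta_1 + \Delta(c_1)/c_1\bigr)$. Then I would use $F(\theta_1)^{\rho n} = (1 - c_1/n)^{\rho n} \to e^{-\rho c_1}$ as $n \to \infty$, which is where the $(1-\epsilon)$ slack and the ``sufficiently large $n$'' hypothesis enter: for large enough $n$, $(1-c_1/n)^{\rho n} \le e^{-\rho c_1}(1+\epsilon')$ or one absorbs the error into the multiplicative $(1-\epsilon)$ factor on the whole expression.

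For the second phase, the key observation is that $\tau = i$ for $i > \rho n$ requires all of $x_1,\dots,x_{\rho n}$ to be below $\theta_1$ and all of $x_{\rho n + 1},\dots,x_{i-1}$ to be below $\theta_2$, and $x_i \ge \theta_2$. Hence $\Pr[\tau = i] = F(\theta_1)^{\rho n} F(\theta_2)^{i - \rho n - 1} G(\theta_2)$, and $\bbE[x_i \mid \tau = i] = \theta_2 + \frac{\Delta(c_2)/n}{G(\theta_2)}$ with $G(\theta_2) = c_2/n$. Summing the geometric series over $i = \rho n + 1,\dots,n$ produces $F(\theta_1)^{\rho n}\cdot\bigl(1 - F(\theta_2)^{(1-\rho)n}\bigr)\cdot\bigl(\theta_2 + \Delta(c_2)/c_2\bigr)$, and again the limits $F(\theta_1)^{\rho n} \to e^{-\rho c_1}$ and $F(\theta_2)^{(1-\rho)n} \to e^{-(1-\rho)c_2}$ convert this into the claimed second term. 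Adding the two phase contributions and folding all the $o(1)$ discrepancies into the $(1-\epsilon)$ prefactor yields \eqref{eq:alg-lb-2-thresholds}.

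The one genuinely non-routine point is justifying the replacement of $(1-c/n)^{\alpha n}$ by $e^{-\alpha c}$ uniformly enough to be captured by a single multiplicative $(1-\epsilon)$: since $(1-c/n)^{\alpha n}$ converges to $e^{-\alpha c}$ from below (for $c/n < 1$, using $\ln(1-c/n) \le -c/n$ gives $(1-c/n)^{\alpha n} \le e^{-\alpha c}$, which actually makes $1 - F(\theta)^{\alpha n} \ge 1 - e^{-\alpha c}$ directly — so for the terms of the form $1 - F(\cdot)^{(\cdot)n}$ there is nothing to lose), the only place the limit is needed in the ``wrong'' direction is the factor $F(\theta_1)^{\rho n}$ multiplying the (nonnegative) second-phase term, where $F(\theta_1)^{\rho n} \le e^{-\rho c_1}$ would go the wrong way. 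There one uses that $F(\theta_1)^{\rho n} = (1-c_1/n)^{\rho n} \ge e^{-\rho c_1}(1 - o(1))$ as $n\to\infty$, and since $\theta_2 + \Delta(c_2)/c_2 \ge 0$ and $1 - e^{-(1-\rho)c_2} \ge 0$, the $(1-o(1))$ factor is dominated by $(1-\epsilon)$ for $n$ large. I would also remark that all quantities $\theta_1,\theta_2,\Delta(c_1),\Delta(c_2)$ are nonnegative, so every term being bounded is legitimately bounded below by its limit times $(1-\epsilon)$; collecting these gives the lemma. This is the step I expect to need the most care, though it is ultimately elementary.
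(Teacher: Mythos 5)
Your proposal is correct and follows essentially the same route as the paper's proof: the same decomposition of $\ALG$ by stopping time, the same per-phase expressions for $\Pr[\tau=i]$ and $\bbE[x_i\mid\tau=i]$, the same geometric-series summation, and the same absorption of the $(1-c/n)^{\alpha n}$ versus $e^{-\alpha c}$ discrepancy into the $(1-\epsilon)$ prefactor for large $n$. Your explicit check of which direction of that approximation actually needs the slack (only the factor $F(\theta_1)^{\rho n}$ multiplying the nonnegative second-phase term) is a careful touch that the paper states more tersely, but it is the same argument.
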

\begin{proof}
    Let $\tau \in [n+1]$ denote the stopping time of the algorithm.
    Specifically, if $\tau \leq n$,  the algorithm terminates after accepting variable $x_\tau$; if $\tau = n+1$, the algorithm terminates without accepting any variable.
    Then we can write the expected gain of the algorithm as
    \begin{equation*}
        \ALG = \sum_{i=1}^{n} \left( \Pr[\tau = i] \cdot \bbE[x_i \mid \tau = i] \right).
    \end{equation*}
    
    Recall that the algorithm has two phases.
    \begin{enumerate}
        \item For all $i\leq \rho\cdot n$, the algorithm accepts variable $x_i$ if and only if $x_j < \theta_1$ and $x_i\geq \theta_1$.
        \item For all $i\geq \rho\cdot n+1$, the algorithm accepts the $x_i$ if and only if (i) $x_j < \theta_1$ for all $j \leq \rho\cdot n$, (ii) $x_j < \theta_2$ for all $j\in [\rho\cdot n + 1, i-1]$, and (iii) $x_i\geq \theta_2$.
    \end{enumerate}
    
    Therefore, we have 
    \begin{equation*}
        \Pr[\tau = i] = 
        \begin{cases}
            F(\theta_1)^{i-1}\cdot G(\theta_1), \qquad\qquad\qquad\qquad \text{if } i\leq \rho\cdot n; \\
            F(\theta_1)^{\rho\cdot n}\cdot F(\theta_2)^{i-1-\rho\cdot n}\cdot G(\theta_2), \qquad \text{if } i\geq \rho\cdot n + 1.
        \end{cases}
    \end{equation*}
    
    Similar to Equation~\eqref{eq:alg-gain}, the expected gain of the algorithm given that it accepts $x_i$ is
    \begin{equation*}
        \bbE[x_i \mid \tau = i]  = 
        \begin{cases}
            \theta_{1} +\frac{\Delta(c_{1})/n}{G(\theta_{1})}, \qquad \text{if } i\leq \rho\cdot n; \\
            \theta_{2} +\frac{\Delta(c_{2})/n}{G(\theta_{2})}, \qquad \text{if } i\geq \rho\cdot n + 1.
        \end{cases}
    \end{equation*}

    Given that $F(\theta_1) = 1-\frac{c_1}{n}$ and $F(\theta_2) = 1-\frac{c_2}{n}$, the expected gain of the algorithm is 
    \begin{align*}
        \ALG 
        &= \sum_{i=1}^{\rho \cdot n} F(\theta_{1})^{i-1}\cdot G(\theta_{1}) \cdot \left(\theta_{1} +\frac{\Delta(c_{1})/n}{G(\theta_{1})}\right) \\
        & \qquad + \sum_{i=\rho \cdot n + 1}^{n} F(\theta_1)^{\rho \cdot n} \cdot  F(\theta_{2})^{i-1-\rho\cdot n}\cdot G(\theta_{2}) \cdot \left(\theta_{2} +\frac{\Delta(c_{2})/n}{G(\theta_{2})}\right) \\
        & = \frac{1 - (1-\frac{c_1}{n})^{\rho\cdot n}}{c_1/n}\cdot \frac{c_1}{n}\cdot \left(\theta_{1} +\frac{\Delta(c_{1})/n}{c_1/n}\right) \\
        & \qquad + (1-\frac{c_1}{n})^{\rho\cdot n}\cdot \frac{1 - (1-\frac{c_2}{n})^{(1-\rho)\cdot n}}{c_2/n} \cdot \frac{c_2}{n}\cdot \left(\theta_{2} +\frac{\Delta(c_{2})/n}{c_2/n}\right) \\
        & \geq (1-\epsilon)\cdot \left( 
       (1-e^{-\rho\cdot c_{1}}) \cdot \left(\theta_{1} +\frac{\Delta(c_{1})}{c_{1}}\right) 
       + e^{- \rho \cdot c_1} \cdot (1-e^{-(1-\rho) \cdot c_{2}}) \cdot \left(\theta_{2} +\frac{\Delta(c_{2})}{c_{2}}\right)
       \right),
    \end{align*}
    where in the last inequality we use $1+x\leq e^x$ for all $x>0$ and $1-\frac{c_1}{n} \geq (1-\epsilon)\cdot e^{-\frac{c_1}{n}}$, which holds for sufficiently large $n$, e.g., $n = \omega(\frac{1}{\epsilon})$.
\end{proof}

\subsection{Upper Bounding $\OPT$}\label{ssec:opt-ub-2-thresholds}

Following the argument similar to the one given in Equation~\eqref{eq:opt-upper}, we  obtain the following upper bounds for $\OPT$:
\begin{align*}
    \OPT \leq \theta_{1} + \Delta(c_1) \quad \text{and} \quad
    \OPT \le \theta_{2} + \Delta(c_2).
\end{align*}

However, the above bounds (together with the lower bound on $\ALG$ given in Equation~\eqref{eq:alg-lb-2-thresholds}) alone would not be sufficient to beat $1-1/e$.
Intuitively speaking, this is because we do not have many constraints on the variables $\theta_1, \theta_2, \Delta(c_1)$ and $\Delta(c_2)$ (except that $\theta_1 \geq \theta_2$).
In what follows, we give a more careful analysis that leads to tighter upper bounds for $\OPT$.
We introduce a new variable $\delta$ defined as follows:  
\begin{equation*}
\delta = \int_{\theta_{2}}^{\theta_1} \Pr[x^* \geq t] dt.
\end{equation*}

Recall that $x^* = \max_{i\in[n]} \{x_i\}$.
We have the following upper bound for $\OPT$:
\begin{align}
    \OPT = \bbE[x^*] &= \int_{0}^{\infty} \Pr[x^* \ge t] dt \nonumber \\
    &= \int_{0}^{\theta_2}\Pr[x^* \ge t] dt + \int_{\theta_{2}}^{\theta_{1}} \Pr[x^* \ge t] dt + \int_{\theta_{1}}^{\infty} \Pr[x^* \ge t] dt \nonumber\\
    & \leq \theta_2 + \delta + \bbE[(x^* - \theta_1)^+]
    \leq \theta_2 + \delta + \Delta(c_1). \label{eq:opt-constr-2-thresholds}
\end{align}  

In the next three lemmas, we establish three upper bounds for $\delta$, which,  when 
combined with Equation~\eqref{eq:opt-constr-2-thresholds}, will provide much better upper bounds on \OPT.
Intuitively, the variable $\delta$ helps establish some connection between $\theta_1$ and $\theta_2$, and  between $\Delta(c_1)$ and $\Delta(c_2)$, which can be utilized to derive more constraints on these variables.
Due to space limit, the proofs of Lemmas \ref{lem:delta-ub-1-2-thresholds} to \ref{lem:delta-ub-3-2-thresholds} can be found in Appendix \ref{app:missing-proofs}. 

\begin{lemma}\label{lem:delta-ub-1-2-thresholds}
For arbitrarily small $\epsilon>0$ and sufficiently large $n$, we have 
\begin{equation}\label{eq:delta-ub-1-2-thresholds}
    \delta \le (1+\epsilon)\cdot (\theta_{1}-\theta_{2})\cdot (1-e^{-c_{2}}).
\end{equation}
\end{lemma}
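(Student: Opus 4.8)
The goal is to bound $\delta = \int_{\theta_2}^{\theta_1} \Pr[x^* \geq t]\,dt$ from above by $(1+\epsilon)(\theta_1 - \theta_2)(1 - e^{-c_2})$. The plan is to pull out a uniform upper bound on the integrand over the interval $[\theta_2, \theta_1]$ and integrate. Since $\Pr[x^* \geq t]$ is non-increasing in $t$, its largest value on $[\theta_2, \theta_1]$ is attained at the left endpoint $\theta_2$, so $\delta \leq (\theta_1 - \theta_2) \cdot \Pr[x^* \geq \theta_2]$. It therefore suffices to show $\Pr[x^* \geq \theta_2] \leq (1+\epsilon)(1 - e^{-c_2})$ for sufficiently large $n$.

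First I would compute $\Pr[x^* \geq \theta_2]$ directly using independence and the i.i.d. assumption: $\Pr[x^* < \theta_2] = F(\theta_2)^n = (1 - c_2/n)^n$, hence $\Pr[x^* \geq \theta_2] = 1 - (1 - c_2/n)^n$. The remaining step is purely analytic: I would use the standard estimate $(1 - c_2/n)^n \to e^{-c_2}$ from below, i.e. for sufficiently large $n$ (depending on $\epsilon$ and $c_2$) we have $(1 - c_2/n)^n \geq e^{-c_2} - \epsilon'$ for a suitable $\epsilon'$, which translates into $1 - (1 - c_2/n)^n \leq 1 - e^{-c_2} + \epsilon' \leq (1+\epsilon)(1 - e^{-c_2})$ after choosing $\epsilon'$ appropriately (note $1 - e^{-c_2} > 0$ since $c_2 > 0$, so absorbing the additive error into a multiplicative $(1+\epsilon)$ factor is legitimate). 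Chaining the two bounds gives $\delta \leq (\theta_1 - \theta_2)\bigl(1 - (1-c_2/n)^n\bigr) \leq (1+\epsilon)(\theta_1 - \theta_2)(1 - e^{-c_2})$, as desired.

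There is no serious obstacle here — the lemma is essentially a one-line monotonicity argument followed by a routine limit estimate. The only minor subtlety is bookkeeping the $\epsilon$: one must be careful that the additive deviation of $(1-c_2/n)^n$ from $e^{-c_2}$ can be made small relative to $1 - e^{-c_2}$, which is where the ``sufficiently large $n$'' (e.g. $n = \omega(1/\epsilon)$, matching the convention used in the proof of Lemma~\ref{lem:ALG-lb-2-thresholds}) enters. I would also note that $\theta_1 \geq \theta_2$ (which holds since $c_1 < c_2$ and $v(\cdot)$ is increasing), so the interval of integration is non-degenerate and all quantities are non-negative, making the inequality direction correct.
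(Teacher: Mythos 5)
Your proof is correct and follows exactly the paper's argument: bound the integrand by its value at the left endpoint $\theta_2$, compute $\Pr[x^*\geq\theta_2]=1-(1-c_2/n)^n$ from the i.i.d. structure, and absorb the deviation from $1-e^{-c_2}$ into the $(1+\epsilon)$ factor for sufficiently large $n$. No issues.
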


Next, we observe a second upper bound for $\delta$ achieved by using the union bound. 

\begin{lemma}\label{lem:delta-ub-2-2-thresholds}
	We have	\begin{equation}\label{eq:delta-ub-2-2-thresholds}
    	\delta\le \Delta(c_{2})-\Delta(c_{1}).
	\end{equation}
\end{lemma}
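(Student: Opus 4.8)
The plan is to rewrite both quantities as integrals of a one-dimensional survival function and then apply the union bound. First I would recall the elementary tail-integral identity $\bbE_{x\sim D}[(x-\theta)^+] = \int_{\theta}^{\infty}\Pr_{x\sim D}[x\ge t]\,dt$ (the same layer-cake manipulation underlying Equation~\eqref{eq:split-arg}). Applying it with $\theta = v(1-c/n)$ gives $\Delta(c) = n\int_{v(1-c/n)}^{\infty}\Pr_{x\sim D}[x\ge t]\,dt$. Since $c_1 < c_2$ and $v(\cdot)$ is strictly increasing we have $\theta_1 = v(1-c_1/n) \ge \theta_2 = v(1-c_2/n)$, so subtracting the two expressions telescopes the outer tails and leaves
\[
\Delta(c_2) - \Delta(c_1) = n\int_{\theta_2}^{\theta_1}\Pr_{x\sim D}[x\ge t]\,dt .
\]

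Next I would bound the integrand defining $\delta$. For every fixed $t$ the event $\{x^* \ge t\}$ is the union $\bigcup_{i=1}^{n}\{x_i \ge t\}$, so the union bound together with the fact that the $x_i$ are identically distributed yields $\Pr[x^* \ge t] \le \sum_{i=1}^{n}\Pr[x_i \ge t] = n\,\Pr_{x\sim D}[x\ge t]$ --- exactly the step used to pass from $\bbE[(x^*-\theta_1)^+]$ to $\Delta(c_1)$ in the $\OPT$ bound~\eqref{eq:opt-constr-2-thresholds}. Integrating this pointwise inequality over $t\in[\theta_2,\theta_1]$ gives
\[
\delta = \int_{\theta_2}^{\theta_1}\Pr[x^* \ge t]\,dt \;\le\; n\int_{\theta_2}^{\theta_1}\Pr_{x\sim D}[x\ge t]\,dt \;=\; \Delta(c_2) - \Delta(c_1),
\]
which is the claimed inequality.

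There is essentially no real obstacle in this lemma; the proof is two lines once the normalizations are in place. The only points worth stating explicitly are the tail-integral identity and the monotonicity of $v$, both already recorded in the preliminaries, and the observation that the resulting bound is consistent with $\delta\ge 0$ since $\theta_1\ge\theta_2$ forces $\Delta(c_2)\ge\Delta(c_1)$. (In contrast to Lemma~\ref{lem:delta-ub-1-2-thresholds}, no $(1\pm\epsilon)$ slack or large-$n$ hypothesis is needed here, because the union bound is exact.)
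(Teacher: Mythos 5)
Your proof is correct and follows essentially the same route as the paper's: write $\Delta(c_2)-\Delta(c_1)$ as $n\int_{\theta_2}^{\theta_1}\Pr_{x\sim D}[x\ge t]\,dt$ via the tail-integral identity, then bound $\Pr[x^*\ge t]\le n\Pr_{x\sim D}[x\ge t]$ by the union bound and integrate over $[\theta_2,\theta_1]$. No gaps; your remark that no $(1\pm\epsilon)$ slack is needed here is also consistent with the paper's treatment.
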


So far we have developed two simple upper bounds for $\delta$.
In fact, with these two sets of upper bounds, we can already beat $1-1/e$ and achieve a competitive ratio of at least $0.65$ by using factor revealing LPs.
However, we observe that  the ratio can be further improved by introducing a third upper bound\footnote{As we will discuss in Section~\ref{sec:open-problem}, there is still room for improvement for this upper bound.} on $\delta$.

\begin{lemma}\label{lem:delta-ub-3-2-thresholds}
	Fix any constant $c\in [c_{1},c_{2}]$, let $\gamma := \frac{c-c_1}{e^{-c} - e^{-c_{2}}}$.
	For an arbitrarily small constant $\epsilon>0$ and sufficiently large $n$, we have 
	\begin{equation}\label{eq:delta-ub-3-2-thresholds}
	\gamma \cdot \delta/(1+\epsilon) \le 
	\Delta(c_{2})-\Delta(c_{1}) - (c_{1} - \gamma(1-e^{-c}))\cdot (\theta_1 - \theta_{2}).
	\end{equation}
\end{lemma}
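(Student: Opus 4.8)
The plan is to bound $\delta$ by carefully accounting for the contribution to $\Delta(c_2)-\Delta(c_1)$ coming from the ``middle band'' of realizations, i.e. those variables whose value lies between $\theta_2$ and $\theta_1$, and to combine this with the cruder bound from Lemma \ref{lem:delta-ub-2-2-thresholds} in the right proportion. Write $\delta = \int_{\theta_2}^{\theta_1} \Pr[x^* \ge t]\,dt$ and note that $\Delta(c_2)-\Delta(c_1) = n\cdot\bbE[(x-\theta_2)^+] - n\cdot\bbE[(x-\theta_1)^+] = n\cdot\bbE[\min\{(x-\theta_2)^+, \theta_1-\theta_2\}]$, which equals $n\int_{\theta_2}^{\theta_1}\Pr[x \ge t]\,dt$. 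So the statement is really a comparison between $\int_{\theta_2}^{\theta_1}\Pr[x^*\ge t]\,dt$ and $n\int_{\theta_2}^{\theta_1}\Pr[x\ge t]\,dt$, sharpened by subtracting an explicit multiple of $\theta_1-\theta_2$.

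First I would parametrize the interval $[\theta_2,\theta_1]$ by quantile: for $t\in[\theta_2,\theta_1]$ there is a unique $c(t)\in[c_1,c_2]$ with $\Pr[x\ge t] = c(t)/n$, hence $\Pr[x^*\ge t] = 1-(1-c(t)/n)^n \le 1-e^{-c(t)} \cdot(1+o(1))$ and also $\Pr[x^*\ge t]\le n\cdot\Pr[x\ge t] = c(t)$ by the union bound. Fix the constant $c\in[c_1,c_2]$ from the statement and split $[\theta_2,\theta_1]$ at the point $t_c$ where $c(t_c)=c$. On the upper part (where $c(t)\le c$) I would use $\Pr[x^*\ge t]\le c(t)$; on the lower part (where $c(t)\ge c$) I would use $\Pr[x^*\ge t]\le 1-e^{-c(t)}\le 1-e^{-c}$ — wait, that is the wrong direction — instead use that $1-e^{-c(t)}$ is increasing, so on the lower part bound it by the value at the larger quantile, etc. The key identity to exploit is that $\int_{\theta_2}^{\theta_1} \Pr[x\ge t]\,dt = \frac{1}{n}\int \cdots$, together with integration by parts / change of variables to convert $\int c(t)\,dt$ and $\int (1-e^{-c(t)})\,dt$ into expressions involving $\Delta(c_2)-\Delta(c_1)$ and $\theta_1-\theta_2$. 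Concretely, I expect the choice $\gamma = \frac{c-c_1}{e^{-c}-e^{-c_2}}$ to be exactly the weight that makes the two regimes balance: writing $\Pr[x^*\ge t] \le c(t)$ as $\le c_1 + (c(t)-c_1)$ and using $\gamma(e^{-c}-e^{-c(t)}) \ge 0$ on the appropriate range to absorb the surplus, one gets $\delta \le \int_{\theta_2}^{\theta_1}\big(c_1 + \gamma(e^{-c(t)}-e^{-c})\cdot(\text{something})\big)$ — after which $\int_{\theta_2}^{\theta_1} c_1\,dt = c_1(\theta_1-\theta_2)$ and $\int_{\theta_2}^{\theta_1} c(t)/n\,dt$-type terms reassemble into $\Delta(c_2)-\Delta(c_1)$ and $(1-e^{-c})(\theta_1-\theta_2)$, yielding the claimed inequality after multiplying through by $\gamma$ and collecting the $(\theta_1-\theta_2)$ coefficient as $c_1-\gamma(1-e^{-c})$.

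The main obstacle will be getting the bookkeeping of the change of variables exactly right: one must relate $dt$ and $dc(t)$ through $dt = -\frac{n}{c}\,d(\text{value})$... more precisely one needs $\int_{\theta_2}^{\theta_1} g(c(t))\,dt$ for $g$ linear and for $g(u)=e^{-u}$, and express both in terms of the two ``macroscopic'' quantities $\theta_1-\theta_2$ and $\Delta(c_2)-\Delta(c_1)=n\int_{\theta_2}^{\theta_1}\Pr[x\ge t]dt$, while correctly tracking where the union bound $\Pr[x^*\ge t]\le c(t)$ versus the exponential bound $\Pr[x^*\ge t]\le 1-e^{-c(t)}$ is tight — the crossover being precisely the reason the specific $\gamma$ appears. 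The $(1+\epsilon)$ slack and the ``sufficiently large $n$'' hypothesis are there exactly to replace $(1-c(t)/n)^n$ by $e^{-c(t)}$ uniformly over $t\in[\theta_2,\theta_1]$, which I would handle at the very end with the same $1-c/n \ge (1-\epsilon)e^{-c/n}$ estimate used in Lemma \ref{lem:ALG-lb-2-thresholds}; everything else is an exact manipulation.
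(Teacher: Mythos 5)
Your setup is sound---parametrizing $[\theta_2,\theta_1]$ by the quantile $c(t)$ with $\Pr[x\ge t]=c(t)/n$, recording the two pointwise bounds $\Pr[x^*\ge t]\le (1+\epsilon)(1-e^{-c(t)})$ and $\Pr[x^*\ge t]\le c(t)$, and noting $\Delta(c_2)-\Delta(c_1)=n\int_{\theta_2}^{\theta_1}\Pr[x\ge t]\,dt$---but the actual proof never happens. The step that justifies the specific choice $\gamma=\frac{c-c_1}{e^{-c}-e^{-c_2}}$ is exactly where your write-up has ``absorb the surplus,'' a literal placeholder ``(something),'' and an ``etc.''; and the one concrete estimate you commit to on the lower piece, $1-e^{-c(t)}\le 1-e^{-c}$ where $c(t)\ge c$, goes the wrong way (as you yourself flag) and is never replaced by a correct one. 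Moreover, $\int_{\theta_2}^{\theta_1}(1-e^{-c(t)})\,dt$ does not ``reassemble'' exactly into $\theta_1-\theta_2$ and $\Delta(c_2)-\Delta(c_1)$ by any change of variables; only a one-sided linear comparison is available, and that comparison is precisely the content of the lemma, so gesturing at bookkeeping here begs the question. This is a genuine gap, not a presentational one.

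For comparison, the paper's proof does not integrate pointwise bounds at all: it takes the split point $t^*$ with $\Pr[x\ge t^*]=c/n$, sets $\alpha=\frac{t^*-\theta_2}{\theta_1-\theta_2}$, upper-bounds $\delta$ by the piecewise-constant estimate $(1+\epsilon)(\theta_1-\theta_2)\bigl(1-e^{-c}+\alpha(e^{-c}-e^{-c_2})\bigr)$ (using the constants $1-e^{-c_2}$ on $[\theta_2,t^*]$ and $1-e^{-c}$ on $(t^*,\theta_1]$), lower-bounds $\Delta(c_2)-\Delta(c_1)=n\int_{\theta_2}^{\theta_1}\Pr[x\ge t]\,dt$ by $(\theta_1-\theta_2)\bigl(c_1+\alpha(c-c_1)\bigr)$ (using $c$ and $c_1$ on the two pieces), and then eliminates $\alpha$; the constant $\gamma$ is exactly what that substitution produces. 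Alternatively, your quantile parametrization can be completed without any split: the lemma follows by integrating the single pointwise inequality $\gamma\,(e^{-c}-e^{-u})\le u-c_1$ for all $u\in[c_1,c_2]$, which is trivial for $u\le c$ and, for $u\in[c,c_2]$, holds because $u\mapsto (u-c_1)(e^{-c}-e^{-c_2})-(c-c_1)(e^{-c}-e^{-u})$ is concave in $u$ and nonnegative at $u=c$ and $u=c_2$. One of these arguments (the paper's $\alpha$-elimination or this convexity-type pointwise bound) is the missing core; as written, your proposal does not establish the inequality.
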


\subsection{Lower Bounding the Competitive Ratios}\label{ssec:comp-ratio-2-thresholds}

So far, we have established a lower bound for $\ALG$ in Section~\ref{ssec:alg-lb-2-thresholds} and several upper bounds for $\OPT$ in Section~\ref{ssec:opt-ub-2-thresholds}.
More importantly, given parameters $c_1,c_2$ and $\rho$, these bounds are linear in $\theta_{1}, \theta_2, \Delta(c_1), \Delta(c_2)$ and $\delta$. 
We can now construct a minimization LP taking Equation~\eqref{eq:alg-lb-2-thresholds} as the objective and Equations~\eqref{eq:opt-constr-2-thresholds}, \eqref{eq:delta-ub-1-2-thresholds}, \eqref{eq:delta-ub-2-2-thresholds}, \eqref{eq:delta-ub-3-2-thresholds} as constraints with $\theta_{1}, \theta_2, \Delta(c_1), \Delta(c_2)$ and $\delta$ being the non-negative LP variables. 
In the following theorem, we prove that the optimal objective of the LP gives a lower bound on the competitive ratio for the algorithm.

\begin{theorem}\label{thm:LP-lb-2-thresholds}
    Given parameters $c_1,c_2$ and $\rho$ of the two-threshold algorithm, the optimal value of the following LP provides a lower bound for the competitive ratio of the algorithm when $n\rightarrow \infty$.
    \begin{align}
		\text{minimize} \quad & \textstyle (1-e^{-\rho\cdot c_{1}}) \cdot \left(\theta_{1} +\frac{\Delta(c_{1})}{c_{1}}\right) 
        + e^{- \rho \cdot c_1} \cdot (1-e^{-(1-\rho) \cdot c_{2}}) \cdot \left(\theta_{2} +\frac{\Delta(c_{2})}{c_{2}}\right) \nonumber\\
		\text{subject to} \quad 
		& \theta_1 \geq \theta_2 \geq 0, \quad
		\Delta(c_1)\geq 0, \quad \delta \geq 0,\nonumber\\
		&  \textstyle  1 \leq \theta_2 + \delta + \Delta(c_1), \nonumber\\
		& \delta \le (\theta_{1}-\theta_{2})\cdot(1-e^{-c_{2}}), \nonumber\\
		& \delta \leq \Delta(c_{2})-\Delta(c_{1}),  \nonumber\\
		& \gamma \cdot \delta \leq \Delta(c_{2})-\Delta(c_{1}) - (c_{1} - \gamma(1-e^{-c}))\cdot (\theta_1 - \theta_{2}), \>\forall c \in [c_{1},c_{2}],  \> \textstyle \gamma = \frac{c-c_{1}}{e^{-c}-e^{-c_{2}}}. \nonumber
	\end{align}
\end{theorem}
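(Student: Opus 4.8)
The plan is to show that any feasible distribution $D$ for the adversary, together with the threshold parameters $c_1, c_2, \rho$ fixed by the gambler, yields a point $(\theta_1, \theta_2, \Delta(c_1), \Delta(c_2), \delta)$ that is feasible for the stated LP, and whose objective value is at most $\ALG / \OPT$ (up to a $(1 \pm \epsilon)$ slack that vanishes as $n \to \infty$). Since the LP minimizes over all such feasible points, its optimal value lower-bounds the competitive ratio. Concretely, first I would fix an arbitrary distribution $D$ and set $\theta_1 := v(1 - c_1/n)$, $\theta_2 := v(1 - c_2/n)$, $\Delta(c_i) := n \cdot \bbE_{x\sim D}[(x - \theta_i)^+]$, and $\delta := \int_{\theta_2}^{\theta_1} \Pr[x^* \ge t]\, dt$ as in the preceding subsections. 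These quantities are manifestly nonnegative, and $\theta_1 \ge \theta_2$ because $c_1 < c_2$ and $v(\cdot)$ is increasing, so the first line of constraints holds.

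Next I would normalize by $\OPT$. Since the whole setup is scale-invariant, I would assume without loss of generality that $\OPT = 1$ (or equivalently divide every variable by $\OPT$ at the end); then the constraint $1 \le \theta_2 + \delta + \Delta(c_1)$ is exactly Equation~\eqref{eq:opt-constr-2-thresholds}, the constraint $\delta \le (\theta_1 - \theta_2)(1 - e^{-c_2})$ follows from Lemma~\ref{lem:delta-ub-1-2-thresholds} after absorbing the $(1+\epsilon)$ factor, the constraint $\delta \le \Delta(c_2) - \Delta(c_1)$ is Lemma~\ref{lem:delta-ub-2-2-thresholds}, and the family of constraints indexed by $c \in [c_1, c_2]$ is Lemma~\ref{lem:delta-ub-3-2-thresholds}, again after absorbing $(1+\epsilon)$. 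Thus the chosen point is feasible for an $\epsilon$-relaxed version of the LP. Finally, by Lemma~\ref{lem:ALG-lb-2-thresholds}, the LP objective evaluated at this point is at most $\ALG / (1-\epsilon)$ (again using $\OPT = 1$), so $\ALG \ge (1-\epsilon) \cdot \mathrm{OPT}_{\mathrm{LP}}^{(\epsilon)}$ where $\mathrm{OPT}_{\mathrm{LP}}^{(\epsilon)}$ is the optimum of the relaxed LP. Letting $\epsilon \to 0$ and $n \to \infty$, the relaxed LP optimum converges to the stated LP optimum (the feasible region depends continuously on $\epsilon$ and all coefficients are fixed constants), giving the claimed bound on the competitive ratio.

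The main subtlety — and the step I would be most careful about — is the handling of the $\epsilon$ slack and the passage to the limit. The lemmas give bounds of the form "for arbitrarily small $\epsilon > 0$ and sufficiently large $n$," so for a fixed $\epsilon$ and all large enough $n$ the adversary's point lies in the $\epsilon$-perturbed polytope; one then argues that $\mathrm{OPT}_{\mathrm{LP}}^{(\epsilon)} \to \mathrm{OPT}_{\mathrm{LP}}$ as $\epsilon \to 0$, which is a standard LP-stability / continuity argument since the LP has finitely many constraints with coefficients depending continuously (indeed affinely) on $\epsilon$ and a nonempty bounded feasible region. A secondary point worth spelling out is that the family of constraints over the continuum $c \in [c_1, c_2]$ is legitimate: for the lower-bound direction we only need that the adversary's point satisfies all of them (which it does, by Lemma~\ref{lem:delta-ub-3-2-thresholds} applied for each $c$), so even though this is technically a semi-infinite LP, its value is a valid lower bound, and in practice one discretizes $c$ to evaluate it numerically.

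Once the theorem is in place, the quantitative claim — competitive ratio at least $0.6786$ — follows by exhibiting specific values of $c_1, c_2, \rho$ and numerically solving (a discretization of) the LP; I would note that this numerical step is where the constant $0.6786$ comes from and defer it to the computation, since the theorem itself only asserts that the LP optimum is \emph{a} valid lower bound for whatever parameters are chosen.
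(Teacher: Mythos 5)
Your proposal is correct and follows essentially the same argument as the paper: every distribution induces a feasible point of the LP whose objective lower-bounds $\ALG/\OPT$ (after normalizing $\OPT=1$), with the objective coming from Lemma~\ref{lem:ALG-lb-2-thresholds} and the constraints from Equation~\eqref{eq:opt-constr-2-thresholds} and Lemmas~\ref{lem:delta-ub-1-2-thresholds}--\ref{lem:delta-ub-3-2-thresholds}. Your explicit $\epsilon$-relaxed-LP and continuity step is a more careful rendering of what the paper handles by simply dropping the $(1\pm\epsilon)$ factors in the limit $n\rightarrow\infty$, but it is the same proof in substance.
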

\begin{proof}
    The objective of the LP comes from Equation~\eqref{eq:alg-lb-2-thresholds}, the lower bound for $\ALG$, where we omit the $(1-\epsilon)$ term since we have $\epsilon \rightarrow 0$ when $n\rightarrow \infty$.
    We will also omit the $(1+\epsilon)$ terms in the following for the same reason.
    The first set of constraints follows straightforwardly from the definitions of the parameters.
    By scaling we can assume w.l.o.g. that $\OPT=1$. Note that our algorithm does not need to know the value of $\OPT$ to decide the parameters.
    Therefore, Equation~\eqref{eq:opt-constr-2-thresholds} gives the second constraint.
    The three proceeding sets of constraints follow from Equations~\eqref{eq:delta-ub-1-2-thresholds}, \eqref{eq:delta-ub-2-2-thresholds} and \eqref{eq:delta-ub-3-2-thresholds}, the three upper bounds on $\delta$ that we have established in Section~\ref{ssec:opt-ub-2-thresholds}.
    
    Observe that every distribution $D$ induces a set of variables $\{\theta_{1}, \theta_2, \Delta(c_{1}), \Delta(c_{2}), \delta\}$, which will form a feasible solution to the LP as they must abide by the corresponding constraints.
    Since the objective of any feasible solution induced by distribution $D$ provides a lower bound on $\ALG/\OPT$, 
    the optimal (minimum) objective of the LP provides a lower bound on the competitive ratio, i.e., the worst-case performance against all distributions. 
\end{proof}

With the above result, it remains to set the appropriate parameters of the two-threshold algorithm such that the optimal value of the LP is as large as possible.

\begin{theorem}\label{thm:LP-params-2-thresholds}
The two-threshold algorithm achieves a competitive ratio of at least $0.6786$, with parameters $c_1 = 0.7067$, $c_2 = 1.8353$, and $\rho = 0.6204$.
\end{theorem}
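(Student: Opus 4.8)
The plan is to invoke Theorem~\ref{thm:LP-lb-2-thresholds}, which reduces the task to exhibiting parameters $c_1, c_2, \rho$ for which the optimal value of the associated minimization LP is at least $0.6786$. First I would plug in the claimed values $c_1 = 0.7067$, $c_2 = 1.8353$, $\rho = 0.6204$, so that all the LP coefficients become explicit numerical constants; in particular the objective becomes a fixed linear functional of $(\theta_1, \theta_2, \Delta(c_1), \Delta(c_2), \delta)$, and the constraint family indexed by $c \in [c_1, c_2]$ becomes a one-parameter family of linear inequalities with $\gamma = \gamma(c) = \frac{c - c_1}{e^{-c} - e^{-c_2}}$.

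The core of the argument is to lower bound the optimal value of this (semi-infinite) LP. The clean way is to pass to the dual: since this is a minimization LP, any feasible dual solution (a choice of nonnegative multipliers for the constraints, where the multiplier on the $c$-indexed family is a nonnegative measure on $[c_1, c_2]$) yields a valid lower bound on the primal optimum. Concretely, I would select a small finite set of values $c \in [c_1, c_2]$ — in practice one or two well-chosen points, found by inspecting which constraints are tight at the primal optimum returned by an LP solver — and produce explicit dual multipliers $y_1, \dots, y_k \ge 0$ such that the weighted combination of the constraint right-hand sides equals $0.6786$ (or more) while the weighted combination of the constraint left-hand sides is, coefficient by coefficient in each of $\theta_1, \theta_2, \Delta(c_1), \Delta(c_2), \delta$, dominated by the objective's coefficients. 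Verifying this is a finite check: five coefficient inequalities plus nonnegativity of the multipliers plus the value computation. Equivalently and perhaps more transparently for the write-up, one can instead argue on the primal side by a case analysis on which of the bounding constraints are active, reducing to a handful of vertices of the feasible polyhedron and checking the objective value at each; the LP solver is used only to locate the relevant vertex/active set, and the final verification is by hand.

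The main obstacle I anticipate is twofold. First, the constraint family is semi-infinite (one inequality per $c \in [c_1,c_2]$), so some care is needed to argue that restricting to finitely many $c$'s — or using a finitely-supported dual measure — loses nothing: one must check that the discretization used actually captures the binding constraint, e.g. by showing the relevant function of $c$ is monotone or by a continuity/compactness argument that the worst $c$ lies at a specific point. Second, the numbers are ungainly — the multipliers and the threshold values at the optimal vertex will be messy decimals — so the bookkeeping in the coefficient-domination check is error-prone; I would organize it as a short table of the five coefficients on each side and verify each inequality separately, and note that since we only claim $\ge 0.6786$ rather than the exact LP optimum, a modest slack absorbs rounding. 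A closing remark would point out that these particular parameters were obtained numerically by maximizing the LP value over $(c_1, c_2, \rho)$, and that the stated bound is conservative (the true optimum is marginally higher), which is why the hand-verification has room to spare.
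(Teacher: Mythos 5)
Your proposal is correct and follows the same route as the paper: invoke Theorem~\ref{thm:LP-lb-2-thresholds}, substitute the stated $c_1,c_2,\rho$, and certify that the resulting LP has optimal value at least $0.6786$. The paper's own proof is in fact terser than yours — it consists of a single sentence stating that an online LP solver verifies the bound — so the two refinements you propose, namely extracting an explicit dual certificate so the verification becomes a finite hand-check, and justifying that discretizing the semi-infinite constraint family indexed by $c\in[c_1,c_2]$ captures the binding constraint, go beyond what the paper actually does; both would strengthen the argument rather than diverge from it.
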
 
\begin{proof}
    By Theorem~\ref{thm:LP-lb-2-thresholds}, given the values of $c_1$, $c_2$, and $\rho$, the objective value of the factor revealing LP bounds the competitive ratio from below.
    Using an online LP solver\footnote{Online LP solver: \url{https://online-optimizer.appspot.com/}.}, it can be verified that the optimal value of the LP is at least $0.6786$.
\end{proof}

\subsection{An Upper Bound for Two-Threshold Algorithms}\label{ssec:ub-k=2}

Finally, we complement our algorithmic results with an upper bound of $0.7081$ on the competitive ratio for any algorithm that uses two thresholds from two queries.
Specifically, we construct three distributions and prove that any such algorithm, parameterized by $c_1,c_2$ and $\rho$, cannot achieve a competitive ratio better than $0.7081$ in all three distributions. 

\begin{lemma} \label{lem:2-threshold:upper}
Any algorithm that defines two thresholds by making two queries cannot achieve a competitive ratio better than $0.7081$.
\end{lemma}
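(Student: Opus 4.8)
\textbf{Proof proposal for Lemma~\ref{lem:2-threshold:upper}.}
The plan is to exhibit a small family of hard distributions—three should suffice—parameterized so that no single choice of $(c_1, c_2, \rho)$ does well against all of them simultaneously, then take the maximum over the algorithm's parameters of the minimum competitive ratio over the three instances. Following the spirit of the classical Kertz-type lower bound constructions, I would use distributions supported essentially on two scales: a ``bulk'' value that occurs with probability close to $1$, and a ``spike'' value that occurs with probability $\Theta(1/n)$, so that the expected maximum $\OPT$ is governed by $\bbE[(x - \text{bulk})^+]$ and the arrival of spikes is Poissonized as $n \to \infty$. Concretely, I expect each instance $D_j$ to be of the form: value $0$ (or a tiny base value) with the remaining mass, value $a_j$ with probability $\approx \alpha_j/n$, and value $b_j \gg a_j$ with probability $\approx \beta_j/n$, chosen so that the three instances ``disagree'' about where the optimal thresholds should sit. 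The key point is that a two-threshold algorithm reduces, in the $n\to\infty$ limit, to choosing two acceptance rates $c_1 < c_2$ and a split $\rho$; on a distribution of this Poissonized form, $\ALG$ becomes a clean closed-form expression in $c_1, c_2, \rho$ and the instance parameters (exactly the kind of expression appearing in Lemma~\ref{lem:ALG-lb-2-thresholds}, but now an equality rather than a bound), while $\OPT$ is a fixed constant for each instance.

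The steps, in order, would be: (1) fix the three distributions explicitly with concrete numerical parameters (obtained by numerically solving the minimax problem, then hard-coding the near-optimal values); (2) for each instance $D_j$, compute $\OPT_j = \bbE[\max_i x_i]$ in the $n\to\infty$ limit—this is just $\int_0^\infty \Pr[x^* \ge t]\,dt$ evaluated on the two-scale distribution, giving something like $b_j(1 - e^{-\beta_j}) + a_j e^{-\beta_j}(1 - e^{-\alpha_j})$ plus lower-order terms; (3) for arbitrary parameters $(c_1, c_2, \rho)$, compute $\ALG_j(c_1, c_2, \rho)$ exactly as a function of which thresholds fall in which of the intervals $[0, a_j), [a_j, b_j), [b_j, \infty)$—there are finitely many cases depending on the ordering of $v(1-c_1/n), v(1-c_2/n)$ relative to $a_j, b_j$; (4) argue that for every $(c_1, c_2, \rho)$ there is some $j$ with $\ALG_j/\OPT_j \le 0.7081$, i.e., the upper envelope of the three ratio functions never exceeds $0.7081$. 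Step (4) is where I would lean on a verified numerical optimization: show that $\max_{c_1,c_2,\rho} \min_j \ALG_j(c_1,c_2,\rho)/\OPT_j \le 0.7081$, ideally by pinning down the maximizing $(c_1^\star, c_2^\star, \rho^\star)$ and checking all three ratios there, plus a convexity/monotonicity argument (or a fine grid with Lipschitz bounds) to rule out other maximizers.

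The main obstacle is step (4): the function $(c_1, c_2, \rho) \mapsto \min_j \ALG_j/\OPT_j$ is piecewise-smooth with kinks where the $\arg\min$ switches and where threshold orderings change, so a fully rigorous argument needs either (a) to identify the active constraints at the optimum—typically two of the three instances are tight there, and one of $c_1$'s or $\rho$'s first-order conditions—and solve the resulting small system, or (b) to establish a Lipschitz constant for $\ALG_j/\OPT_j$ in the parameters and do a rigorous grid search. I would go with route (a): guess (from the numerics) which two instances are tight, set up the $2+1$ equations, solve, and verify the third instance's ratio is also $\le 0.7081$ there while checking the KKT/first-order conditions confirm a local max. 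A secondary technical annoyance is handling the $n\to\infty$ limit cleanly—one should phrase everything as ``for every $\eta > 0$ there exists $n_0$ such that for $n \ge n_0$ the ratio is at most $0.7081 + \eta$ on instance $D_j$,'' using $(1 - c/n)^{\rho n} \to e^{-\rho c}$ and the fact that the contribution of multiple spikes in an $O(1)$-length window is $O(1/n)$. The design of the three distributions in step (1) is the creative part but follows a standard template; once they are fixed, steps (2)–(3) are routine closed-form computations and step (4) is the one requiring genuine care.
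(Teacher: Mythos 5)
Your overall strategy is the same as the paper's: two-scale ``bulk plus rare spike'' instances, Poissonized closed forms for $\ALG$ (the equality version of the expression in Lemma~\ref{lem:ALG-lb-2-thresholds}) and for $\OPT$, and a numerically verified max--min over $(c_1,c_2,\rho)$ with a discretization/Lipschitz error argument. The paper indeed uses three constructions of exactly this flavor: $D_1$ puts value $\approx M$ with probability $\tfrac{1}{nM}$ and $\approx 0$ otherwise (so $\theta_1,\theta_2\approx 0$, $\Delta_1\approx\Delta_2\approx 1$), $D_2$ is essentially deterministic at $1$, and the third puts a spike of value $\tfrac{1}{1-e^{-c}}$ with probability $\tfrac{c}{n}$; it then checks $\max_{c_1\in[0,2],\,c_2\in[0,30],\,\rho}\min\{UB_1,UB_2,UB_3\}\le 0.70806$ by enumeration over a fine grid.

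The one place where your plan genuinely deviates, and where it is at risk of failing, is that you commit to three \emph{fixed} distributions chosen in advance (hard-coded $a_j,b_j,\alpha_j,\beta_j$). In the paper the third instance is a \emph{family} $D_3(c)$ whose spike quantile $c$ is chosen adversarially \emph{after} seeing the algorithm's quantiles, with $c\in(c_1,c_2)$, and the bound uses $UB_3:=\min_{c\in(c_1,c_2)}UB_3(c)$. This adaptivity is what drives the constant: the damaging configuration is precisely when a spike quantile is straddled by $c_1$ and $c_2$ (so $\theta_1$ sits at the spike while $\theta_2$ collapses into the bulk). If the spike quantiles are frozen, the algorithm can re-optimize $(c_1,c_2)$ to avoid straddling any of the finitely many fixed spike quantiles (e.g., place both quantiles in a gap between them), and the binding pair then reduces essentially to the $D_1$/$D_2$ trade-off, whose max--min is well above $0.7081$; so the value of your fixed three-instance minimax may strictly exceed the claimed bound, proving a weaker statement. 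The repair is small and keeps the rest of your plan intact: make the third instance a one-parameter family indexed by its spike quantile, and take the infimum over that parameter (restricted to $(c_1,c_2)$) inside the inner minimum before running your step (4); with that change your steps (2)--(4) coincide with the paper's argument.
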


\section{Beating $1-1/e$ using a Single Query}
\label{sec:observe-and-accept}

	In this section, we show that it is possible to beat the competitive ratio of $1-1/e$ using just one query. 
	Our algorithm combines the single-threshold algorithm and the observe-then-accept algorithm for the secretary problem.
	Let $c\in [0,n]$ and $\rho\in [0,1]$ be the parameters of the algorithm.
	Again, for sufficiently large $n$ we can assume w.l.o.g. that $\rho \cdot n$ is an integer.
	The algorithm, denoted by {\em observe-and-accept}, works as follows.
	Let $\theta_1 = v(1-c/n)$. 
	We have $\Pr_{x\sim D}[x\ge \theta_1] = G(\theta_1) = c/n$.
	In the quantile-based phase, for $i=1,2,\ldots,\rho\cdot n$, we accept $x_i$ if and only if $x_i\geq \theta_1$.
	If no variables are accepted in the quantile-based phase, we enter the observation-based phase and let $\theta_2 = \max_{i\leq \rho\cdot n}\{x_i\}$. Note that $\theta_2$ is a random variable and $\theta_2 < \theta_1$.
 	For $i=\rho\cdot n+ 1,\ldots,n$, we accept $x_i$ if and only if $x_i\geq \theta_2$ (see Algorithm \ref{alg:Observe-and-Accept}).

\begin{algorithm}[ht]
\caption{Observe-and-Accept}\label{alg:Observe-and-Accept}
\textbf{Input:} {Number of items $n$, quantile choice $c$ and observation portion $\rho$.}

\textbf{Output:} {A variable selected by the gambler.}
	
Set $\theta_1 := v(1-c/n)$ by making a query to the oracle; 
	\For{$i = 1,2,\ldots,\rho\cdot n$}{ 
			   \If{$x_i \geq \theta_1$}{
		   \textbf{return} $x_i$ and \textbf{terminate}; \qquad\qquad {\color{gray}// quantile-based phase}}}

Set $\theta_2 := \max_{i\leq \rho\cdot n}\{x_i\}$;
	
	\For{$i = \rho\cdot n + 1,\ldots,n$}
 {	\If{$x_i \geq \theta_2$}{ 
		  \textbf{return} $x_i$ and \textbf{terminate}; \qquad\qquad {\color{gray}// observation-based phase}}
}
\end{algorithm}
	
	\begin{theorem}\label{thm:single-qeury-beat-1-1/e}
		There exists a $0.6718$-competitive algorithm for the prophet inequality problem on unknown i.i.d.  distributions that makes a single query to the distribution oracle.
	\end{theorem}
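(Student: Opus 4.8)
\textbf{Proof proposal for Theorem~\ref{thm:single-qeury-beat-1-1/e}.}
The plan is to mirror the factor-revealing LP approach used for the two-threshold algorithm, but to handle the crucial difference that the second threshold $\theta_2 = \max_{i \le \rho n}\{x_i\}$ is now a random variable rather than a fixed quantile. First I would analyze the \emph{quantile-based phase} exactly as in Lemma~\ref{lem:ALG-lb-2-thresholds}: conditioned on reaching step $i \le \rho n$, the acceptance probability is $G(\theta_1) = c/n$ and the conditional expected gain is $\theta_1 + \frac{\Delta(c)/n}{G(\theta_1)}$, so the contribution of this phase to $\ALG$ converges to $(1 - e^{-\rho c})\bigl(\theta_1 + \frac{\Delta(c)}{c}\bigr)$ as $n \to \infty$. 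The event that the algorithm survives the first phase without accepting has probability $F(\theta_1)^{\rho n} \to e^{-\rho c}$, and conditioned on that event, $\theta_2$ is distributed as the maximum of $\rho n$ i.i.d.\ draws conditioned to lie below $\theta_1$.

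The heart of the proof is lower-bounding the contribution of the \emph{observation-based phase}. Conditioned on surviving phase 1 with a realized value $\theta_2 = t < \theta_1$, the second phase behaves like a single-threshold run of length $(1-\rho)n$ with threshold $t$, so its conditional expected gain is roughly $\bigl(1 - F(t)^{(1-\rho)n}\bigr)\bigl(t + \frac{\bbE[(x-t)^+]}{G(t)}\bigr)$; one would also want to account for the fact that $t$ itself is a feasible fallback realization (it was observed in phase~1, though by the usual prophet-inequality convention it cannot be retroactively accepted — so the bound comes purely from phase~2 acceptances). The obstacle is that this expression must be integrated against the (unknown) distribution of $\theta_2$, and $F(t)^{(1-\rho)n}$ is highly nonlinear in $t$. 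To linearize, I would follow the excerpt's stated strategy: partition the range $[0, \theta_1]$ of $\theta_2$ into finitely many segments with boundary points $\theta_2^{(0)} = 0 < \theta_2^{(1)} < \cdots < \theta_2^{(k)} = \theta_1$, introduce these boundary values (and the associated partial expectations $\Delta(\cdot)$ and tail masses at those points) as new LP variables, and on each segment replace the curved quantities by their values at the segment endpoints, using monotonicity of $F$, $v$, and $\Delta$ to keep the replacement a valid lower bound on $\ALG$. The probability that $\theta_2$ falls in segment $j$ is expressible via $F$ at the endpoints (a ratio of powers, which again converges to a clean exponential form), so each segment contributes a term that is linear in the new LP variables.

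Next I would assemble the $\OPT$ upper bounds exactly as before: $\OPT = \int_0^\infty \Pr[x^* \ge t]\,dt$ split at the segment boundaries, giving $\OPT \le \theta_1 + \Delta(c)$ together with the refined segment-wise inequalities analogous to Equation~\eqref{eq:opt-constr-2-thresholds} and the $\delta$-type bounds of Lemmas~\ref{lem:delta-ub-1-2-thresholds}--\ref{lem:delta-ub-3-2-thresholds}, now applied to each consecutive pair of boundary points. Normalizing $\OPT = 1$, this yields a minimization LP whose objective is the assembled lower bound on $\ALG$ and whose feasible region contains the variable-tuple induced by every distribution $D$; hence the LP optimum is a valid lower bound on the competitive ratio, by the same argument as in Theorem~\ref{thm:LP-lb-2-thresholds}. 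Finally, I would fix a good choice of $c$ and $\rho$ (and a sufficiently fine segmentation) and certify via an LP solver that the optimal value is at least $0.6718$; taking $n \to \infty$ sends all the $\epsilon$ slack terms to zero and completes the proof.

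I expect the main obstacle to be the second paragraph: setting up the segmentation of the random threshold $\theta_2$ so that (i) every curved term is replaced by a genuine lower bound via a monotonicity argument, (ii) the segment-probability weights are correctly identified with the limiting exponential expressions, and (iii) enough cross-constraints relating the new boundary variables survive so that the resulting LP is actually strong enough to beat $1-1/e$ rather than collapsing to a weaker bound. The $\OPT$ side and the final numerical optimization are comparatively routine once the $\ALG$ lower bound is in place.
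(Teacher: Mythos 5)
Your proposal follows essentially the same route as the paper: bound the quantile-based phase exactly as in the two-threshold analysis, discretize the range of the random threshold $\theta_2$, use monotonicity of $F$, $v(\cdot)$ and $\Delta(\cdot)$ to lower bound the conditional second-phase gain on each segment by an expression linear in the endpoint quantities, add the segment-wise $\delta$-type constraints on $\OPT$, and certify the resulting factor-revealing LP numerically for a good choice of $c$, $\rho$ and a fine segmentation (the paper uses $c=0.72941$, $\rho=0.64863$, $k=100$, $\beta_i=1.03^i\cdot c$). The one point to be precise about — which the paper handles by fixing the segment boundaries at constant \emph{quantiles} $\beta_i$, so that the segment probabilities and tail masses are constants (converging to $e^{-\beta_{i-1}\rho}-e^{-\beta_i\rho}$ and $\beta_i/n$) and only the values $v_i$ and partial expectations $\Delta_i$ enter as LP variables — is that treating the tail masses at the boundary points as LP variables, as your parenthetical suggests, would make the constraints nonlinear.
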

	
	We prove the theorem in the following two subsections.
	
    \subsection{Bounding $\ALG$ and $\OPT$}
	
    We first give lower bounds for $\ALG$ and upper bounds for $\OPT$, which will then become the constraints of the factor revealing LP we study in the next subsection.
    Recall Equation~\eqref{eq:alg-decomp} from Section~\ref{sec:warmup}, in which we have $\ALG = \sum_{i=1}^n \Pr[\tau = i]\cdot \bbE[x_i \mid \tau = i]$,
	where $\tau \in [n+1]$ is the stopping time of the algorithm.
	For $i\leq \rho\cdot n$, the algorithm accepts  variable $x_i$ if its realization is at least $\theta_1$, which happens with probability $c/n$.
	By a similar analysis seen in Equations~\eqref{eq:alg-prob} and \eqref{eq:alg-gain}, we have 
	\begin{align*}
	    \Pr[\tau = i] = \left(1-\frac{c}{n}\right)^{i-1}\cdot \frac{c}{n}, & \quad \forall i \le \rho\cdot n, \quad \text{ and } \\
		\bbE[x_i \mid \tau = i]=  \theta_1 +  \frac{\Delta(c)}{c}, & \quad \forall i \le \rho\cdot n.
	\end{align*}
	Therefore, the expected gain of the algorithm from the quantile-based phase is given by
	\begin{align}
		\Lambda_1 & := \sum_{i=1}^{\rho n} \Pr[\tau = i]\cdot \bbE[x_i \mid \tau = i] = \sum_{i=1}^{\rho n} (1-\frac{c}{n})^{i-1}\cdot \left( \frac{c}{n}\cdot \theta_1 + \frac{\Delta(c)}{n}  \right) \nonumber \\
		& = \frac{1-(1-c/n)^{\rho n}}{c/n} \cdot \left(\frac{c}{n} \cdot \theta_1 + \frac{\Delta(c)}{n} \right) = \left( 1-(1-\frac{c}{n})^{\rho n} \right) \cdot \left(\theta_1 + \frac{\Delta(c)}{c}  \right) \nonumber \\
		& \geq  \left( 1 - e^{-c \rho} \right) \cdot \left(\theta_1 + \frac{\Delta(c)}{c}  \right). \label{equation:lower-bound-Lambda-1}
	\end{align}
	
	Next, we give a lower bound for the expected gain of the algorithm from the observation-based phase, which is given by $\Lambda_2 := \sum_{i=\rho n+1}^n \Pr[\tau = i]\cdot \bbE[x_i \mid \tau = i]$.
	Unlike the quantile-based phase, the threshold $\theta_2 = \max_{i\leq \rho n} \{ x_i \}$ is a random variable and thus $F(\theta_2)$ is not a fixed probability.
	Our algorithm enters the observation-based phase if and only if $\theta_2 < \theta_1$, i.e., no variable passes the threshold $\theta_1$ in the quantile-based phase.
	Note that
	\begin{equation*}
		\Pr[\theta_2 < \theta_1] = \Pr[\forall i\leq \rho n, x_i < \theta_1] = (1-\frac{c}{n})^{\rho n} \geq (1-\epsilon)\cdot e^{-c \rho},
		\end{equation*}
	where $\epsilon>0$ is an arbitrarily small constant given that $n$ is sufficiently large.
	Moreover, given a realization of $\theta_2 < \theta_1$, we can express the gain of the algorithm $h(\theta_2)$ in the observation-based phase using a similar argument as above.
	Specifically, conditioned on a given $\theta_2 < \theta_1$, we have
	\begin{equation*}
		h(\theta_2) := \sum_{i=\rho n+1}^{n} \Pr[\tau = i \mid \tau> \rho n]\cdot \bbE[x_i \mid \tau = i]  
		= \sum_{i=1}^{(1-\rho)n} F(\theta_2)^{i-1}\cdot \left(G(\theta_2)\cdot \theta_2 + \bbE[(x - \theta_2)^+]\right).
	\end{equation*}
	In summary, we can lower bound the expected total gain of the algorithm by
	\begin{align*}
		\ALG & = \left( 1-(1-\frac{c}{n})^{\rho n} \right) \cdot \left(\theta_1 + \frac{\Delta(c)}{c}  \right) + \left(1-\frac{c}{n}\right)^{\rho n}\cdot \bbE_{\theta_2} [h(\theta_2) \mid \theta_2 < \theta_1] \\
		& \geq \left( 1-e^{-c \rho} \right) \cdot \left(\theta_1 + \frac{\Delta(c)}{c}\right) + (1-\epsilon)\cdot e^{- c \rho } \cdot \bbE_{\theta_2} [h(\theta_2) \mid \theta_2 < \theta_1].
	\end{align*}
	
	Unfortunately, it is challenging to derive a closed-form expression of the conditional expectation $\bbE_{\theta_2} [h(\theta_2) \mid \theta_2 < \theta_1]$ in terms of $c$ and $\rho$.
	To overcome this difficulty, we relax this term and lower bound it 
	using a factor revealing LP.
	Specifically, consider the case when
	\begin{equation*}
		v(1-\frac{\beta_2}{n}) \leq \theta_2 \leq v(1-\frac{\beta_1}{n}),
	\end{equation*}
	where $\beta_1,\beta_2$ are constants satisfying $c < \beta_1 < \beta_2$.
	Recall that $F(v)$ and $v(q) = F^{-1}(q)$ are increasing functions of $v$ and $q$, and $G(v)$ and $\bbE[(x-\theta)^+]$ are decreasing functions of $v$ and $\theta$.
	We can lower bound $h(\theta_2)$ by
	\begin{align*}
		h(\theta_2) & = \sum_{i=1}^{(1-\rho)n} F(\theta_2)^{i-1}\cdot \left(G(\theta_2)\cdot \theta_2 + \bbE[(x - \theta_2)^+]\right) \\
		& \geq \sum_{i=1}^{(1-\rho)n} F\left( v(1-\frac{\beta_2}{n}) \right)^{i-1}\cdot \left\{G\left( v(1-\frac{\beta_1}{n}) \right)\cdot v(1-\frac{\beta_2}{n}) + \bbE\left[ \left(x - v(1-\frac{\beta_1}{n}) \right)^+ \right]\right\} \\
		& \geq \sum_{i=1}^{(1-\rho)n} \left( 1-\frac{\beta_2}{n} \right)^{i-1}\cdot \left( \frac{\beta_1}{n} \cdot v(1-\frac{\beta_2}{n}) + \frac{\Delta(\beta_1)}{n}  \right) \\
		& \geq \frac{1- \left( 1-\frac{\beta_2}{n} \right)^{(1-\rho)n}}{\beta_2 / n} \cdot \left( \frac{\beta_1}{n} \cdot v(1-\frac{\beta_2}{n}) + \frac{\Delta(\beta_1)}{n} \right) \\
		& \geq \left( 1- \left( 1-\frac{\beta_2}{n} \right)^{(1-\rho)n} \right) \cdot \left( \frac{\beta_1}{\beta_2}\cdot v(1-\frac{\beta_2}{n}) + \frac{ \Delta(\beta_1)}{\beta_2}  \right) \\
		& \geq \left( 1- e^{-\beta_2 (1-\rho)} \right) \cdot \left( \frac{\beta_1}{\beta_2}\cdot v(1-\frac{\beta_2}{n}) + \frac{\Delta(\beta_1)}{\beta_2}  \right) 
		:= H(\beta_1, \beta_2).
	\end{align*}
	
	The important observation here is that $H(\beta_1,\beta_2)$ depends only on $c, \beta_1$, $\beta_2, n$ and the distribution $D$, and is independent of $\theta_2$.
	Moreover, it is linear in $v(1-\beta_2/n)$ and $\Delta(\beta_1)$.
	Note that conditioned on $\theta_2 < \theta_1$, the event of $v(1-\beta_2/n) \leq \theta_2 \leq v(1-\beta_1/n)$, where $c<\beta_1<\beta_2$, happens with probability
	\begin{align*}
		&\ \Pr\left[v(1-\frac{\beta_2}{n}) \leq \theta_2 \leq v(1-\frac{\beta_1}{n}) \bigm| \theta_2 < \theta_1\right] \\
		= &\ \frac{\Pr[v(1-\frac{\beta_2}{n}) \leq \theta_2 \leq v(1-\frac{\beta_1}{n})]}{\Pr[\theta_2 < \theta_1]}  =  \frac{(1-\frac{\beta_1}{n})^{\rho n} - (1-\frac{\beta_2}{n})^{\rho n}}{(1-\frac{c}{n})^{\rho n}} := p(\beta_1, \beta_2).
	\end{align*}
		
	Therefore, for any sequence $c = \beta_0 < \beta_1 < \beta_2 < \cdots < \beta_k$ and sufficiently large $n$,  we can lower bound the gain of  the observation-based phase by
	\begin{align}
		\Lambda_2 & \geq \left(1-\frac{c}{n}\right)^{\rho n}\cdot \sum_{i=1}^k p(\beta_{i-1},\beta_i)\cdot H(\beta_{i-1},\beta_i) 
\nonumber \\
    & = \sum_{i=1}^k \left( (1-\frac{\beta_{i-1}}{n})^{\rho n} - (1-\frac{\beta_i}{n})^{\rho n} \right) \cdot H(\beta_{i-1},\beta_i) \nonumber \\ 
		& \geq (1-\epsilon) \sum_{i=1}^k \left( e^{-\beta_{i-1} \rho} - e^{-\beta_i \rho} \right) \cdot \left( 1- e^{-\beta_i (1-\rho)} \right) \cdot \left( \frac{\beta_{i-1}}{\beta_i}\cdot v(1-\frac{\beta_i}{n}) + \frac{\Delta(\beta_{i-1})}{\beta_i}  \right). \label{equation:lower-bound-Lambda-2}
	\end{align}
	
	So far, we have given a lower bound for $\ALG$ in Equations~\eqref{equation:lower-bound-Lambda-1} and~\eqref{equation:lower-bound-Lambda-2}.
	In the following, we give upper bounds for $\OPT$.
	Following similar analyses as in Section~\ref{ssec:2-queries}, we define
	\begin{equation*}
	    \delta_i = \int_{v(1-\frac{\beta_i+1}{n})}^{v(1-\frac{\beta_i}{n})} \Pr[x\geq t] dt,\qquad  \forall 0\leq i \leq k-1.
	\end{equation*}
	Then we have
	\begin{align*}
	    \OPT & = \int_{0}^{v(1-\frac{\beta_k}{n})} \Pr[x^* \geq t] dt + \sum_{i=0}^{k-1} \int_{v(1-\frac{\beta_{i+1}}{n})}^{v(1-\frac{\beta_i}{n})} \Pr[x^* \geq t] dt + \int_{v(1-\frac{\beta_0}{n})}^{\infty} \Pr[x^* \geq t] dt \\ 
	    &\leq v(1-\frac{\beta_k}{n}) + \sum_{i=0}^{k-1} \delta_i + \Delta(\beta_0).
	\end{align*}
	
	Following the same analyses for proving Lemmas~\ref{lem:delta-ub-1-2-thresholds},~\ref{lem:delta-ub-2-2-thresholds} and~\ref{lem:delta-ub-3-2-thresholds}, we obtain the following: for arbitrarily small $\epsilon>0$ and sufficiently large $n$ and for all $i\in\{ 0,1,\ldots,k-1 \}$,
	\begin{align*}
	    \delta_i & \leq (1+\epsilon)\cdot(1-e^{-\beta_{i+1}})\cdot \left(v(1-\frac{\beta_i}{n}) - v(1-\frac{\beta_i+1}{n}) \right), \\
	    \delta_i & \leq \Delta(\beta_{i+1}) - \Delta(\beta_i).
	\end{align*}
	For all $\zeta\in [\beta_i, \beta_{i+1}]$ and $\gamma = \frac{\zeta - \beta_{i}}{e^{-\zeta} - e^{-\beta_{i+1}}}$, we have
	\begin{equation*}
	    \gamma\cdot \delta_i/(1+\epsilon) \leq \Delta(\beta_{i+1}) - \Delta(\beta_i) - (\beta_i - \gamma(1-e^{-\zeta}))\cdot \left(v(1-\frac{\beta_i}{n}) - v(1-\frac{\beta_i+1}{n}) \right).
	\end{equation*}

	\subsection{Factor Revealing LP}
	\label{sec:single:FactorRevealingLP}
	We finish the proof of Theorem~\ref{thm:single-qeury-beat-1-1/e} by formulating the lower and upper bounds into a factor revealing LP and choosing appropriate parameters to obtain a lower bound of $0.6718$ on the competitive ratio.
	In the following, we fix $\beta_i = \beta^i\cdot c$ for all $i=0,1,\ldots, k$, for some $\beta>1$.
	For ease of notation, 
	we use $v_i = v(1-\beta_i/n)$ and $\Delta_i = \Delta(\beta_i)$, for all $i=0,1,\ldots,k$.
	Note that $c= \beta_0$, $\theta_1 = v(1-\beta_0/n) = v_0$ and $\Delta(c) = \Delta_0$.
	Recall that $v_0 \geq v_1\geq \cdots \geq v_k$ and $\Delta_0 \leq \Delta_1\leq \cdots \leq \Delta_k$.
	By fixing $c$ and $k$, we uniquely define a sequence $\beta_0, \beta_1,\ldots,\beta_k$. However, the values of $\{ v_i,\Delta_i \}_{0\leq i\leq k}$ depend on the unknown  distribution $D$. 
	As argued in Section~\ref{ssec:comp-ratio-2-thresholds},
	by taking the lower bound on $\ALG$ as the objective and upper bounds on $\OPT$ as  constraints, we obtain an LP whose optimal value provides a lower bound for the competitive ratio of the algorithm.
	The LP variables are $\{ v_i,\Delta_i \}_{0\leq i\leq k}$ and $\{ \delta_i \}_{0\leq i< k}$.

	\smallskip
	
	Recall that we have lower bounded $\ALG = \Lambda_1 + \Lambda_2$, where 
	\begin{align*}
		\Lambda_1 & \geq  \left( 1 - e^{-\beta_0 \rho} \right) \cdot (v_0 + \Delta_0/\beta_0 ),\qquad \\
		\Lambda_2 & \geq (1-\epsilon) \sum_{i=1}^k \left( e^{-\beta_{i-1} \rho} - e^{-\beta_i \rho} \right) \cdot \left( 1- e^{-\beta_i (1-\rho)} \right) \cdot ( (\beta_{i-1}/\beta_i)\cdot v_i +  \Delta_{i-1}/\beta_i ),
	\end{align*}
 are both linear in $\{ v_i,\Delta_i \}_{0\leq i\leq k}$.
	We have also upper bounded $\OPT$ by 
	\begin{equation*}
	    \OPT \leq v(1-\frac{\beta_k}{n}) + \sum_{i=0}^{k-1} \delta_i + \Delta_0,
	\end{equation*}
	and each $\delta_i$, where $i\in\{0,1,\ldots,k-1\}$, by (recall that $\gamma = \frac{\zeta - \beta_{i}}{e^{-\zeta} - e^{-\beta_{i+1}}}$)
	\begin{align*}
	    \delta_i & \leq (1+\epsilon)\cdot(1-e^{-\beta_{i+1}})\cdot (v_i - v_{i+1}), \\
	    \delta_i & \leq \Delta(\beta_{i+1}) - \Delta(\beta_i), \\
	    \gamma\cdot \delta_i/(1+\epsilon) & \leq \Delta_{i+1} - \Delta_i - (\beta_i - \gamma(1-e^{-\zeta}))\cdot (v_i - v_{i+1}), \quad \forall \zeta\in [\beta_i,\beta_{i+1}].
	\end{align*}
	
	Again, we can assume w.l.o.g. that $\OPT = 1$.
	Since $\epsilon>0$ can be arbitrarily small when $n\rightarrow \infty$, the terms $(1-\epsilon)$ and $(1+\epsilon)$ can be removed.
	The factor revealing LP is as follows.
	\begin{align*}
		\text{minimize} \quad & \Lambda_1 + \Lambda_2 \\
		\text{subject to} \quad 
		& \textstyle  \Lambda_1 \geq  ( 1 - e^{-\beta_0 \rho} ) \cdot (v_0 + \Delta_0/\beta_0 ), \\
		&  \textstyle \Lambda_2 \geq  \sum_{i=1}^k \left( e^{-\beta_{i-1} \rho} - e^{-\beta_i \rho} \right) \cdot \left( 1- e^{-\beta_i (1-\rho)} \right) \cdot ( (\beta_{i-1}/\beta_i)\cdot v_i +  \Delta_{i-1}/\beta_i ), \\
		& \textstyle  1 \geq v_k + \sum_{i=0}^{k-1}\delta_i + \Delta_0 \\
		& \textstyle  \delta_i \leq \left(1- e^{-\beta_{i+1}} \right) \cdot ( v_{i} - v_{i+1} ), \qquad \forall 0\leq i\leq k-1  \\
		& \delta_i \leq \Delta_{i+1} - \Delta_i, \qquad \forall 0\leq i\leq k-1 \\
		& \gamma\cdot \delta_i \leq \Delta_{i+1} - \Delta_i - (\beta_i - \gamma(1-e^{-\zeta}))\cdot (v_i - v_{i+1}), \quad \forall 0\leq i\leq k-1, \zeta\in[\beta_i,\beta_{i+1}] \\
		& v_0 \geq v_1 \geq \cdots \geq v_k \geq 0,\\
		& \Delta_0,\delta_0,\ldots,\delta_{k-1} \geq 0,
	\end{align*}
	where $\rho$, $\{\beta_i\}_{0\leq i\leq k}$ and $\gamma = \frac{\zeta - \beta_{i}}{e^{-\zeta} - e^{-\beta_{i+1}}}$ are constants; $\Lambda_1, \Lambda_2$, $\{v_i, \Delta_i\}_{0\leq i\leq k}$, $\{\delta_i\}_{0\leq i< k}$ are variables of the LP.
	As argued, every distribution $D$ induces a feasible solution to the LP.
	Therefore, the optimal value of the LP provides a lower bound for the competitive ratio of our algorithm when $n\rightarrow \infty$.
	
	The following claim is verified by the online LP solver, which completes the proof of Theorem~\ref{thm:single-qeury-beat-1-1/e}.
	
	\begin{claim}\label{claim:single:opt}
		By fixing constants $c = 0.72941, \rho = 0.64863$, $k=100$ and $\beta_i = 1.03^i\cdot c$ for all $0\leq i\leq k$, the optimal solution to the above LP has objective at least $0.6718$.
	\end{claim}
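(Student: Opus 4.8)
The claim concerns a fully specified linear program: once the parameters $c = 0.72941$, $\rho = 0.64863$, $k = 100$ and $\beta_i = 1.03^i c$ are fixed, every coefficient of the LP from Section~\ref{sec:single:FactorRevealingLP} becomes an explicit (transcendental but computable) number, and the only genuine subtlety is that the third family of $\delta_i$-constraints is indexed by a continuum $\zeta \in [\beta_i,\beta_{i+1}]$. The plan is therefore to (i) reduce to an honestly finite LP, and then (ii) certify the lower bound $0.6718$ on its optimal value by exhibiting a dual-feasible point, so that the argument does not merely rely on trusting the solver's reported primal optimum.

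\emph{Step 1: discretizing the semi-infinite constraints.} For each $i$, the coefficient $\gamma = (\zeta - \beta_i)/(e^{-\zeta} - e^{-\beta_{i+1}})$ and the right-hand side $\Delta_{i+1} - \Delta_i - (\beta_i - \gamma(1 - e^{-\zeta}))(v_i - v_{i+1})$ are continuous in $\zeta$ on the compact interval $[\beta_i,\beta_{i+1}]$, whose length is only $(\beta - 1)\beta^i c$ with $\beta = 1.03$. I would replace $[\beta_i,\beta_{i+1}]$ by a fine finite grid $Z_i$ and then either (a) take $Z_i$ fine enough that relaxing the feasible region from the full interval to the grid moves the optimum by at most, say, $10^{-5}$ (using uniform continuity of the coefficients together with a standard perturbation bound for LPs), or (b) locate the worst-case $\zeta$ for each $i$ by a one-dimensional search and add only that value. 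This yields a finite LP $\mathcal{L}$ whose optimal value agrees, up to negligible error, with the one in the claim.

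\emph{Step 2: a dual certificate.} Since $\mathcal{L}$ is a minimization LP, weak LP duality guarantees that any dual-feasible vector $y^\star$ with dual objective at least $0.6718$ certifies that the optimal value of $\mathcal{L}$ is at least $0.6718$, with no reference to any primal solution. Concretely I would write out the dual (one dual variable per primal constraint: the two defining inequalities for $\Lambda_1$ and $\Lambda_2$, the normalization constraint coming from $\OPT = 1$, the three discretized $\delta_i$-families, the chain $v_0 \ge \cdots \ge v_k$, and the nonnegativity constraints), read an approximate optimal dual off the solver, round each component to a rational with small denominator, and then verify \emph{exactly} in rational arithmetic both that the rounded vector is dual-feasible and that its objective exceeds $0.6718$. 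To handle the transcendental data $e^{-\beta_i\rho}$, $e^{-\beta_i(1-\rho)}$, $1-e^{-\beta_i}$, I would replace each by a rigorous rational lower or upper bound (interval arithmetic), chosen in whichever direction makes the dual-feasibility check only harder; a modest gap between the true optimum ($\approx 0.6718\ldots$) and the claimed bound absorbs all such rounding.

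\emph{Step 3: boundedness, feasibility, and stability.} It remains to confirm that $\mathcal{L}$ is feasible and bounded, so that its optimal value is well defined: feasibility holds because every distribution $D$ induces a feasible point, as argued just before the claim, and boundedness follows from the structure --- telescoping gives $\sum_{i} \delta_i \le \Delta_k - \Delta_0$, so the normalization forces $v_k + \Delta_k \ge 1$, which via the lower bounds on $\Lambda_1$ and $\Lambda_2$ forces the objective up by a positive constant. I would also check that using $k = 100$ levels rather than more does not spuriously inflate the optimum (extra levels only add constraints of the same form), so that the bound genuinely reflects the original analysis rather than an artifact of truncation. The main --- and essentially only --- obstacle is the certification itself: cleanly discretizing the semi-infinite family in Step~1 and producing a numerically trustworthy dual solution in the presence of transcendental coefficients. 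The LP is small and otherwise routine; treating ``verified by an LP solver'' as sufficient makes the claim, and hence Theorem~\ref{thm:single-qeury-beat-1-1/e}, immediate, and the above is the route to a fully rigorous version.
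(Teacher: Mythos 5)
Your proposal is correct and follows essentially the same route as the paper: the claim is a purely computational statement about a fully specified LP, and the paper's entire proof is that the LP (with the stated constants, and the $\zeta$-indexed family handled by a finite discretization, which only enlarges the feasible region and hence can only decrease the minimum, so the direction is safe) is solved numerically and its optimum verified to be at least $0.6718$. The extra machinery you describe --- rounding a dual solution and checking weak-duality feasibility in exact rational/interval arithmetic --- is a more rigorous certification of the same computation than the paper's ``verified by the online LP solver,'' not a different argument.
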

	
    \paragraph{Remark}
	Suppose we are given a fixed quantile (i.e., a fixed  $c$) instead of being able to choose one, we may still beat $1-1/e$.
	Let us revisit the example in Figure~\ref{fig:c_ratio} in the following updated Figure~\ref{fig:c_ratio_rho}.
	By setting $k=100$ and $\beta_i = 1.03^i\cdot c$ for all $0\leq i\leq k$, we compute the corresponding separation point $\rho$ (the green-cross curve) and the optimal reward using the above LP (the blue-dot curve). 
	We see that for any $c$ between $0.5$ and $1$, the competitive ratio is better than $1-1/e$.
	This suggests that if we are not allowed to make queries but are given a fixed quantile, as long as it is within this range, the observe-and-accept algorithm can still do strictly better than the optimal single-threshold algorithm. 
	Taking a closer look at the cases when $c>1$, we see that the optimal $\rho$ is always 1, which means the algorithm only has the quantile-based phase using $v(1-c/n)$ as the threshold.
	This is because for $c>1$, $v(1-c/n)$ is already smaller than the threshold used in the optimal single-threshold algorithm, and thus it is worse off to add the observation-based phase whose threshold is even smaller than $v(1-c/n)$.
	{Finally, it is clear to note that if the given $c$ is very large, which means that the threshold for the quantile-based phase is  too small, the observe-and-accept algorithm may accept a very small value and perform badly.
	As shown in Figure~\ref{fig:c_ratio_rho}, when $c$ is greater than $2.5$, the competitive ratio for observe-and-accept is less than $1/e$.}
 
	\begin{figure}[ht]
	    \centering
	    \includegraphics[width=0.7\textwidth]{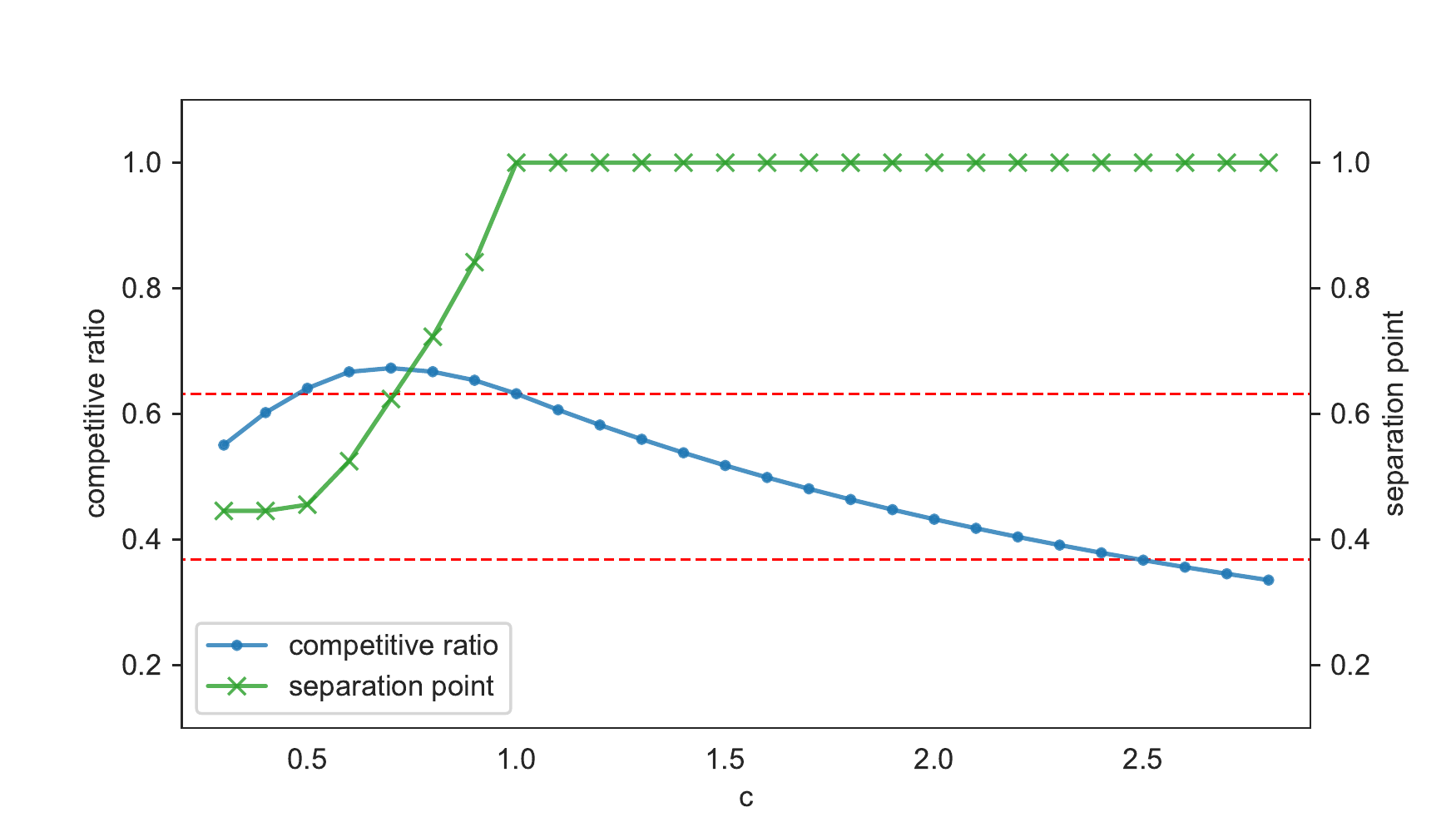}
	    \caption{Competitive ratios, as well as the corresponding best separation point $\rho$, for different quantiles at $1 - c/n$, by setting $k=100$ and $\beta_i = 1.03^i\cdot c$ for all $0\leq i\leq k$. (Figure \ref{fig:c_ratio} revisited).}
	    \label{fig:c_ratio_rho}
	\end{figure}

	\subsection{Upper Bound on the Competitive Ratio of the Algorithm}
	
	To complement our lower bound on the competitive ratio of the algorithm, we also explore upper bounds for this type of algorithm.
	In particular, we show the following.
	
	\begin{lemma}\label{lemma:upper-bound-observe-and-choose}
		No observe-and-accept algorithm can do better than $0.6921$-competitive.
	\end{lemma}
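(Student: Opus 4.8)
The plan is to exhibit a small family of distributions on which every observe-and-accept algorithm — parameterized by the quantile choice $c$ and the separation point $\rho$ — is forced to lose at least a $1-0.6921$ fraction of $\OPT$ on at least one of them. The overall structure will mirror the upper-bound argument for two-threshold algorithms (Lemma~\ref{lem:2-threshold:upper}): pick a handful of carefully scaled ``tiered'' distributions, write down the algorithm's expected gain on each as a function of $(c,\rho)$, and argue that no single pair $(c,\rho)$ can be simultaneously good on all of them. The key conceptual point specific to observe-and-accept is that the second-phase threshold $\theta_2$ is the running maximum of the first $\rho n$ samples, so its distribution is pinned down once $c$, $\rho$, and $D$ are fixed; this lets us compute $\bbE[\ALG]$ in closed form (in the $n\to\infty$ limit) on the chosen instances.

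First I would set up the canonical hard instances. A natural choice — analogous to the classic tightness examples for i.i.d. prophet inequalities — is a two- or three-point value distribution of the form: with probability $a_j/n$ the value is $1/a_j$ (a ``jackpot'' tier) and otherwise a small value, for a few choices of $a_j$; together with one ``continuous-looking'' instance where $x$ is, say, $1/u$ with the right density so that $\Pr[x^*\ge t]$ decays like a target curve. On each instance I would compute, in the $n\to\infty$ regime: (i) $\OPT=\bbE[x^*]$, which is a fixed constant per instance; (ii) the probability $e^{-c\rho}$ of reaching phase two and, conditioned on that, the distribution of the running max $\theta_2$; and (iii) the resulting expected gain, splitting $\ALG=\Lambda_1+\Lambda_2$ exactly as in Section~\ref{sec:observe-and-accept}. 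The ratio $\ALG/\OPT$ on each instance is then an explicit function $r_j(c,\rho)$.

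Next I would argue the minimax bound. The claim is $\min_j r_j(c,\rho)\le 0.6921$ for every $(c,\rho)\in[0,\infty)\times[0,1]$. The cleanest way to present this is: exhibit the specific instances, tabulate $r_j$ at the (numerically) optimal adversary response, and observe that the upper envelope $\max_{(c,\rho)}\min_j r_j(c,\rho)$ equals (or is below) $0.6921$ — verified by the same numerical optimization used elsewhere in the paper. To make the argument rigorous rather than purely numerical, I would partition the $(c,\rho)$ parameter space into a few regimes — e.g. ``$c$ large so $\rho=1$ is effectively forced'' (here the algorithm degenerates to single-threshold and one reuses the $1-1/e$-type tightness instance, which already caps the ratio well below $0.6921$ is false — so instead one needs $c$ moderate); ``$\rho$ close to $1$'' (little is learned, so a high-spread instance hurts); ``$\rho$ small'' (the observation phase threshold $\theta_2$ is too crude, so a concentrated instance where the max is rarely beaten by a late sample hurts) — and in each regime point to the instance that is bad there.

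The main obstacle I anticipate is handling the running-maximum threshold $\theta_2$ cleanly: unlike the blind two-threshold algorithm, here $\theta_2$ is random and its law couples the two phases, so writing $\Lambda_2$ on each instance requires integrating $h(\theta_2)$ against the distribution of the first-phase max — manageable for discrete point-mass instances (where $\theta_2$ takes only a few values) but delicate to keep in closed form for the continuous instance. A secondary difficulty is ensuring the chosen instances are genuinely adversarial across the whole $(c,\rho)$ range rather than only near the conjectured optimum; this is why I would want at least three instances and an explicit regime partition. I expect the final write-up to consist of: the list of instances, a lemma computing $\OPT$ and $\ALG$ on each as functions of $(c,\rho)$, and a short case analysis (backed by the LP/numerical solver) showing the $0.6921$ ceiling.
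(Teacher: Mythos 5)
Your overall strategy -- fix a small family of adversarial instances, compute the algorithm's limiting reward on each as an explicit function of $(c,\rho)$, and verify numerically that no single pair $(c,\rho)$ is good on all of them -- is indeed the shape of the paper's argument, and your remark that the random threshold $\theta_2$ is tractable on point-mass-like instances is the right instinct. But as written there is a genuine gap: you never pin down the instances and never compute $\ALG$ on any of them, and the step you yourself flag as the main obstacle (integrating $h(\theta_2)$ against the law of the first-phase maximum) is exactly the step that must be carried out before any minimax verification can even be formulated. Without it your plan produces no quantitative ceiling, let alone $0.6921$; the regime partition you sketch is also left unresolved (the discussion of the ``$c$ large'' regime trails off without a conclusion).

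The paper closes this gap with just two distributions and one clean combinatorial observation, and never needs the law of $\theta_2$ explicitly. For $D_1$ uniform on $[1-\epsilon,1+\epsilon]$, every accepted value is worth $\approx 1$, so $\ALG \le 1+\epsilon-\Pr[\tau>n]$, and the algorithm accepts nothing iff all $n$ values fall below $\theta_1$ \emph{and} the overall maximum lies in the observation window; hence $\Pr[\tau>n]=(1-\tfrac{c}{n})^{n}\cdot\rho\approx \rho e^{-c}$, giving the cap $1-\rho e^{-c}$. For $D_2$, which puts mass $\tfrac{1}{nM}$ near an arbitrarily large $M$ and the rest near $0$ (so $\OPT\approx 1$), the reward is governed by the probability of still running when a jackpot arrives, and the decisive computation is the secretary-style survival probability in the observation phase: for $i>\rho n$, $\Pr[\tau>i-1]=(1-\tfrac{c}{n})^{i-1}\cdot\tfrac{\rho n}{i-1}$ (all earlier values below $\theta_1$ and their maximum among the first $\rho n$), which yields $\ALG\lesssim \tfrac{1-e^{-c\rho}}{c}+\rho\int_\rho^1 \tfrac{e^{-ct}}{t}\,dt$. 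The proof then maximizes the minimum of these two explicit functions over $c,\rho$ (after noting $c\le 1$ is forced) by discretization, obtaining the $0.6921$ bound. Once these two bounds are in hand, your third ``continuous'' instance and the case analysis over regimes are unnecessary; conversely, without the explicit failure-probability and survival-probability computations above, your proposal does not yet constitute a proof.
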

	\begin{proof}
		Consider any observe-and-accept algorithm with parameters $c$ and $\rho$.
		In the following, we construct two distributions $D_1$ and $D_2$, and show that the performance of the algorithm on one of the two distributions is at most $0.6921$.
		Note that by  construction both distributions have strictly increasing CDF and satisfy $\OPT = 1$.
		
		Let the first distribution $D_1$ be a uniform distribution over $[1-\epsilon,1+\epsilon]$, for some arbitrarily small $\epsilon >0$.
		Then we have
		\begin{align}
			\ALG & \leq 1+\epsilon - \Pr[\tau > n] = 1+\epsilon - \Pr[\forall i\leq \rho n, x_i<\theta_1 \text{ and } \forall i > \rho n, x_i < \max_{j\leq \rho n} \{x_j\}] \nonumber\\
			& = 1 + \epsilon - (1-\frac{c}{n})^n \cdot \rho \leq 1 + \epsilon - \rho\cdot e^{-c}, \label{equation:upper-bound-D1}
		\end{align}
		where the last equality holds because $\tau > n$ happens if and only if all variables are below $\theta_1$ and the maximum of them appears in the first $\rho\cdot n$ variables.
		
		Let the second distribution $D_2$ be as follows.
		Let $M \gg n$ be an arbitrarily large number and $\epsilon > 0$ be arbitrarily small.
		With probability $\frac{1}{n\cdot M}$,  $x$ is uniformly distributed over $[M-\epsilon, M+\epsilon]$; with probability $1-\frac{1}{n\cdot M}$,  $x$ is uniformly distributed over $[0, \epsilon]$.
		Note that when $M\rightarrow \infty$ and $\epsilon \rightarrow 0$,
		\begin{equation*}
		\OPT \approx M\cdot \left( 1 - (1-\frac{1}{n\cdot M})^n \right) = M\cdot \frac{1}{M} = 1. 
		\end{equation*}
		Observe that for distribution $D_2$, we have $\theta_1 \leq \epsilon$ and $\Delta(c) = n\cdot \bbE[(x-\theta_1)^+] \approx 1$.
		Since $\theta_1$ is close to $0$, the gain of the algorithm is determined by how likely the algorithm accepts a variable with a value close to $M$.
		Therefore, we can upper bound $\ALG$ by
		\begin{equation*}
			\ALG < 
			\epsilon + M\cdot\sum_{i=1}^n \Pr[\tau > i-1]\cdot \Pr[x_i \geq M-\epsilon].
		\end{equation*}
		Note that $\Pr[x_i \geq M-\epsilon] = \frac{1}{n\cdot M}$ for all $i$. 
		For $i\leq \rho n$, we have
		\begin{equation*}
			\Pr[\tau > i-1] = (1-\frac{c}{n})^{i-1}.
		\end{equation*}
		For $i > \rho n$, the event $\tau > i-1$ happens if and only if all variables $x_1,\ldots,x_{i-1}$ are below $\theta_1$, and the maximum of them appears before index $\rho n$. Hence, 
		\begin{equation*}
			\Pr[\tau > i-1] = (1-\frac{c}{n})^{i-1}\cdot \frac{\rho n}{i-1}.
		\end{equation*}
		Putting everything together, as $n\rightarrow \infty$, we get
		\begin{align}
			\ALG & < \epsilon + \frac{1}{n}\cdot\sum_{i=1}^n \Pr[\tau > i-1] 
             = \epsilon + \frac{1}{n}\cdot\sum_{i=1}^{\rho n} (1-\frac{c}{n})^{i-1}
			+ \frac{1}{n}\cdot\sum_{i=\rho n + 1}^{n} (1-\frac{c}{n})^{i-1}\cdot \frac{\rho n}{i-1} \nonumber \\
			& = \epsilon + \frac{1 - (1-\frac{c}{n})^{\rho n}}{c} + \sum_{i=\rho n + 1}^{n} (1-\frac{c}{n})^{i-1}\cdot \frac{\rho}{i-1} 
            \approx \epsilon +\frac{1-e^{-c \rho}}{c} + \rho\cdot \int_\rho^1 \frac{e^{-c t}}{t} dt. \label{equation:upper-bound-D2} 
		\end{align}
		
		It is not difficult to check that when $c>1$ the above upper bound is less than $1-1/e$. Thus any observe-and-accept algorithm with a competitive ratio $ > 1-1/e$ must have $c<1$.
		It remains to show that when $\epsilon \rightarrow 0$, $n\rightarrow \infty$ and $M\rightarrow\infty$, for any $c,\rho\in[0,1]$, at least one of the two upper bounds given in Equations~\eqref{equation:upper-bound-D1} and~\eqref{equation:upper-bound-D2} has a value less than $0.6921$:
		\begin{equation}
			\max_{c,\rho\in[0,1]} \left\{ \min \left\{ 1 - \rho\cdot e^{-c}, \frac{1-e^{-c \rho}}{c} + \rho\cdot \int_\rho^1 \frac{e^{-c t}}{t} dt \right\} \right\} \leq 0.6921. \label{equation:upper-bound-c-rho}
		\end{equation}
		
		Using a similar argument as in Section~\ref{ssec:ub-k=2}, we  prove Equation~\eqref{equation:upper-bound-c-rho} by discretizing the domain for $c$ and $\rho$ and using computational tools.
		For example, letting $I = \{ \frac{i}{10^5} : 1\leq i\leq 10^5 \}$, we can check that the LHS of Equation~\eqref{equation:upper-bound-c-rho} can be approximated by replacing ``$c,\rho\in [0,1]$'' with ``$c,\rho\in I$'', with an additive error of at most $3\times 10^{-5}$.
		By enumerating all $c,\rho\in I$ and recording the maximum, we get an upper bound of $0.69204$, which is achieved when $c = 0.37476$ and $\rho = 0.44799$.
		Combining the above discretized upper bound with the additive error of $3\times 10^{-5}$ gives an upper bound of $0.69207 < 0.6921$. 
	\end{proof}

\section{Conclusion and Open Problems}\label{sec:open-problem}


In this work, we study the single-choice prophet inequality problem with unknown i.i.d. distributions.
We propose a new model in which an algorithm has access to an oracle that answers quantile queries about the distribution, with which we implement the multi-threshold blind strategies and show that with $k\geq 2$ thresholds, the competitive ratio is at least $0.6786$.
Moreover, we demonstrate that with queries, one can do more than just implementations of blind strategies, by proposing an algorithm that uses a single query to achieve a competitive ratio of $0.6718$.

Our work uncovers many interesting open problems and future research directions. 

The first direction is to improve the competitive ratios and the upper bounds.
We believe that our analysis for the competitive ratio does not fully release the potential of the proposed algorithms, and improvements on the lower bounds of the competitive ratios are possible.
In particular, our current technique of factor revealing LP hinders us from deriving tighter bounds on $\ALG$ and $\OPT$.
For example, the upper bound on $\delta$ we present in Lemma~\ref{lem:delta-ub-3-2-thresholds} is only a linear approximation of a stronger and more general (non-linear) bound. 
In order to formulate this upper bound into a linear constraint of the LP, we have to relax it into a linear form.
It would be very interesting to investigate how much the competitive ratio can be improved if the analysis is able to incorporate such non-linear constraints.

Another natural open problem is to study more general forms of the observe-and-accept algorithm.
For example, one can generalize the algorithm from a single query to multiple queries.
Based on our current results and analyses, we believe that with two queries, it is possible to derive algorithms with a competitive ratio strictly above $0.68$.
One can also investigate alternative ways to utilize the queries.
For instance, is it beneficial to introduce more phases for the single-query algorithm?  
That is, the first phase uses the threshold derived from a query to the oracle while each later phase uses the maximum realization from the previous phase as the threshold.
Furthermore, if having more phases helps, 
what is the best possible competitive ratio one can achieve with a single query? 

Finally, it is interesting to investigate the prophet secretary problem under the query model.
In the prophet secretary problem, the $n$ variables $x_1,x_2,\ldots,x_n$ are drawn from $n$ (possibly different) distributions $D_1,D_2,\ldots,D_n$, and they arrive following a uniformly-at-random chosen order.
The problem generalizes the prophet inequality problem on i.i.d. distributions.
It can be shown that a single query on the distribution $D^*$ of $x^* = \max_{1\leq i\leq n} \{ x_i \}$ suffices to achieve a competitive ratio of $1-1/e$.
Whether we can beat $1-1/e$ with a single query on $D^*$ would be an interesting open problem to investigate.


{
 	\bibliography{references}
 	\bibliographystyle{alpha}
}

\newpage

\appendix

\section{Justification of No Mass Point Assumption} \label{appendix:justification}

We provide a justification for the no-mass-point assumption by showing that when the distribution contains mass points, every single-threshold algorithm that makes an arbitrary number of queries to the oracle performs arbitrarily badly.
Consider an algorithm that makes queries at quantiles $0 < q_1 < \cdots < q_k < 1$ and sets a threshold $\theta$ for accepting variables.
Note that $\theta$ does not have to be equal to any of $v(q_i)$'s.
Now suppose that for all queries $\{q_i\}_{i\in[k]}$ the returned values are $1$, i.e., $v(q_i) = 1$ for all $i\in[k]$.
We consider how the algorithm sets the threshold.

If $\theta > 1$ then for the distribution $D$ such that $x = 1$ with probability $1$, we have $\ALG = 0$ because no variable will be accepted, and $\OPT = 1$.
Thus the performance is arbitrarily bad.
On the other hand, if $\theta \leq 1$ then we consider the following distribution.
Let $M \gg \frac{n}{1 - q_k^n}$ be an arbitrarily large number.
Let $x = 1$ with probability $q_k$ and $x = M$ otherwise.
Since $\theta \leq 1$, we have $\ALG = 1$ while $\OPT = 1\cdot q_k^n + M \cdot (1-q_k^n) \gg n$.
Again, the performance of the algorithm is arbitrarily bad.

\smallskip

Note that it is possible that the algorithm makes an additional query at $q_{k+1} = 1$.
However, this does not change the result because we can modify the first distribution slightly such that $x = 1$ with probability $1-\frac{1}{M^3}$ and $x = M$ otherwise.
Note that for such a distribution we have $\ALG \approx 0$ and $\OPT \approx 1$.
Then for both distributions the values returned by the oracle are exactly the same: $v(q_i) = 1$ for all $i\in [k]$ and $v(q_{k+1}) = M$.
However, there is no way to distinguish these two distributions when setting the threshold.

\section{Related Works} 
\label{sec:related-works}

\paragraph{Prophet Inequality with Complete Information.}
The theory of prophet inequality is initiated by  \cite{gilbert1966recognizing} in the late sixties, and has been widely studied in seventies and eighties; see, e.g.,  
\cite{krengel1977semiamarts,krengel1978semiamarts,samuel1984comparison,kertz1986stop,hill1982comparisons}.
The problem regained significant  interest in the last decade partly because of its application in posted price mechanisms that are widely adopted in (online) auctions \cite{aaai/HajiaghayiKS07,stoc/ChawlaHMS10,journals/sigecom/Lucier17}. 
For general (non-identical) distributions, the optimal competitive ratio is $0.5$ when the ordering of the variables is adversarial~\cite{krengel1977semiamarts,krengel1978semiamarts,samuel1984comparison}.
In contrast, Chawla et al. \cite{stoc/ChawlaHMS10} showed that when the algorithm can decide the arrival order of the variables, the competitive ratio can be improved to $1-1/e$. The ratio was further improved to $0.634$ and $0.669$ by  \cite{sigecom/AzarCK18} and \cite{soda/CorreaSZ19}, respectively.
These results also apply to the setting with random arrival orders, which is called the {\em prophet secretary problem} by \cite{esa/EsfandiariHLM15}.
{Very recently, Peng and Tang \cite{DBLP:conf/focs/PengT22} proposed a $0.7251$-competitive algorithm that selects the arrival order of the variables based on a reduction to a continuous arrival time design problem.}
The ratio is further improved to $0.7258$ by \cite{DBLP:journals/corr/abs-2211-04145}, who also provided a $0.7254$ upper bound for the competitive ratio of algorithms for the prophet secretary problem, separating the best possible ratios for the two problems.
Beyond the classic single-choice setting, recent works  studied the setting where multiple variables can be selected subject to some combinatorial constraints 
~\cite{stoc/ChawlaHMS10,stoc/KleinbergW12,esa/DuttingK15,geb/AzarKW19,siamcomp/DuttingFKL20}.
In the aforementioned works, the objective is additive over the selected variables and 
similar problems with 
non-additive objectives 
were investigated by \cite{conf/soda/RubinsteinS17} and  \cite{focs/DuttingKL20}.

\paragraph{Prophet Inequality with Sample Queries.}
With unknown i.i.d. distributions,   Correa et al. \cite{ec/CorreaDFS19} showed that the best an algorithm can do is $1/e$-competitive, and the competitive ratio cannot be improved even if it can observe $o(n)$ random samples. 
However, if the algorithm has $n$ samples, it can achieve a competitive ratio $\alpha \in [0.6321,0.6931]$.
If $O(n^2)$ samples are given, the competitive ratio can be further improved to $(0.7451-\epsilon)$, the best possible ratio even if the full distribution is known to the algorithm.
Later, Rubinstein et al. \cite{innovations/RubinsteinWW20} improved this result by showing that $O(n)$ samples suffice to achieve the optimal ratio.
With non-i.i.d. distributions,  Rubinstein et al. 
 \cite{innovations/RubinsteinWW20} proved that one random sample from each distribution is enough to define a $0.5$-competitive algorithm.
Nuti et al. \cite{DBLP:journals/corr/abs-2208-09159} studied the prophet secretary version of the same problem and showed that the probability of selecting the maximum value is $0.25$, which is  optimal.
For multi-choice prophet inequality problems with random samples, constant competitive algorithms have been analyzed  \cite{geb/AzarKW19,corr/abs-2111-03174}.
Beyond random samples, inaccurate prior distributions were considered by  \cite{ec/DuttingK19}, where the algorithm knows some estimations  of the true distributions under various metrics.
A simultaneous and independent work by  \cite{DBLP:journals/corr/abs-2210-05634} also considered the multi-threshold algorithms and proved that the competitive ratio improves with more thresholds and approaches $0.7451$.

\section{Improving the Ratio with Three or More Thresholds}
\label{sec:3-or-more-queries}

We show in this section that the two-threshold algorithm and its analysis can be naturally extended to multiple thresholds, and better competitive ratios can be achieved.
In general, the algorithm has $k$ thresholds, and divides the time horizon $\{1,\ldots,n\}$ into $k$ phases, where $k\geq 2$.

\paragraph{Algorithm.}
The algorithm has positive parameters $\{c_\ell,\rho_\ell\}_{1\leq \ell \leq k}$, where $c_1 < \cdots < c_k$ and $\sum_{\ell=1}^k \rho_\ell = 1$.
Each parameter $c_\ell$ corresponds to a threshold $\theta_\ell := v(1-c_\ell/n)$ via a query to the oracle.
The parameters $\rho_1,\ldots,\rho_k$ divide the time horizon into $k$ phases: phase $\ell$ includes variables $x_i$ satisfying $n\cdot \sum_{j< \ell} \rho_j < i \leq n\cdot \sum_{j\leq \ell} \rho_j$, and threshold $\theta_\ell$ is used to decide the acceptance of these variables.
For sufficiently large $n$, we can assume w.l.o.g. that $\rho_{\ell} \cdot n$ is an integer for all $\ell$.
Note that each phase $\ell$ contains exactly $\rho_\ell\cdot n$ variables.
We describe the algorithm formally in Algorithm~\ref{alg:k-threshold}.

\begin{algorithm}
\caption{$k$-Threshold Algorithm}\label{alg:k-threshold}
	\textbf{Input:} {{Number of items} $n$, {quantile choices} $\{c_\ell\}_{1\leq \ell \leq k}$ with $c_1 < c_2 < \cdots < c_k$, and {observation portions} $\{\rho_\ell\}_{1\leq \ell \leq k}$ with $\sum_{\ell=1}^k \rho_\ell = 1$.}
 
    \textbf{Output:} A variable selected by the gambler.
    
	\For{$\ell = 1,2,\ldots, k$}{
	   Set $\theta_{\ell} := v(1-c_{\ell}/n)$ by making a query to the oracle; \quad\quad\quad {\color{gray}// phase $\ell$} \\
		\For{$i=n\cdot \sum_{j< \ell} \rho_j+1,\ldots,n\cdot \sum_{j\le \ell} \rho_j$}{
		   \If{$x_i \geq \theta_\ell$}{ 
		          \textbf{return} $x_i$ and \textbf{terminate};
                }
        }
    }	   
\end{algorithm}

We call Algorithm~\ref{alg:k-threshold} the {\em $k$-threshold algorithm} and refer to the $\ell$-th for-loop as \emph{phase} $\ell$ of the algorithm, during which variables in phase $\ell$ are evaluated.
Note that $\theta_{\ell}$'s forming a  decreasing sequence of thresholds since $c_{\ell}$'s are increasing in $\ell$. 

In the following sections, we extend our analysis in Section~\ref{sec:warmup} and provide a general framework that gives the competitive ratio of the $k$-threshold algorithm. 
We first obtain a lower bound in terms of $\{\theta_{\ell}, \Delta(c_{\ell})\}_{1\le \ell \le k}$ for $\ALG$ in Section~\ref{ssec:alg-lb} in a similar way as in Section~\ref{ssec:alg-lb-2-thresholds}.
Then we provide several upper bounds for $\OPT$ in Section~\ref{ssec:opt-ub} by extending our analysis in Section~\ref{ssec:opt-ub-2-thresholds}.
We ensure that all of these bounds are linear in terms of $\{\theta_{\ell}, \Delta(c_{\ell})\}_{1\le \ell \le k}$, which enables us to construct an LP in Section~\ref{ssec:comp-ratio} whose optimal objective value lower bounds the competitive ratio.

\subsection{Lower Bounding $\ALG$}\label{ssec:alg-lb}

We derive a lower bound for the expected gain of the $k$-threshold algorithm in the following lemma. 

\begin{lemma} \label{lem:ALG-lb}
Let $\{c_{\ell},\rho_{\ell}\}_{1\le \ell \le k}$ be the positive parameters of a $k$-threshold algorithm and $\epsilon > 0$ be arbitrarily small.
For sufficiently large $n$, we have
\begin{equation}\label{eq:alg-lb}
   \ALG \ge (1-\epsilon)\cdot \sum_{\ell = 1}^k e^{-\sum_{z=1}^{\ell-1} \rho_z \cdot c_z} \cdot (1-e^{-\rho_{\ell} \cdot c_{\ell}}) \cdot \left(\theta_{\ell} +\frac{\Delta(c_{\ell})}{c_{\ell}}\right).
\end{equation}
\end{lemma}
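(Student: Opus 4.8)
The plan is to generalize the computation carried out for the two-threshold case in Lemma~\ref{lem:ALG-lb-2-thresholds} to $k$ phases. As before, I write $\ALG = \sum_{i=1}^n \Pr[\tau = i]\cdot \bbE[x_i \mid \tau = i]$ using the stopping time $\tau$, and partition the sum according to which phase the index $i$ belongs to: for phase $\ell$, the relevant indices are $n\cdot\sum_{j<\ell}\rho_j < i \le n\cdot\sum_{j\le\ell}\rho_j$. The key observation is that the algorithm accepts $x_i$ in phase $\ell$ if and only if (i) every variable in every earlier phase $z<\ell$ fell below its threshold $\theta_z$, (ii) every variable strictly before $i$ within phase $\ell$ fell below $\theta_\ell$, and (iii) $x_i \ge \theta_\ell$. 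Since the variables are i.i.d.\ and $F(\theta_z) = 1 - c_z/n$, this yields
\begin{equation*}
\Pr[\tau = i] = \left(\prod_{z=1}^{\ell-1} F(\theta_z)^{\rho_z n}\right) \cdot F(\theta_\ell)^{\,i - 1 - n\sum_{j<\ell}\rho_j}\cdot G(\theta_\ell),
\end{equation*}
and, exactly as in Equation~\eqref{eq:alg-gain}, $\bbE[x_i \mid \tau = i] = \theta_\ell + \frac{\Delta(c_\ell)/n}{G(\theta_\ell)}$ for every $i$ in phase $\ell$.

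Next I would sum over $i$ within each phase $\ell$: the contribution is a geometric series in $F(\theta_\ell)$ with $\rho_\ell n$ terms, and it evaluates to
\begin{equation*}
\left(\prod_{z=1}^{\ell-1}\Big(1-\tfrac{c_z}{n}\Big)^{\rho_z n}\right)\cdot \frac{1 - (1-c_\ell/n)^{\rho_\ell n}}{c_\ell/n}\cdot \left(\frac{c_\ell}{n}\,\theta_\ell + \frac{\Delta(c_\ell)}{n}\right) = \left(\prod_{z=1}^{\ell-1}\Big(1-\tfrac{c_z}{n}\Big)^{\rho_z n}\right)\Big(1-(1-\tfrac{c_\ell}{n})^{\rho_\ell n}\Big)\Big(\theta_\ell + \tfrac{\Delta(c_\ell)}{c_\ell}\Big),
\end{equation*}
after cancelling the $c_\ell/n$ factors, mirroring the manipulation in the two-threshold proof. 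Summing over $\ell = 1,\dots,k$ gives an exact closed form for $\ALG$.

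Finally I would pass to the limit $n\to\infty$ using the standard estimates $1 - (1-c_\ell/n)^{\rho_\ell n} \ge 1 - e^{-\rho_\ell c_\ell}$ (from $1+x\le e^x$) and $(1-c_z/n)^{\rho_z n} \ge (1-\epsilon)^{1/k}\,e^{-\rho_z c_z}$ for $n$ sufficiently large, so that the product over $z < \ell$ is at least $(1-\epsilon)\,e^{-\sum_{z=1}^{\ell-1}\rho_z c_z}$; absorbing all the $(1-\epsilon)$-type losses into a single $(1-\epsilon)$ factor (legitimate since there are only $k$ factors and $k$ is a constant) yields
\begin{equation*}
\ALG \ge (1-\epsilon)\cdot \sum_{\ell=1}^k e^{-\sum_{z=1}^{\ell-1}\rho_z c_z}\cdot\big(1-e^{-\rho_\ell c_\ell}\big)\cdot\Big(\theta_\ell + \tfrac{\Delta(c_\ell)}{c_\ell}\Big),
\end{equation*}
which is exactly Equation~\eqref{eq:alg-lb}. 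There is no real obstacle here — the argument is a routine bookkeeping extension of the $k=2$ case; the only point requiring a little care is tracking the product of survival-probability factors across earlier phases and making sure the accumulated $(1-\epsilon)$ factors are handled uniformly, which is why I would state the per-factor bound with the exponent $1/k$ and note that $k$ is a fixed constant.
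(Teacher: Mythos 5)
Your proposal is correct and follows essentially the same route as the paper's proof: decompose $\ALG$ by stopping time, compute the acceptance probability in phase $\ell$ as the product of the earlier phases' survival factors $\prod_{z<\ell}F(\theta_z)^{\rho_z n}$ with a geometric series in $F(\theta_\ell)$, use $\bbE[x_i\mid \tau=i]=\theta_\ell+\frac{\Delta(c_\ell)/n}{G(\theta_\ell)}$, and then apply $1-c/n\le e^{-c/n}$ together with $(1-c_z/n)^{\rho_z n}\ge(1-\epsilon')e^{-\rho_z c_z}$ for large $n$. Your handling of the accumulated error via per-factor bounds of the form $(1-\epsilon)^{1/k}$ (with $k$ a fixed constant) is, if anything, slightly more careful than the paper's bookkeeping, but the argument is the same.
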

\begin{proof}
Let $\pi \in [k+1]$ denote the {\em stopping phase} of the above algorithm. 
For each $\ell\in [k]$, we let $\tau_{\ell}$ be the stopping time {\em within} phase $\pi$: when $\pi = \ell$, we have $\tau_\ell \in [\rho_{\ell}\cdot n]$; otherwise $\tau_\ell$ is undefined.
We write the  expected gain of the algorithm as
\begin{equation*}
    \ALG=\sum_{\ell=1}^{k}\sum_{i=1}^{\rho_{\ell}\cdot n} \Pr[\pi \ge \ell]\cdot  \Pr[\tau_{\ell} = i \wedge  \pi = \ell \mid \pi \geq \ell] \cdot \bbE[x_i \mid \tau_{\ell} = i \wedge  \pi = \ell].
\end{equation*}

The algorithm accepts the $i$-th variable in phase $\ell$ if and only if (i) all variables from phase $z < \ell$ are less than their corresponding thresholds $\theta_z$'s; (ii) the first $(i-1)$ variables in phase $\ell$ are below $\theta_{\ell}$; and (iii) $x_i$ is at least $\theta_{\ell}$. 
Therefore, we have 
\begin{align*}
   \Pr[\pi \ge \ell]\cdot  \Pr[\tau_{\ell} = i \wedge  \pi = \ell \mid \pi \geq \ell] = \prod_{z=1}^{\ell-1} F(\theta_z)^{\rho_z\cdot n}\cdot  F(\theta_{\ell})^{i-1}\cdot G(\theta_{\ell}).
\end{align*}
Similar to Equation~\eqref{eq:alg-gain}, the expected gain of the algorithm given that it accepts $x_i$ is
\begin{equation*}
    \bbE[x_i \mid \tau_{\ell} = i \wedge  \pi = \ell]  = \theta_{\ell} +\frac{\Delta(c_{\ell})/n}{G(\theta_{\ell})}.
\end{equation*}
In summary, the expected gain of the algorithm is 
\begin{align*}
    \ALG 
    &= \sum_{\ell=1}^{k}\sum_{i=1}^{\rho_{\ell}\cdot n} \prod_{z=1}^{\ell-1} F(\theta_z)^{\rho_z\cdot n}\cdot  F(\theta_{\ell})^{i-1}\cdot G(\theta_{\ell}) \cdot \left(\theta_{\ell} +\frac{\Delta(c_{\ell})/n}{G(\theta_{\ell})}\right)  \\
    &= \sum_{\ell=1}^{k} \prod_{z=1}^{\ell-1} F(\theta_z)^{\rho_z\cdot n}\sum_{i=1}^{\rho_{\ell}\cdot n}  F(\theta_{\ell})^{i-1}\cdot \left(G(\theta_{\ell}) \cdot  \theta_{\ell} +\frac{\Delta(c_{\ell})}{n}\right).
\end{align*}

Using $F(\theta_z) = 1-\frac{c_z}{n}$ and for sufficiently large $n$ we have
\begin{align}
    \ALG 
    &= \sum_{\ell=1}^{k}\prod_{z=1}^{\ell-1}\left(1-\frac{c_z}{n}\right)^{\rho_z \cdot n} \sum_{i=1}^{\rho_{\ell}\cdot n}  \left(1-\frac{c_{\ell}}{n}\right)^{i-1}\cdot \left(\frac{c_{\ell}}{n}\cdot \theta_{\ell}+\frac{\Delta(c_{\ell})}{n}\right)\nonumber\\
    &\ge \sum_{\ell = 1}^k \prod_{z=1}^{\ell-1} (1-\epsilon)\cdot e^{-\rho_z \cdot c_z} \cdot \frac{1-(1-\frac{c_{\ell}}{n})^{\rho_{\ell} \cdot n}}{c_{\ell}/n}\cdot \left(\frac{c_{\ell}}{n}\cdot \theta_{\ell}+\frac{\Delta(c_{\ell})}{n}\right)\nonumber\\
    &\ge (1-\epsilon)\cdot \sum_{\ell = 1}^k e^{-\sum_{z=1}^{\ell-1} \rho_z \cdot c_z} \cdot (1-e^{-\rho_{\ell} \cdot c_{\ell}}) \cdot \left(\theta_{\ell} +\frac{\Delta(c_{\ell})}{c_{\ell}}\right), \label{eq:alg-exp}
\end{align}
where $\epsilon>0$ is an arbitrarily small constant. 
\end{proof}

\subsection{Upper Bounding $\OPT$}\label{ssec:opt-ub}

Given threshold $\theta_{\ell}$ for $1\le \ell \le k$, we can reuse the argument in Equation~\eqref{eq:opt-upper} and obtain the following set of $k$ upper bounds for $\OPT$:
\begin{align*}
    \OPT &\le \theta_{\ell} + n\cdot \bbE[(x-\theta_{\ell})^+] = \theta_{\ell} + \Delta(c_{\ell}), \quad \forall 1 \le \ell \le k. 
\end{align*}

However, as discussed in Section~\ref{ssec:opt-ub-2-thresholds}, the above bounds alone would not be enough to beat $1-1/e$. 
In what follows, we give a more careful analysis that leads to tighter upper bounds for $\OPT$.
Recall that we have $0 \leq \theta_k < \theta_{k-1} < \cdots < \theta_1$.
We introduce  new variables $\delta_{\ell}$'s for $1 \le \ell < k$,  
\begin{equation*}
\delta_{\ell} = \int_{\theta_{\ell+1}}^{\theta_\ell} \Pr[x^* \geq t] dt, \quad \forall 1 \le \ell < k.
\end{equation*}

Recall that $x^* = \max_{i\in[n]} \{x_i\}$.
We have the following upper bound for $\OPT$:
\begin{align}
    \OPT = \bbE[x^*] &= \int_{0}^{\infty} \Pr[x^* \ge t] dt \nonumber \\
    &= \int_{0}^{\theta_k}\Pr[x^* \ge t] dt  +  \sum_{\ell=1}^{k-1}\int_{\theta_{\ell+1}}^{\theta_{\ell}} \Pr[x^* \ge t] dt + \int_{\theta_{1}}^{\infty} \Pr[x^* \ge t] dt\nonumber\\
    &\leq \theta_k  +\sum_{\ell=1}^{k-1}\delta_{\ell} + \bbE[(x^* - \theta_1)^+]
    \leq \theta_k  +\sum_{\ell=1}^{k-1}\delta_{\ell}+\Delta(c_1). \label{eq:opt-constr}
\end{align}  

In the next three lemmas, we establish three upper bounds for $\delta_{\ell}$'s, which, when combined with Equation~\eqref{eq:opt-constr}, provide much better upper bounds on \OPT. 

\begin{lemma}\label{lem:delta-ub-1}
For arbitrarily small $\epsilon>0$ and sufficiently large $n$, we have 
\begin{equation}\label{eq:delta-ub-1}
    \delta_{\ell} \le (1+\epsilon)\cdot (\theta_{\ell}-\theta_{\ell+1})\cdot (1-e^{-c_{\ell+1}}), \quad \forall 1 \le \ell < k.
\end{equation}
\end{lemma}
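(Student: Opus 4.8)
I want to bound $\delta_\ell = \int_{\theta_{\ell+1}}^{\theta_\ell} \Pr[x^* \ge t]\,dt$ from above. The natural idea is to apply the union bound to $\Pr[x^* \ge t]$: since $x^* = \max_i x_i$ and the $x_i$ are i.i.d., $\Pr[x^* \ge t] \le n\Pr[x \ge t] = n\,G(t)$. But this crude bound integrates to something involving $\Delta$, not the product form $(\theta_\ell - \theta_{\ell+1})(1-e^{-c_{\ell+1}})$ that the lemma wants. The key observation is that on the whole interval $[\theta_{\ell+1}, \theta_\ell]$ we have $t \ge \theta_{\ell+1} = v(1-c_{\ell+1}/n)$, hence $G(t) \le G(\theta_{\ell+1}) = c_{\ell+1}/n$, so $n\,G(t) \le c_{\ell+1}$. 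This is still not quite enough — we'd get $\delta_\ell \le c_{\ell+1}(\theta_\ell - \theta_{\ell+1})$, and for $c_{\ell+1}$ large that is worse than what's claimed. So instead I would use the sharper bound $\Pr[x^* \ge t] = 1 - F(t)^n = 1 - (1-G(t))^n$. Since $G(t) \le c_{\ell+1}/n$ on the interval, and the function $g \mapsto 1-(1-g)^n$ is increasing in $g$, we get $\Pr[x^* \ge t] \le 1 - (1 - c_{\ell+1}/n)^n \le 1 - (1-\epsilon)e^{-c_{\ell+1}} \le (1+\epsilon')(1 - e^{-c_{\ell+1}})$ for sufficiently large $n$ (absorbing the $\epsilon$ slack into the $(1+\epsilon)$ factor, exactly as in the proof of Lemma~\ref{lem:ALG-lb}).

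**Putting it together.** Once we have the pointwise bound $\Pr[x^* \ge t] \le (1+\epsilon)(1-e^{-c_{\ell+1}})$ valid for all $t \in [\theta_{\ell+1}, \theta_\ell]$ (actually for all $t \ge \theta_{\ell+1}$), integrating over $[\theta_{\ell+1}, \theta_\ell]$ immediately yields
\begin{equation*}
\delta_\ell \le (1+\epsilon)\cdot(1-e^{-c_{\ell+1}})\cdot(\theta_\ell - \theta_{\ell+1}),
\end{equation*}
which is precisely Equation~\eqref{eq:delta-ub-1}. The inequality $1 - c_{\ell+1}/n \ge (1-\epsilon)e^{-c_{\ell+1}/n}$ for large $n$, and thus $(1-c_{\ell+1}/n)^n \ge (1-\epsilon)e^{-c_{\ell+1}}$, is the same elementary estimate already used in the excerpt (e.g. in the proof of Lemma~\ref{lem:ALG-lb-2-thresholds}), so no new analytic input is needed.

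**Main obstacle.** There is really no serious obstacle here; the only thing to be careful about is the direction of the monotonicity estimates — namely that $G$ is decreasing so $G(t) \le G(\theta_{\ell+1})$ on the interval, that $g \mapsto 1-(1-g)^n$ is increasing, and that the $(1-\epsilon)$ coming from $(1-c_{\ell+1}/n)^n$ turns into a $(1+\epsilon)$ when it moves to the other side of the inequality. This lemma is the direct generalization of Lemma~\ref{lem:delta-ub-1-2-thresholds} (the two-threshold case $k=2$, $\ell=1$, $\theta_2 = \theta_{\ell+1}$, $\theta_1 = \theta_\ell$) and the proof is essentially verbatim the same, just carried out for a general consecutive pair of thresholds $\theta_{\ell+1} \le \theta_\ell$.
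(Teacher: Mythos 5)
Your proposal is correct and is essentially the paper's own argument: the paper bounds the integrand by its value at the left endpoint, $\Pr[x^*\ge t]\le \Pr[x^*\ge \theta_{\ell+1}]=1-(1-c_{\ell+1}/n)^n\le(1+\epsilon)(1-e^{-c_{\ell+1}})$, and multiplies by the interval length $\theta_\ell-\theta_{\ell+1}$; your route through $G(t)\le G(\theta_{\ell+1})=c_{\ell+1}/n$ and monotonicity of $g\mapsto 1-(1-g)^n$ is just a slightly more explicit way of stating the same pointwise bound.
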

\begin{proof}
For $1 \le \ell < k$, we have 
\begin{align*}
    \delta_{\ell} &= \int_{\theta_{\ell+1}}^{\theta_{\ell}} \Pr[x^* \ge t] dt \le   \Pr[x^* \ge \theta_{\ell+1}]\cdot (\theta_{\ell}-\theta_{\ell+1})\\
    &= \left(1-(1-\frac{c_{\ell+1}}{n})^n\right)\cdot (\theta_{\ell}-\theta_{\ell+1})
    \le  (1+\epsilon)\cdot(1-e^{-c_{\ell+1}})\cdot (\theta_{\ell}-\theta_{\ell+1}),
\end{align*}
where the second equality holds due to $\Pr_{x\sim  D}[x < \theta_{\ell+1}] = 1-c_{\ell+1}/n$. 
\end{proof}

The second upper bound for $\delta_{\ell}$'s can be achieved using the union bound. 

\begin{lemma}\label{lem:delta-ub-2}
	We have
	\begin{equation}\label{eq:delta-ub-2}
    	\delta_{\ell} \le \Delta(c_{\ell+1})-\Delta(c_{\ell}), \quad \forall 1 \le \ell < k.
	\end{equation}
\end{lemma}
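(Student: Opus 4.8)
The plan is to prove the inequality pointwise in $t$ via a union bound and then integrate over the relevant interval. First I would rewrite each quantity $\Delta(c_\ell)$ as a tail integral of the survival function. Since $x$ is non-negative with no mass points, the layer-cake formula gives $\bbE[(x-\theta)^+] = \int_0^\infty \Pr[x > \theta+s]\,ds = \int_\theta^\infty G(t)\,dt$, so with $\theta_\ell := v(1-c_\ell/n)$ we have $\Delta(c_\ell) = n\int_{\theta_\ell}^\infty G(t)\,dt$. Because $c_{\ell+1} > c_\ell$ and $v(\cdot)$ is increasing, we get $\theta_{\ell+1} \le \theta_\ell$, so subtracting the two tail integrals telescopes to
\[
\Delta(c_{\ell+1}) - \Delta(c_\ell) = n\int_{\theta_{\ell+1}}^{\theta_\ell} G(t)\,dt .
\]

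Next I would bound the integrand appearing in $\delta_\ell$. For any threshold $t$, a union bound over the $n$ i.i.d. copies gives $\Pr[x^* \ge t] = \Pr\!\big[\bigcup_{i=1}^n \{x_i \ge t\}\big] \le \sum_{i=1}^n \Pr[x_i \ge t] = n\,G(t)$. Integrating this pointwise inequality over $t \in [\theta_{\ell+1}, \theta_\ell]$ and combining with the identity above yields
\[
\delta_\ell = \int_{\theta_{\ell+1}}^{\theta_\ell} \Pr[x^* \ge t]\,dt \;\le\; \int_{\theta_{\ell+1}}^{\theta_\ell} n\,G(t)\,dt \;=\; \Delta(c_{\ell+1}) - \Delta(c_\ell),
\]
which is exactly the claim. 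This is the natural $k$-threshold analogue of Lemma~\ref{lem:delta-ub-2-2-thresholds}, and the argument carries over verbatim since $\delta_\ell$ and $\Delta(c_{\ell}), \Delta(c_{\ell+1})$ play the roles of $\delta$ and $\Delta(c_1), \Delta(c_2)$.

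There is essentially no hard step here: the only points requiring care are the orientation of the interval of integration (that $\theta_{\ell+1} \le \theta_\ell$, which is where monotonicity of $v$ and $c_{\ell+1} > c_\ell$ enter) and the tail-integral representation of $\Delta(\cdot)$, both of which are routine given the no-mass-point assumption from Section~\ref{sec:prelim}.
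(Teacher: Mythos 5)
Your proof is correct and follows essentially the same route as the paper: a pointwise union bound $\Pr[x^*\ge t]\le n\,G(t)$ integrated over $[\theta_{\ell+1},\theta_\ell]$, combined with the tail-integral identity $\Delta(c_{\ell+1})-\Delta(c_\ell)=n\int_{\theta_{\ell+1}}^{\theta_\ell}\Pr[x\ge t]\,dt$. The only difference is presentational (you rewrite $\Delta(\cdot)$ via the layer-cake formula first, while the paper applies the union bound first), which does not change the argument.
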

\begin{proof}
For $1 \le \ell < k$, we have 
\begin{align*}
    \delta_{\ell}&=\int_{\theta_{\ell+1}}^{\theta_{\ell}} \Pr[x^* \ge t] dt =  \int_{\theta_{\ell+1}}^{\theta_{\ell}} \Pr[(x_1 \ge t) \vee (x_2 \ge t) \vee \ldots \vee (x_n\ge t)] dt \\
    &\le n\cdot \int_{\theta_{\ell+1}}^{\theta_{\ell}} \Pr[x\ge t] dt = n\cdot \int_{\theta_{\ell+1}}^{\infty} \Pr[x\ge t] dt - n \cdot \int_{\theta_{\ell}}^{\infty} \Pr[x\ge t] dt \\
    & = \Delta(c_{\ell+1})-\Delta(c_{\ell}),
\end{align*}
where the last equality uses $\int_{\theta_{\ell}}^{\infty} \Pr[x\ge t] dt = \bbE[(x - \theta_\ell)^+]$ and $\theta_\ell = v(1-c_\ell/n)$.
\end{proof}

Finally, we derive the third upper bound on $\delta_l$'s.

\begin{lemma}\label{lem:delta-ub-3}
	For all $1 \le \ell < k$ and constant $c\in [c_{\ell},c_{\ell+1}]$, let $\gamma := \frac{c-c_\ell}{e^{-c} - e^{-c_{\ell+1}}}$.
	For arbitrarily small $\epsilon>0$ and sufficiently large $n$, we have 
	\begin{equation}\label{eq:delta-ub-3}
	\gamma \cdot \delta_{\ell}/(1+\epsilon) \le 
	\Delta(c_{\ell+1})-\Delta(c_{\ell}) - (c_{\ell} - \gamma(1-e^{-c}))\cdot (\theta_\ell - \theta_{\ell+1}).
	\end{equation}
\end{lemma}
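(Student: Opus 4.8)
The plan is to mimic the proofs of Lemmas~\ref{lem:delta-ub-1-2-thresholds} and~\ref{lem:delta-ub-2-2-thresholds}, but now interpolating between the integrand bound (a pointwise bound on $\Pr[x^*\ge t]$ valid near $\theta_{\ell+1}$) and the union bound (valid globally), weighted so as to produce a linear constraint. First I would fix $1\le\ell<k$ and a constant $c\in[c_\ell,c_{\ell+1}]$, and set $\theta:=v(1-c/n)$, so $\theta_{\ell+1}\le\theta\le\theta_\ell$. The key pointwise fact is that for $t\le\theta$ we have $\Pr[x^*\ge t]\le 1-(1-c/n)^n\le(1+\epsilon)(1-e^{-c})$ for large $n$ (since $\Pr_{x\sim D}[x<t]\ge 1-c/n$ when $t\le\theta=v(1-c/n)$), whereas the union bound gives $\Pr[x^*\ge t]\le n\Pr[x\ge t]$ everywhere. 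Splitting $\delta_\ell=\int_{\theta_{\ell+1}}^{\theta}\Pr[x^*\ge t]\,dt+\int_{\theta}^{\theta_\ell}\Pr[x^*\ge t]\,dt$ and applying the first bound on $[\theta_{\ell+1},\theta]$ and the union bound on $[\theta,\theta_\ell]$ gives
\[
\delta_\ell \le (1+\epsilon)(1-e^{-c})(\theta-\theta_{\ell+1}) + n\int_{\theta}^{\theta_\ell}\Pr[x\ge t]\,dt.
\]
The second term is $n\big(\bbE[(x-\theta)^+]-\bbE[(x-\theta_\ell)^+]\big) = \Delta(c)-\Delta(c_\ell)$, where $\Delta(c)=n\bbE[(x-v(1-c/n))^+]$ by definition.

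Next I would eliminate the auxiliary quantities $\theta$ and $\Delta(c)$ in favour of $\theta_\ell,\theta_{\ell+1},\Delta(c_\ell),\Delta(c_{\ell+1})$ and $\delta_\ell$ itself, which is exactly where the magic factor $\gamma=\frac{c-c_\ell}{e^{-c}-e^{-c_{\ell+1}}}$ comes in. On one hand, the union bound on $[\theta,\theta_{\ell+1}]$ — or more directly the integrand bound run the other direction — should yield something like $\Delta(c)-\Delta(c_\ell)\le\cdots$; on the other hand, a lower bound on the chunk $\int_{\theta_{\ell+1}}^{\theta}\Pr[x^*\ge t]\,dt$ in terms of $\delta_\ell$ minus the $[\theta,\theta_\ell]$ chunk lets me trade. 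Concretely, I expect to write $\Delta(c_{\ell+1})-\Delta(c) = n\int_{\theta_{\ell+1}}^{\theta}\Pr[x\ge t]\,dt \ge \int_{\theta_{\ell+1}}^{\theta}\Pr[x^*\ge t]\,dt$ trivially is the wrong direction, so instead I use that $\Pr[x^*\ge t]\ge\Pr[x\ge t]$ pointwise is also the wrong direction; the right move is to combine $\delta_\ell=\int_{\theta_{\ell+1}}^{\theta_\ell}\Pr[x^*\ge t]\,dt$ with the estimate $\int_{\theta}^{\theta_\ell}\Pr[x^*\ge t]\,dt\ge(1-(1-c_\ell/n)^n)\cdot 0$ — no. The clean route, matching the two-threshold proof, is: from the displayed bound above plus the trivial $\Delta(c)\le\Delta(c_\ell)+(c_\ell/n)\cdot n\cdot(\theta_\ell-\theta)$-type estimate $\Delta(c)-\Delta(c_\ell)\le \Pr[x\ge\theta_{\ell+1}]\cdot n\cdot\text{(width)}$... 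I would instead follow the structure in Appendix~\ref{app:missing-proofs} literally: bound $\Delta(c_{\ell+1})-\Delta(c_\ell)$ below by $\delta_\ell$ plus a correction, using $\Delta(c_{\ell+1})-\Delta(c_\ell)\ge\delta_\ell$ (Lemma~\ref{lem:delta-ub-2}) refined with the pointwise bound, and then the specific choice $\gamma=\frac{c-c_\ell}{e^{-c}-e^{-c_{\ell+1}}}$ is precisely what makes the $\theta$-dependence cancel, leaving the stated inequality $\gamma\delta_\ell/(1+\epsilon)\le\Delta(c_{\ell+1})-\Delta(c_\ell)-(c_\ell-\gamma(1-e^{-c}))(\theta_\ell-\theta_{\ell+1})$.

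The main obstacle I anticipate is bookkeeping the direction of each inequality so that $\theta=v(1-c/n)$ drops out exactly, rather than leaving a residual term in $\theta$ that is not controlled by $\theta_\ell,\theta_{\ell+1}$. The ratio $\gamma$ is calibrated so that the coefficient of $(\theta-\theta_{\ell+1})$ coming from the pointwise bound and the coefficient coming from re-expressing $\Delta(c)$ via $\Pr[x\ge\cdot]$ integrals balance; verifying this balance is the crux and amounts to the identity $\gamma(e^{-c}-e^{-c_{\ell+1}})=c-c_\ell$ rearranged appropriately, together with $(1-(1-c/n)^n)\to(1-e^{-c})$ and $\Delta(c_{\ell+1})-\Delta(c)\ge$ (width)$\cdot\Pr[x^*\ge\theta_{\ell+1}]$-style estimates. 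Since this lemma is the exact $\ell$-indexed analogue of Lemma~\ref{lem:delta-ub-3-2-thresholds} (which is $k=2$, $\ell=1$), the proof should be a word-for-word transcription with $c_1\mapsto c_\ell$, $c_2\mapsto c_{\ell+1}$, $\theta_1\mapsto\theta_\ell$, $\theta_2\mapsto\theta_{\ell+1}$, $\delta\mapsto\delta_\ell$, so once the two-threshold case is established no new ideas are needed and the only care required is that the constants $c_\ell<c<c_{\ell+1}$ keep $\theta_{\ell+1}\le v(1-c/n)\le\theta_\ell$ and that $n$ is taken large enough uniformly in $\ell$ (finitely many choices) for the $(1+\epsilon)$ slack.
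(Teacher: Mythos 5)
Your closing meta-observation is correct: the paper proves this lemma by a word-for-word transcription of the proof of Lemma~\ref{lem:delta-ub-3-2-thresholds} with $c_1\mapsto c_\ell$, $c_2\mapsto c_{\ell+1}$, $\theta_1\mapsto\theta_\ell$, $\theta_2\mapsto\theta_{\ell+1}$, $\delta\mapsto\delta_\ell$. But your own reconstruction of that argument has a genuine gap. Your key pointwise fact is stated in the wrong direction: for $t\le\theta=v(1-c/n)$ the increasing CDF gives $\Pr[x<t]\le 1-c/n$, hence $\Pr[x^*\ge t]\ge 1-(1-c/n)^n$; the upper bound $\Pr[x^*\ge t]\le(1+\epsilon)(1-e^{-c})$ is valid only for $t\ge\theta$. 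Consequently your displayed inequality
\begin{equation*}
\delta_\ell \le (1+\epsilon)(1-e^{-c})(\theta-\theta_{\ell+1}) + n\int_{\theta}^{\theta_\ell}\Pr[x\ge t]\,dt
\end{equation*}
is false in general, not merely unproved: take a distribution whose survival function stays close to $c_{\ell+1}/n$ on almost all of $[\theta_{\ell+1},\theta]$ (dropping to $c/n$ only in a tiny sliver below $\theta$) and for which $\theta_\ell-\theta$ is negligible; then the left side is about $(1-e^{-c_{\ell+1}})(\theta-\theta_{\ell+1})$ while the right side is about $(1+\epsilon)(1-e^{-c})(\theta-\theta_{\ell+1})$, which is smaller for small $\epsilon$ since $c<c_{\ell+1}$.

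More importantly, the actual mechanism of the proof never appears in your sketch, and $\gamma$ does not arise from a cancellation of $\theta$-dependence in a union-bound hybrid. The paper introduces $t^*\in[\theta_{\ell+1},\theta_\ell]$ with $\Pr[x\ge t^*]=c/n$ (your $\theta$) and the unknown $\alpha:=\frac{t^*-\theta_{\ell+1}}{\theta_\ell-\theta_{\ell+1}}$, then (i) upper bounds $\delta_\ell$ by splitting at $t^*$ with the pointwise maxima bounds $(1+\epsilon)(1-e^{-c_{\ell+1}})$ on $[\theta_{\ell+1},t^*]$ and $(1+\epsilon)(1-e^{-c})$ on $(t^*,\theta_\ell]$ --- the larger constant goes on the lower subinterval, opposite to your split --- and rearranges this into a \emph{lower} bound on $\alpha$ in terms of $\delta_\ell$; and (ii) lower bounds $\Delta(c_{\ell+1})-\Delta(c_\ell)=n\int_{\theta_{\ell+1}}^{\theta_\ell}\Pr[x\ge t]\,dt$ by splitting at the same $t^*$ with the pointwise bounds $\Pr[x\ge t]\ge c/n$ and $\ge c_\ell/n$, giving $(\theta_\ell-\theta_{\ell+1})(c_\ell+\alpha(c-c_\ell))$. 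Substituting (i) into (ii) and using $\gamma=\frac{c-c_\ell}{e^{-c}-e^{-c_{\ell+1}}}$ yields exactly \eqref{eq:delta-ub-3}; the union bound of Lemma~\ref{lem:delta-ub-2} plays no role, and no intermediate quantity $\Delta(c)$ needs to be introduced or eliminated. Without this two-step interpolation through $\alpha$ your outline does not reach the stated inequality, and deferring to ``the structure in Appendix~\ref{app:missing-proofs}'' is not a substitute here, since that appendix proof is precisely the argument you would need to reproduce.
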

\begin{proof}
    Recall $\Pr[x\geq \theta_{\ell}] = c_{\ell}/n$,  $\Pr[x\geq \theta_{\ell+1}] = c_{\ell+1}/n$ and that
    $\Pr[x\geq t]$ is strictly decreasing in $t$. 
    For all $c\in [c_{\ell},c_{\ell+1}]$, there must exist $t^* \in [\theta_{\ell+1}, \theta_{\ell}]$ such that $\Pr[x\geq t^*] = c/n$.
	Thus we have, 
	\begin{equation*}
	\Pr[x\geq t] \in 
	\begin{cases}
	[\frac{c}{n} , \frac{c_{\ell+1}}{n}], \quad & \forall t\in [\theta_{\ell+1}, t^*]; \\
	[\frac{c_{\ell}}{n} , \frac{c}{n}), \quad & \forall t\in (t^*, \theta_{\ell}].
	\end{cases}
	\end{equation*}
	Since $\Pr[x^* \geq t] = 1 - (1 - \Pr[x \geq t])^n$, 
	for sufficiently large $n$, we have:
	\begin{align*}
			\forall t\in [\theta_{\ell+1}, t^*],\quad & \Pr[x^* \geq t] \leq 1 - \left(1 - \frac{c_{\ell+1}}{n}\right)^n \leq (1+\epsilon)\cdot(1 - e^{-c_{\ell+1}}); \\
			\forall t\in (t^*, \theta_{\ell}], \ \quad  & \Pr[x^* \geq t] < 1 - \left(1 - \frac{c}{n}\right)^n \leq (1+\epsilon)\cdot(1 - e^{-c}).
	\end{align*}
 
	For convenience, we define $\alpha := \frac{t^* - \theta_{\ell+1}}{\theta_{\ell} - \theta_{\ell+1}} \in [0,1]$.
	We have
	\begin{align*}
	\delta_{\ell}=\int_{\theta_{\ell+1}}^{\theta_{\ell}} \Pr[x^* \geq t] dt 
	& = \int_{\theta_{\ell+1}}^{t^*} \Pr[x^* \geq t] dt + \int_{t^*}^{\theta_{\ell}} \Pr[x^* \geq t] dt \\
	& \leq (1+\epsilon)\cdot(t^* - \theta_{\ell+1})\cdot ( 1 - e^{-c_{\ell+1}} )+ (1+\epsilon)\cdot (\theta_{\ell} - t^*)\cdot ( 1 - e^{-c} ) \\
	& = (1+\epsilon)\cdot (\theta_{\ell} - \theta_{\ell+1})\cdot (1-e^{-c} + \alpha\cdot (e^{-c} - e^{-c_{\ell+1}})).
	\end{align*}
	Rearranging the inequality gives
	\begin{equation*}
	\alpha \geq \frac{1}{e^{-c} - e^{-c_{\ell+1}}}\cdot \left( \frac{\delta_{\ell}}{(1+\epsilon)(\theta_{\ell} - \theta_{\ell+1})} - (1-e^{-c}) \right).
	\end{equation*}

	On the other hand, we have (recall that we define $\gamma = \frac{c - c_{\ell}}{e^{-c} - e^{-c_{\ell+1}}}$)
	\begin{align*}
	n \cdot \int_{\theta_{\ell+1}}^{\theta_{\ell}} \Pr[x \geq t] dt 
	& = n \cdot\int_{\theta_{\ell+1}}^{t^*} \Pr[x \geq t] dt + n \cdot\int_{t^*}^{\theta_{\ell}} \Pr[x \geq t] dt \\
	& \geq (t^* - \theta_{\ell+1})\cdot c + (\theta_{\ell} - t^*)\cdot c_{\ell} \\
	& = (\theta_{\ell} - \theta_{\ell+1})\cdot (c_{\ell} + \alpha\cdot (c - c_{\ell})) \\
	& \geq (\theta_{\ell} - \theta_{\ell+1})\cdot \left\{c_{\ell} + \frac{c - c_{\ell}}{e^{-c} - e^{-c_{\ell+1}}}\cdot \left( \frac{\delta_{\ell}}{(1+\epsilon)(\theta_1 - \theta_{\ell+1})} - (1-e^{-c}) \right) \right\} \\
	& = c_{\ell} (\theta_{\ell} - \theta_{\ell+1}) + \frac{\gamma}{1+\epsilon} \cdot \delta_{\ell}- \gamma (1-e^{-c})\cdot (\theta_{\ell} - \theta_{\ell+1}).
	\end{align*}

	Recall that $n \cdot \int_{\theta_{\ell+1}}^{\theta_{\ell}} \Pr[x \geq t] dt  = \Delta(c_{\ell+1})-\Delta(c_{\ell})$. The above lower bound implies
	\begin{equation*}
	 \Delta(c_{\ell+1})-\Delta(c_{\ell}) \geq (c_{\ell} - \gamma(1-e^{-c}))\cdot (\theta_{\ell} - \theta_{\ell+1}) + \frac{\gamma}{1+\epsilon} \cdot \delta_{\ell}.
	\end{equation*}
	Rearranging the inequality completes the proof.
\end{proof}

\subsection{Competitive Ratios for $k$-Threshold Algorithms}\label{ssec:comp-ratio}

To recap, we have established a lower bound for $\ALG$ in Section~\ref{ssec:alg-lb} as well as several upper bounds for $\OPT$ in Section~\ref{ssec:opt-ub}.
Note that given  parameters $\{c_{\ell},\rho_{\ell}\}_{1\le \ell\le k}$, these bounds are linear in $\{\theta_{\ell}, \Delta(c_{\ell})\}_{1\le \ell \le k}$ and $\{\delta_{\ell}\}_{1\le \ell <k}$. 
We can now construct a minimization LP taking Equation~\eqref{eq:alg-lb} as the objective and Equations~\eqref{eq:opt-constr}, \eqref{eq:delta-ub-1}, \eqref{eq:delta-ub-2}, \eqref{eq:delta-ub-3} as constraints with non-negative LP variables $\{\theta_{\ell}, \Delta(c_{\ell})\}_{1\le \ell \le k}$ and $\{\delta_{\ell}\}_{1\le \ell <k}$. 
In the following theorem, we prove that the optimal objective of the LP gives a lower bound on the competitive ratio for the algorithm.

\begin{theorem}\label{thm:LP-lb}
    Given parameters $\{c_{\ell},\rho_{\ell}\}_{1\le \ell\le k}$ of a $k$-threshold algorithm, the optimal  value of the following LP provides a lower bound for the competitive ratio of the algorithm when $n\rightarrow \infty$.
    \begin{align}
		\text{minimize} \quad & \textstyle \sum_{\ell = 1}^k  e^{-\sum_{z=1}^{\ell-1} \rho_z \cdot c_z} \cdot (1-e^{-\rho_{\ell} \cdot c_{\ell}}) \cdot \left(\theta_{\ell} +\frac{\Delta(c_{\ell})}{c_{\ell}}\right) \nonumber\\
		\text{subject to} \quad 
		& 0 \le \theta_k \le \cdots \le \theta_1, \nonumber\\
		& \Delta(c_1),\delta_1, \ldots, \delta_k \ge 0,\nonumber\\
		&  \textstyle  1 \le \theta_k+\sum_{\ell=1}^{k-1} \delta_{\ell} + \Delta(c_1), \nonumber\\
		& \delta_{\ell} \le (\theta_{\ell}-\theta_{\ell+1})\cdot(1-e^{-c_{\ell+1}}), \qquad \forall \ell \in [k-1],\nonumber\\
		& \delta_{\ell} \le \Delta(c_{\ell+1})-\Delta(c_{\ell}), \qquad \forall \ell \in [k-1], \nonumber\\
		& \gamma \cdot \delta_{\ell} \le 
	\Delta(c_{\ell+1})-\Delta(c_{\ell}) - (c_{\ell} - \gamma(1-e^{-c}))\cdot (\theta_\ell - \theta_{\ell+1}),\nonumber\\
	&\qquad\qquad\qquad \forall \ell \in [k-1], \>\forall c \in [c_{\ell},c_{\ell+1}],  \> \textstyle \gamma = \frac{c-c_{\ell}}{e^{-c}-e^{-c_{\ell+1}}}. \nonumber
	\end{align}
\end{theorem}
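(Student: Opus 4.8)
\textbf{Proof plan for Theorem~\ref{thm:LP-lb}.}

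The plan is to mirror, essentially verbatim, the argument already given for the two-threshold case in Theorem~\ref{thm:LP-lb-2-thresholds}, replacing the five-variable LP by the corresponding LP with variables $\{\theta_\ell,\Delta(c_\ell)\}_{1\le\ell\le k}$ and $\{\delta_\ell\}_{1\le\ell<k}$. The three ingredients have all been assembled: the lower bound on $\ALG$ in Equation~\eqref{eq:alg-lb} (Lemma~\ref{lem:ALG-lb}), the first upper bound on $\OPT$ in Equation~\eqref{eq:opt-constr}, and the three families of upper bounds on each $\delta_\ell$ in Equations~\eqref{eq:delta-ub-1}, \eqref{eq:delta-ub-2}, \eqref{eq:delta-ub-3} (Lemmas~\ref{lem:delta-ub-1}--\ref{lem:delta-ub-3}). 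First I would note that all of these bounds are linear in the variables once the parameters $\{c_\ell,\rho_\ell\}_{1\le\ell\le k}$ are fixed, so that they do constitute a legitimate (infinite-constraint) linear program; the objective of the LP is taken to be the right-hand side of Equation~\eqref{eq:alg-lb} with the $(1-\epsilon)$ factor dropped, and similarly the $(1+\epsilon)$ factors are dropped from the $\delta_\ell$ constraints.

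The core step is the observation that \emph{every} distribution $D$ on the nonnegative reals (without mass points) induces a feasible point of the LP: set $\theta_\ell = v(1-c_\ell/n)$, $\Delta(c_\ell) = n\cdot\bbE[(x-\theta_\ell)^+]$, and $\delta_\ell = \int_{\theta_{\ell+1}}^{\theta_\ell}\Pr[x^*\ge t]\,dt$. The ordering constraints $0\le\theta_k\le\cdots\le\theta_1$ hold because $v(\cdot)$ is increasing and $c_1<\cdots<c_k$; nonnegativity of $\Delta(c_1)$ and the $\delta_\ell$'s is immediate from their definitions; and the remaining four families of constraints are exactly Lemmas~\ref{lem:delta-ub-1}--\ref{lem:delta-ub-3} together with Equation~\eqref{eq:opt-constr}, after normalizing $\OPT=1$ by scaling (which we may do since scaling $D$ scales $\ALG$, $\OPT$, and all the induced variables identically, leaving the ratio unchanged; the algorithm's parameters $\{c_\ell,\rho_\ell\}$ do not depend on the scale). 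For this feasible point, the LP objective evaluates to the lower bound on $\ALG$ guaranteed by Lemma~\ref{lem:ALG-lb} (up to the $(1\pm\epsilon)$ factors which vanish as $n\to\infty$), hence is at most $\ALG = \ALG/\OPT$. Therefore the optimal (minimum) value of the LP is a lower bound on $\ALG/\OPT$ for that distribution, and taking the infimum over all $D$ shows it lower-bounds the competitive ratio.

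Two small points deserve care rather than being glossed over. First, the $\epsilon$-slack: the inequalities of Lemmas~\ref{lem:ALG-lb}, \ref{lem:delta-ub-1}, and \ref{lem:delta-ub-3} each carry a multiplicative $(1\pm\epsilon)$ that holds only for $n$ sufficiently large; I would state the theorem's conclusion as a statement about $n\to\infty$ (as is done), so that for every fixed $\epsilon>0$ the induced point is feasible for the $\epsilon$-relaxed LP, and letting $\epsilon\to0$ recovers feasibility for the stated LP in the limit. Second, although the LP has infinitely many constraints (one for each $c\in[c_\ell,c_{\ell+1}]$ in Equation~\eqref{eq:delta-ub-3}), this causes no difficulty: the induced point satisfies all of them simultaneously by Lemma~\ref{lem:delta-ub-3}, and for the numerical evaluation one discretizes the interval, exactly as in the two-threshold case. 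The main (and only) obstacle is bookkeeping — keeping the nested products $\prod_{z=1}^{\ell-1}$ and the sums $\sum_{z=1}^{\ell-1}\rho_z c_z$ straight across phases — but there is no new conceptual content beyond Theorem~\ref{thm:LP-lb-2-thresholds}, so the proof is a direct transcription.
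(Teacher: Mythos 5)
Your proposal is correct and follows essentially the same route as the paper's own proof: take Lemma~\ref{lem:ALG-lb} (with the $(1-\epsilon)$ factor dropped) as the LP objective, Equation~\eqref{eq:opt-constr} normalized to $\OPT=1$ together with Lemmas~\ref{lem:delta-ub-1}--\ref{lem:delta-ub-3} as constraints, and observe that every distribution induces a feasible point whose objective value lower-bounds $\ALG/\OPT$, so the LP optimum lower-bounds the competitive ratio as $n\to\infty$. Your explicit handling of the scaling step and the $\epsilon$-slack is slightly more careful than the paper's, but it is the same argument.
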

\begin{proof}
Let $\{c_{\ell},\rho_{\ell}\}_{1\le \ell \le k}$ be the parameters of a $k$-threshold algorithm and $\epsilon > 0$ be arbitrarily small.
The objective of the LP comes from Equation~\eqref{eq:alg-lb}, the lower bound for $\ALG$, where we omit the $(1-\epsilon)$ term since we have $\epsilon \rightarrow 0$ when $n\rightarrow \infty$.
We will also omit the $(1+\epsilon)$ terms in the following for the same reason.
The first two constraints follow straightforwardly from the definitions of the parameters.
By scaling we can assume w.l.o.g. that $\OPT=1$.
Therefore Equation~\eqref{eq:opt-constr} gives the third constraint.
The three proceeding sets of constraints follow from Equations~\eqref{eq:delta-ub-1}, \eqref{eq:delta-ub-2} and \eqref{eq:delta-ub-3}, the three upper bounds on $\delta_{\ell}$'s that we have established in Section~\ref{ssec:opt-ub}.

Observe that every distribution $D$ induces a set of LP variables $\{\theta_{\ell}, \Delta(c_{\ell})\}_{1\le \ell \le k}$ as well as $\{\delta_{\ell}\}_{1\le \ell <k}$.
Moreover, by our analyses, they form a feasible solution to the LP.
Since the objective of any feasible solution induced by distribution $D$ provides a lower bound on $\ALG/\OPT$, 
the optimal (minimum) objective of the LP provides a lower bound on the competitive ratio, i.e., the worst-case performance against all distributions. 
\end{proof}

With the above result, it remains to set the appropriate parameters of the algorithm such that the optimal value of the LP is as large as possible.
We demonstrate the effectiveness of our LP framework with the following theorem that gives lower bounds of the competitive ratio for the $k$-threshold algorithm with $k\in \{2,3,4,5\}$.

\begin{theorem}
The $k$-threshold algorithm achieves a competitive ratio of at least $0.6786$ when $k=2$; $0.6883$ when $k=3$; $0.6946$ when $k=4$; and $0.7004$ when $k=5$.
\end{theorem}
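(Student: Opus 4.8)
The plan is to instantiate Theorem~\ref{thm:LP-lb} for each $k\in\{2,3,4,5\}$ with a concrete choice of the parameters $\{c_\ell,\rho_\ell\}_{1\le\ell\le k}$ and then certify, via an LP solver, that the optimal value of the resulting minimization LP meets the claimed bound. Since Theorem~\ref{thm:LP-lb} already establishes that this optimal value lower bounds the competitive ratio of the $k$-threshold algorithm as $n\to\infty$, it suffices to exhibit one good parameter vector per value of $k$ and report the solver's output. For $k=2$ this is exactly Theorem~\ref{thm:LP-params-2-thresholds} with $(c_1,c_2,\rho)=(0.7067,1.8353,0.6204)$, so the first bound is already in hand; the remaining cases follow the same template.

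The key steps, in order, are: (i) fix $k$; (ii) search for a near-optimal parameter vector $(c_1<\cdots<c_k,\ \rho_1,\ldots,\rho_k)$ with $\sum_\ell\rho_\ell=1$ — I would do this by an outer optimization (grid refinement or a black-box local optimizer such as Nelder--Mead) where each evaluation solves the inner LP of Theorem~\ref{thm:LP-lb}; (iii) for the chosen parameters, discretize the family of constraints indexed by $c\in[c_\ell,c_{\ell+1}]$ in \eqref{eq:delta-ub-3} into finitely many values $c\in\{c_\ell+j\cdot(c_{\ell+1}-c_\ell)/N\}_{j=0}^{N}$ for a large $N$, turning the semi-infinite program into a genuine finite LP; (iv) solve this finite LP and read off the objective. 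Because each discretized constraint is a relaxation of the full continuum of constraints, the finite LP's optimum is a valid lower bound on the true LP's optimum, hence on the competitive ratio — so the discretization is safe in the direction we need. One should also double-check that the reported optimum is robust to refining $N$, i.e. that further discretization does not push the value below the stated threshold.

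The main obstacle is step (ii): the competitive ratio as a function of the $2k-1$ free parameters is not concave (it is the value of an LP whose data depend nonlinearly on $c_\ell,\rho_\ell$ through terms like $e^{-\rho_z c_z}$ and $1-e^{-\rho_\ell c_\ell}$), so the outer search can get stuck at suboptimal points, and as $k$ grows the search space and the number of LP constraints both grow. In practice this is handled by warm-starting the search for $k+1$ from the optimizer found for $k$ (splitting one phase into two) and by running the local optimizer from several random seeds; but this is a numerical rather than a conceptual difficulty. The actual certification of each claimed bound — $0.6883$ for $k=3$, $0.6946$ for $k=4$, $0.7004$ for $k=5$ — is then a single LP solve with the exhibited parameters, analogous to the verification in the proof of Theorem~\ref{thm:LP-params-2-thresholds}.
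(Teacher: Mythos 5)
Your proposal is correct and follows essentially the same route as the paper: the paper's proof also just instantiates the factor revealing LP of Theorem~\ref{thm:LP-lb} for each $k\in\{2,3,4,5\}$ with fixed parameters (listed explicitly in Table~\ref{tab:comp-ratio-k>2}, rather than found by a described search) and certifies the optimal value with an LP solver, exactly as in Theorem~\ref{thm:LP-params-2-thresholds}. Your added remark that discretizing the $c$-indexed family in \eqref{eq:delta-ub-3} only enlarges the feasible region and hence keeps the LP optimum a valid lower bound is sound and consistent with how the paper's numerical verification must be carried out.
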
 
\begin{proof}
For two thresholds, we have three parameters for the algorithm, $c_1$, $c_2$, and $\rho := \rho_1$ (note that $\rho_2 = 1-\rho_1 = 1-\rho$). 
The LP has four variables: $\theta_1$, $\theta_2$, $\Delta_1 := \Delta(c_1)$ and $\Delta_2 := \Delta(c_2)$, and is written as follows. 
\begin{equation*}
    \begin{array}{ll}
        \text{minimize} & (1-e^{-\rho c_1})\theta_1+\frac{1-e^{-\rho c_1}}{c_1}\Delta_1+e^{-\rho c_1}(1-e^{-(1-\rho) c_2})\theta_2+\frac{e^{-\rho c_1}(1-e^{-(1-\rho) c_2})}{c_2}\Delta_2, \\[4pt]
        \text{subject to} & \theta_1 \ge \theta_2 \ge 0, \> \Delta_1 \ge 0, \> \delta_1 \ge 0, \\[4pt]
        & 1 \le \theta_2+\delta_1+\Delta_1,  \\[4pt]
        & \delta_1 \le (\theta_1-\theta_2)\cdot (1-e^{-c_2}),\\[4pt]
        & \delta_1 \le \Delta_2-\Delta_1,\\[4pt]
        & \gamma\cdot \delta_1 \le \Delta_2-\Delta_1-(c_1-\gamma(1-e^{-c}))\cdot(\theta_1-\theta_2), \> \forall c \in [c_1,c_2], \> \gamma=\frac{c-c_1}{e^{-c}-e^{-c_2}}.\\[4pt]
    \end{array}
\end{equation*}

By Theorem~\ref{thm:LP-lb}, given values of $c_1$, $c_2$, $\rho$, the objective value of the above LP bounds the competitive ratio from below.
Using an online LP solver, it can be verified that by fixing $c_1=0.7067$, $c_2=1.8353$ and $\rho=0.6204$, the optimal value of the LP is at least $0.6786$.

\begin{table}[ht]
\centering
\begin{tabular}{ccccc}
\hline
     & 2 thresholds & 3 thresholds & 4 thresholds & 5 thresholds \\\hline
   $c_1$  & $0.7067$ & $0.7204$& $0.6857$& $0.6561$ \\
   $c_2$ & $1.8353$ & $1.7551$ & $1.4367$ & $1.4082$ \\
   $c_3$ & -- & $3.2857$ & $2.4417$& $2.2735$ \\
   $c_4$ & -- & -- & $3.9036$& $3.4423$ \\
   $c_5$ & -- & -- & -- &  $4.4783$ \\\hline
   $\rho_1$  & $0.6204$ & $0.71$ &$0.65$ & $0.64$\\
   $\rho_2$ & $0.3796$ & $0.195$ &$0.19$ & $0.17$ \\
   $\rho_3$ & -- & $0.095$ &$0.1$ & $0.11$ \\
   $\rho_4$ & -- & -- &$0.06$ & $0.06$ \\
   $\rho_5$ & -- & -- & -- & $0.02$ \\\hline
   {\bf ratio} & $\bm{0.6786}$ & $\bm{0.6883}$ & $\bm{0.6946}$ & $\bm{0.7004}$ \\\hline
 \end{tabular}
 \vspace{5pt}
 \caption{Parameters and the resulting competitive ratios of the $k$-threshold algorithms for various $k$.}
 \label{tab:comp-ratio-k>2}
 \vspace{-20pt}
 \end{table}

The proofs for $k\geq 3$ are identical but with different choices of parameters.
We summarize the choices of parameters we have found and the corresponding competitive ratios in Table~\ref{tab:comp-ratio-k>2}. 
\end{proof}

We know from  \cite{conf/soda/EhsaniHKS18} and   \cite{sigecom/CorreaFHOV17} that the optimal competitive ratios are $1-1/e$ and $0.7451$ with $1$ and $n$ thresholds, respectively.
Our results partially fill in the gap between these two results.
While we can continue the numerical computation to achieve better ratios for $k\geq 6$, we believe that a more meaningful open problem is to improve the current analyses as we discuss in  detail in Section~\ref{sec:open-problem}, and see what value of $k$ is sufficient to achieve a competitive ratio close to the upper bound of $0.7451$.

\section{Missing Proofs from Section \ref{sec:warmup}}
\label{app:missing-proofs}

\subsection{Proof of Lemma \ref{lem:delta-ub-1-2-thresholds}}

\begin{proof}
    By definition of $\delta$, we have 
    \begin{align*}
        \delta &= \int_{\theta_{2}}^{\theta_{1}} \Pr[x^* \ge t] dt \le   \Pr[x^* \ge \theta_{2}]\cdot (\theta_{1}-\theta_{2})\\
        &= \left(1-(1-\frac{c_{2}}{n})^n\right)\cdot (\theta_{1}-\theta_{2})
        \le  (1+\epsilon)\cdot(1-e^{-c_{2}})\cdot (\theta_{1}-\theta_{2}),
    \end{align*}
    where the last inequality holds for all $n = \omega(\frac{1}{\epsilon})$. 
\end{proof}

\subsection{Proof of Lemma \ref{lem:delta-ub-2-2-thresholds}}
\begin{proof}
    By definition of $\delta$, we have 
    \begin{align*}
        \delta & =\int_{\theta_{2}}^{\theta_{1}} \Pr[x^* \ge t] dt =  \int_{\theta_{2}}^{\theta_{1}} \Pr[(x_1 \ge t) \vee (x_2 \ge t) \vee \ldots \vee (x_n\ge t)] dt \\
        &\le n\cdot \int_{\theta_{2}}^{\theta_{1}} \Pr_{x\sim D}[x\ge t] dt = n\cdot \int_{\theta_{2}}^{\infty} \Pr_{x\sim D}[x\ge t] dt - n \cdot \int_{\theta_{1}}^{\infty} \Pr_{x\sim D}[x\ge t] dt 
        = \Delta(c_{2})-\Delta(c_{1}),
    \end{align*}
    where the inequality holds by using the union bound.
\end{proof}

\subsection{Proof of Lemma \ref{lem:delta-ub-3-2-thresholds}}

\begin{proof}
    Recall that $\Pr[x\geq \theta_{1}] = c_{1}/n$, $\Pr[x\geq \theta_{2}] = c_{2}/n$ and that $\Pr[x\geq t]$ is strictly decreasing in $t$. 
    For all $c\in [c_{1},c_{2}]$, there must exist $t^* \in [\theta_{2}, \theta_{1}]$ such that $\Pr[x\geq t^*] = c/n$.
	Thus we have, 
	\begin{equation*}
	\Pr[x\geq t] \in 
	\begin{cases}
	[\frac{c}{n} , \frac{c_{2}}{n}], \quad & \forall t\in [\theta_{2}, t^*]; \\
	[\frac{c_{1}}{n} , \frac{c}{n}), \quad & \forall t\in (t^*, \theta_{1}].
	\end{cases}
	\end{equation*}
	Since $\Pr[x^* \geq t] = 1 - (1 - \Pr[x \geq t])^n$, 
	for sufficiently large $n$, we have:
	\begin{align*}
		\forall t\in [\theta_{2}, t^*],\quad & \Pr[x^* \geq t] \leq 1 - \left(1 - \frac{c_{2}}{n}\right)^n \leq (1+\epsilon)\cdot(1 - e^{-c_{2}}); \\
		\forall t\in (t^*, \theta_{1}], \ \quad  & \Pr[x^* \geq t] < 1 - \left(1 - \frac{c}{n}\right)^n \leq (1+\epsilon)\cdot(1 - e^{-c}).
	\end{align*}
 
	For convenience, we define $\alpha := \frac{t^* - \theta_{2}}{\theta_{1} - \theta_{2}} \in [0,1]$.
	We have
	\begin{align*}
	\delta & = \int_{\theta_{2}}^{\theta_{1}} \Pr[x^* \geq t] dt 
	= \int_{\theta_{2}}^{t^*} \Pr[x^* \geq t] dt + \int_{t^*}^{\theta_{1}} \Pr[x^* \geq t] dt \\
	& \leq (1+\epsilon)\cdot(t^* - \theta_{2})\cdot ( 1 - e^{-c_{2}} )+ (1+\epsilon)\cdot (\theta_{1} - t^*)\cdot ( 1 - e^{-c} ) \\
	& = (1+\epsilon)\cdot (\theta_{1} - \theta_{2})\cdot (1-e^{-c} + \alpha\cdot (e^{-c} - e^{-c_{2}})).
	\end{align*}
	Rearranging the inequality gives
	\begin{equation*}
	\alpha \geq \frac{1}{e^{-c} - e^{-c_{2}}}\cdot \left( \frac{\delta}{(1+\epsilon)(\theta_{1} - \theta_{2})} - (1-e^{-c}) \right).
	\end{equation*}

	On the other hand, with the definition of $\gamma = \frac{c - c_{1}}{e^{-c} - e^{-c_{2}}}$, we have 
	\begin{align*}
	n \cdot \int_{\theta_{2}}^{\theta_{1}} \Pr[x \geq t] dt 
	& = n \cdot\int_{\theta_{2}}^{t^*} \Pr[x \geq t] dt + n \cdot\int_{t^*}^{\theta_{1}} \Pr[x \geq t] dt \\
	& \geq (t^* - \theta_{2})\cdot c + (\theta_{1} - t^*)\cdot c_{1} \\
	& = (\theta_{1} - \theta_{2})\cdot (c_{1} + \alpha\cdot (c - c_{1})) \\
	& \geq (\theta_{1} - \theta_{2})\cdot \left\{c_{1} + \frac{c - c_{1}}{e^{-c} - e^{-c_{2}}}\cdot \left( \frac{\delta}{(1+\epsilon)(\theta_1 - \theta_{2})} - (1-e^{-c}) \right) \right\} \\
	& = c_{1} (\theta_{1} - \theta_{2}) + \frac{\gamma}{1+\epsilon} \cdot \delta - \gamma (1-e^{-c})\cdot (\theta_{1} - \theta_{2}).
	\end{align*}

	Recall that $n \cdot \int_{\theta_{2}}^{\theta_{1}} \Pr[x \geq t] dt  = \Delta(c_{2})-\Delta(c_{1})$. The above lower bound implies
	\begin{equation*}
	 \Delta(c_{2})-\Delta(c_{1}) \geq (c_{1} - \gamma(1-e^{-c}))\cdot (\theta_{1} - \theta_{2}) + \frac{\gamma}{1+\epsilon} \cdot \delta.
	\end{equation*}
	Rearranging the inequality completes the proof.
\end{proof}

\subsection{Proof of Lemma \ref{lem:2-threshold:upper}}

\begin{proof}
Given parameters $c_1$, $c_2$, and $\rho$, following the proof of Lemma~\ref{lem:ALG-lb-2-thresholds}, when $n\to \infty$, the expected gain of the algorithm is
\begin{equation}\label{eq:ub-exp}
    \ALG = (1-e^{-\rho c_1})\theta_1+\frac{1-e^{-\rho c_1}}{c_1}\Delta_1+e^{-\rho c_1}(1-e^{-(1-\rho) c_2})\theta_2+\frac{e^{-\rho c_1}(1-e^{-(1-\rho) c_2})}{c_2}\Delta_2.
\end{equation}

Note that this is also the objective function of the factor revealing LP. 
Let $\epsilon>0$ be an arbitrarily small constant and consider the following three distributions.
\begin{enumerate}
    \item Let $M$ be a sufficiently large number such that $c_1,c_2 \gg 1/M$.
    Consider the distribution $D_1$ as follows. 
    With probability $\frac{1}{n\cdot M}$,  $x$ is uniformly distributed over $[M-\epsilon, M+\epsilon]$; with probability $1-\frac{1}{n\cdot M}$,  $x$ is uniformly distributed over $[0,\epsilon]$. 
    The CDF for $D_1$ is strictly increasing, and when $\epsilon \rightarrow 0$, $M\rightarrow \infty$, we have
    \begin{equation*}
		\OPT \approx M\cdot \left( 1 - (1-\frac{1}{n\cdot M})^n \right) \approx M\cdot \frac{1}{M} = 1. 
		\end{equation*}
    It is not hard to see that both $\theta_1$ and $\theta_2$ are at most $\epsilon$.
    Moreover, $\Delta_1=n\cdot \bbE[(x-\theta_1)^+] \approx n\cdot \bbE[x] = 1$ and similarly $\Delta_2\approx 1$. By plugging these values in Equation (\ref{eq:ub-exp}), under $n \rightarrow \infty$,  
    
    \begin{equation}
        \ALG \approx \frac{1-e^{-\rho c_1}}{c_1}+\frac{e^{-\rho c_1}(1-e^{-(1-\rho) c_2})}{c_2} := UB_1. \label{eq:2-thres-ub1}
    \end{equation}

    \item Consider the distribution $D_2$ such that $x$ is uniformly distributed over $[1,1+\epsilon]$. Note that $\OPT \approx 1$ as $\epsilon \rightarrow 0$ and the CDF is strictly increasing. 
    For this distribution, we have $\theta_1\approx\theta_2\approx1$ while $\Delta_1 \approx \Delta_2 \approx 0$. Therefore, by Equation (\ref{eq:ub-exp}) and when $n \rightarrow \infty$,
    \begin{equation}\label{eq:2-thres-ub2}
        \ALG \approx (1-e^{-\rho c_1})+e^{-\rho c_1}(1-e^{-(1-\rho)c_2}) = 1-e^{-\rho c_1-(1-\rho)c_2} := UB_2.
    \end{equation}
    
    \item Let $c\in (c_1,c_2)$. Consider the distribution $D_3$ as follows. 
    With probability $\frac{c}{n}$,  $x_i$ is uniformly distributed over $[\frac{1}{1-e^{-c}}-\epsilon,\frac{1}{1-e^{-c}}+\epsilon]$; with probability $1-\frac{c}{n}$,  $x_i$ is uniformly distributed over $[0,\epsilon]$.
    Note that the CDF for $D_3$ is strictly increasing.
    Meanwhile, when $n \rightarrow \infty$,
    \begin{equation*}
    \OPT \approx \left(1-(1-\frac{c}{n})^n\right)\cdot \frac{1}{1-e^{-c}} \approx (1-e^{-c})\cdot\frac{1}{1-e^{-c}} = 1.
    \end{equation*}
    
    Since $\frac{c_1}{n}<\frac{c}{n}<\frac{c_2}{n}$, we have  $\theta_1\approx \frac{1}{1-e^{-c}}$ and  $\theta_2 \approx 0$.
    Moreover, $\Delta_1 \approx 0$ and $\Delta_2 \approx n\cdot \bbE[x] \approx \frac{c}{1-e^{-c}}$.
    Again, using Equation (\ref{eq:ub-exp}), when $n\rightarrow \infty$ we have
    \begin{equation*}
        \ALG \approx \frac{1-e^{-\rho c_1}}{1-e^{-c}}+\frac{e^{-\rho c_1}(1-e^{-(1-\rho) c_2})}{c_2}\cdot \frac{c}{1-e^{-c}} := UB_3(c).
    \end{equation*}
    We define our third upper bound as the minimum over all possible $UB_3(c)$'s, i.e., 
    \begin{equation}\label{eq:2-thres-ub3}
        UB_3 := \min_{c\in (c_1,c_2)} \{ UB_3(c) \}. 
    \end{equation}
    \end{enumerate}

    It remains to show that when $\epsilon \rightarrow 0$, $n \rightarrow \infty$, and $M \rightarrow \infty$, given any values of $c_1,c_2$ and $\rho \in [0,1]$, at least one of Equations~\eqref{eq:2-thres-ub1}, \eqref{eq:2-thres-ub2}, and \eqref{eq:2-thres-ub3} is at most $0.7081$. 
    That is, 
    \begin{equation}\label{eq:2-thres-ub}
        \max_{c_1\in [0,n],c_2\in[0,n],\rho\in[0,1]}\left\{\min\{UB_1,UB_2,UB_3\}\right\} \le 0.7081.
    \end{equation}
    
    We first observe that if $c_1 \ge 2$, $UB_1$ is at most $1/2$. 
    Similarly, for $c_2 \ge 30$, $UB_1$ cannot be more than $1-1/e$.
    Therefore, we only need to consider the case where $c_1 \in [0,2]$ and $c_2 \in [0,30]$. 
    We can prove \eqref{eq:2-thres-ub} using computational tools by discretizing the domain for $c_1$, $c_2$, and $\rho$.
    Specifically, we replace the domains for $c_1$,   $c_2$ and $\rho$ with $I_{c_1} = \{\frac{2i}{10^5}: 1\le i \le 10^5\}$, $I_{c_2} = \{\frac{3i}{10^6}: 1\le i \le 10^6\}$ and $I_{\rho} = \{\frac{i}{10^5}: 1\le i \le 10^5\}$, respectively. 
    By enumerating the possible values of $c_1$, $c_2$ and $\rho$ within $I_{c_1}$, $I_{c_2}$ and $I_{\rho}$, we obtain an upper bound of $0.70804$, which is achieved by $c_1=0.51904$, $c_2=2.32059$ and $\rho=0.60473$. 
    It can be verified that discretizing the domain incurs an additive error of at most $2\times 10^5$. 
    Therefore, the upper bound under the continuous domain is at most $0.70806 < 0.7081$. 
\end{proof}

\end{document}